%
%
%
%
%
%
%

\documentclass[preprint,numbers]{sigplanconf}



\usepackage{amsmath}

\usepackage{amsthm}
\usepackage{amssymb}
\usepackage{mathtools}
\usepackage{verbatim}
\usepackage{color}
\usepackage{boxedminipage}
\usepackage{enumitem}
\usepackage[caption=false]{subfig}
\usepackage{url}
\usepackage{hyperref}

\begin{document}

\setlength{\pdfpageheight}{\paperheight}
\setlength{\pdfpagewidth}{\paperwidth}

\conferenceinfo{CONF 'yy}{Month d--d, 20yy, City, ST, Country}
\copyrightyear{20yy}
\copyrightdata{978-1-nnnn-nnnn-n/yy/mm}
\copyrightdoi{nnnnnnn.nnnnnnn}



\newcounter{desccount}
\newcommand{\descitem}[1]{%
	\item[#1] \refstepcounter{desccount}\label{#1}
}
\newcommand{\descref}[1]{\hyperref[#1]{#1}}

\newcommand{\ie}{i.e.}
\newcommand{\etc}{etc.}
\newcommand{\eg}{e.g.}
\newcommand{\etal}{et al.}
\newcommand{\wrt}{w.r.t.}
\mathchardef\mhyphen="2D
\newcommand{\Hyphen}{\mhyphen}

\newtheorem{lemma}{Lemma}
\newtheorem{theorem}{Theorem}
\newtheorem{definition}{Definition}
\newtheorem{observation}{Observation}

\newcommand{\mycomment}[1]{\emph{\textcolor{red}{[#1]}}}

\newcommand{\codecomment}[1]{\textcolor{blue}{#1}}

\newcommand{\NmInst}{\mathsf{Nm}}
\newcommand{\LdInst}{\mathsf{Ld}}
\newcommand{\StInst}{\mathsf{St}}
\newcommand{\AluInst}{\mathsf{AluBr}}
\newcommand{\ComInst}{\mathsf{Commit}}
\newcommand{\RecInst}{\mathsf{Reconcile}}
\newcommand{\RMWInst}{\mathsf{RMW}}
\newcommand{\SWPInst}{\mathsf{XCHG}}
\newcommand{\LLInst}{\mathsf{LdL}}
\newcommand{\SCInst}{\mathsf{StC}}

\newcommand{\syncInst}{\mathsf{sync}}
\newcommand{\lwsyncInst}{\mathsf{lwsync}}
\newcommand{\isyncInst}{\mathsf{isync}}
\newcommand{\cmpInst}{\mathsf{cmp}}
\newcommand{\bcInst}{\mathsf{bc}}

\newcommand{\dmbInst}{\mathsf{dmb}}

\newcommand{\MBInst}{\mathsf{MEMBAR}}

\newcommand{\wmmDep}{WMM-D}
\newcommand{\oooNVP}{OOO-D}
\newcommand{\wmmSSB}{WMM-S}

\newcommand{\noDepProg}{simple}
\newcommand{\oooBase}{OOO-B}
\newcommand{\oooSimple}{OOO-BS}
\newcommand{\oooCoOpt}{OOO-C}
\newcommand{\oooSpec}{OOO-S}
\newcommand{\oooSSB}{OOO-H} 
\newcommand{\oooSSBMid}{OOO-H'} 
\newcommand{\oooSSBSimple}{OOO-HS}
\newcommand{\oooWmmSSB}{OOO-SSB} 

\newcommand{\IIE}{\ensuremath{\mathrm{I^2E}}}

\newcommand{\assignVal}{\mathsf{\ \coloneqq\ }}
\newcommand{\True}{\mathsf{True}}
\newcommand{\False}{\mathsf{False}}
\newcommand{\ifFunc}{\mathbf{if}}
\newcommand{\thenFunc}{\mathbf{then}}
\newcommand{\elseFunc}{\mathbf{else}}
\newcommand{\whenFunc}{\mathsf{when}}
\newcommand{\decodeFunc}{\mathsf{decode}}
\newcommand{\executeFunc}{\mathsf{execute}}
\newcommand{\emptyFunc}{\mathsf{empty}}
\newcommand{\existFunc}{\mathsf{exist}}
\newcommand{\getYoungestFunc}{\mathsf{youngest}}
\newcommand{\enqFunc}{\mathsf{enq}}
\newcommand{\deqFunc}{\mathsf{deq}}
\newcommand{\getAnyAddr}{\mathsf{anyAddr}}
\newcommand{\insertFunc}{\mathsf{insert}}
\newcommand{\removeOldestFunc}{\mathsf{rmOldest}}
\newcommand{\removeOlderFunc}{\mathsf{rmOlder}}
\newcommand{\removeAddrFunc}{\mathsf{rmAddr}}
\newcommand{\getRandAndRemoveFunc}{\mathsf{getAny}}
\newcommand{\clearFunc}{\mathsf{clear}}
\newcommand{\hasStFunc}{\mathsf{hasSt}}
\newcommand{\getRandFunc}{\mathsf{any}}
\newcommand{\getOldestFunc}{\mathsf{oldest}}
\newcommand{\hasTagFunc}{\mathsf{hasTag}}
\newcommand{\maxFunc}{\mathsf{max}}
\newcommand{\letFunc}{\mathbf{let}}
\newcommand{\inFunc}{\mathbf{in}}
\newcommand{\decodeTSFunc}{\mathsf{decodeTS}}
\newcommand{\executeTSFunc}{\mathsf{executeTS}}
\newcommand{\existTSFunc}{\mathsf{existTS}}
\newcommand{\getRandAndRemoveTSFunc}{\mathsf{getRandomTS}}
\newcommand{\noCycleFunc}{\mathsf{noCycle}}

\newcommand{\coOrd}{\xrightarrow{co}}

\newcommand{\scNmRule}{SC-Nm}
\newcommand{\scLdRule}{SC-Ld}
\newcommand{\scStRule}{SC-St}

\newcommand{\tsoNmRule}{TSO-Nm}
\newcommand{\tsoLdRule}{TSO-Ld}
\newcommand{\tsoStRule}{TSO-St}
\newcommand{\tsoComRule}{TSO-Com}
\newcommand{\tsoDeqSbRule}{TSO-DeqSb}

\newcommand{\psoDeqSbRule}{PSO-DeqSb}

\newcommand{\wmmNmRule}{WMM-Nm}
\newcommand{\wmmLdSbRule}{WMM-LdSb}
\newcommand{\wmmLdMemRule}{WMM-LdMem}
\newcommand{\wmmLdIbRule}{WMM-LdIb}
\newcommand{\wmmStRule}{WMM-St}
\newcommand{\wmmRecRule}{WMM-Rec}
\newcommand{\wmmComRule}{WMM-Com}
\newcommand{\wmmDeqSbRule}{WMM-DeqSb}
\newcommand{\wmmRMWRule}{WMM-RMW}

\newcommand{\wmmDepNmRule}{\wmmDep-Nm}
\newcommand{\wmmDepStRule}{\wmmDep-St}
\newcommand{\wmmDepLdSbRule}{\wmmDep-LdSb}
\newcommand{\wmmDepLdIbRule}{\wmmDep-LdIb}
\newcommand{\wmmDepLdMemRule}{\wmmDep-LdMem}
\newcommand{\wmmDepRecRule}{\wmmDep-Rec}
\newcommand{\wmmDepComRule}{\wmmDep-Com}
\newcommand{\wmmDepDeqSbRule}{\wmmDep-DeqSb}

\newcommand{\wmmSSBPropSt}{\wmmSSB-Copy}
\newcommand{\wmmSSBLdRemoteRule}{\wmmSSB-Ld-Remote}
\newcommand{\wmmSSBStRule}{\wmmSSB-St}
\newcommand{\wmmSSBDeqSbRule}{\wmmSSB-DeqSb}

\newcommand{\rcLdBeforeStCons}{RC-LdVal-1}
\newcommand{\rcStBeforeLdCons}{RC-LdVal-2}
\newcommand{\rcAcqOrdCons}{RC-Acquire-Order}
\newcommand{\rcRelOrdCons}{RC-Release-Order}
\newcommand{\rcSpecialOrdCons}{RC-Special-Order}
\newcommand{\rcDepCons}{RC-Dependency}
\newcommand{\rcCoCons}{RC-Coherence}
\newcommand{\rcNoDLCons}{RC-No-Deadlock}
\newcommand{\rcFixLdValCons}{RC-Fix-WC-Ld-Value}
\newcommand{\rcFixDepCons}{RC-Fix-Dependency}
\newcommand{\rcFixStore}{RC-Fix-Store}
\newcommand{\rcFixForward}{RC-Fix-Forward}

\newcommand{\rmoFixLd}{RMO-Fix-Ld}

\newcommand{\coEdge}{\xrightarrow{co}}
\newcommand{\rfEdge}{\xrightarrow{r\!f}}
\newcommand{\poEdge}{\xrightarrow{po}}

\newcommand{\robModelName}{OOO\textsuperscript{VP}}
\newcommand{\robDepModelName}{OOO\textsuperscript{D}}
\newcommand{\robSSBModelName}{OOO\textsuperscript{S}}
\newcommand{\ccmModelName}{CCM}
\newcommand{\flowModelName}{HMB} 
\newcommand{\robccm}{\ccmModelName+\robModelName}
\newcommand{\robDepccm}{\ccmModelName+\robDepModelName}
\newcommand{\robSSBflow}{\flowModelName+\robSSBModelName}

\newcommand{\fm}{hmb}

\newcommand{\rf}{r\!f}
\newcommand{\lf}{l\!f}

\newcommand{\Idle}{\mathsf{Idle}}
\newcommand{\Exe}{\mathsf{Exe}}
\newcommand{\ReEx}{\mathsf{ReEx}}
\newcommand{\Done}{\mathsf{Done}}
\newcommand{\getPCFunc}{\mathsf{getPC}}
\newcommand{\getReadyFunc}{\mathsf{getReady}}
\newcommand{\computNmFunc}{\mathsf{computeNm}}
\newcommand{\computeAddrFunc}{\mathsf{computeAddr}}
\newcommand{\computeStDataFunc}{\mathsf{computeStData}}
\newcommand{\updateFunc}{\mathsf{update}}
\newcommand{\getEntryFunc}{\mathsf{getEntry}}
\newcommand{\getLdFunc}{\mathsf{getLd}}
\newcommand{\flushFunc}{\mathsf{flush}}
\newcommand{\findRecStFunc}{\mathsf{findBypass}}
\newcommand{\findLdKilledByStFunc}{\mathsf{findAffectedLd}}
\newcommand{\findLdKilledByLdFunc}{\mathsf{findStaleLd}}
\newcommand{\predictFunc}{\mathsf{predict}}
\newcommand{\fetchFunc}{\mathsf{fetch}}
\newcommand{\getCommitFunc}{\mathsf{getCommit}}
\newcommand{\setCommitFunc}{\mathsf{setCommit}}
\newcommand{\isNmLdRecFunc}{\mathsf{isNmLdRec}}
\newcommand{\initExFunc}{\mathsf{initEx}}
\newcommand{\getExFunc}{\mathsf{getEx}}
\newcommand{\setReExFunc}{\mathsf{setReEx}}
\newcommand{\issueFunc}{\mathsf{issue}}

\newcommand{\reqLdFunc}{\mathsf{reqLd}}
\newcommand{\reqStFunc}{\mathsf{reqSt}}
\newcommand{\reqComFunc}{\mathsf{reqCom}}
\newcommand{\respLdFunc}{\mathsf{respLd}}
\newcommand{\respStFunc}{\mathsf{respSt}}
\newcommand{\respComFunc}{\mathsf{respCom}}
\newcommand{\removeFunc}{\mathsf{remove}}

\newcommand{\robFetchRule}{OOO-Fetch}
\newcommand{\robNmExRule}{OOO-NmEx}
\newcommand{\robLdAddrRule}{OOO-LdAddr}
\newcommand{\robLdPredRule}{OOO-LdPred}
\newcommand{\robLdReqRule}{OOO-LdReq}
\newcommand{\robLdBypassRule}{OOO-LdBypass}
\newcommand{\robStExRule}{OOO-StEx}
\newcommand{\robNmLdRecRetRule}{OOO-NmLdRecCom}
\newcommand{\robStRetRule}{OOO-StCom}
\newcommand{\robComRetRule}{OOO-ComCom}
\newcommand{\robDeqSbRule}{OOO-StReq}

\newcommand{\ccmLdRule}{\ccmModelName-Ld}
\newcommand{\ccmStRule}{\ccmModelName-St}

\newcommand{\flowReorderRule}{\flowModelName-Reorder}
\newcommand{\flowFlowRule}{\flowModelName-Flow}
\newcommand{\flowBypassRule}{\flowModelName-Bypass}

\newcommand{\robSSBLdReqRule}{\robSSBModelName-LdReq}
\newcommand{\robSSBStRetRule}{\robSSBModelName-StCom}
\newcommand{\robSSBComRetRule}{\robSSBModelName-ComReq}

\newcommand{\reduceRuleSpace}{\vspace{-3pt}}
\newcommand{\reduceRuleEndSpace}{\vspace{-4pt}}

\setlist{noitemsep, leftmargin=*,topsep=2pt}

\newcommand{\oooInstFetchRule}{Instruction-Fetch}
\newcommand{\oooLoadExRule}{Load-Exectution}
\newcommand{\oooLoadRemote}{Load-Remote-Read}
\newcommand{\oooInstCommitRule}{Instruction-Commit}
\newcommand{\oooStoreIssueRule}{Store-Issue}
\newcommand{\oooMemRespRule}{Memory-Response}
\newcommand{\ccmAcceptReqRule}{Accept-Request}
\newcommand{\ccmExReqRule}{Execute-Request}


\title{An Operational Framework for Specifying Memory Models using Instantaneous Instruction Execution}
%
\authorinfo{Sizhuo Zhang \and Muralidaran Vijayaraghavan \and Arvind}{MIT CSAIL}{\{szzhang, vmurali, arvind\}@csail.mit.edu}

\maketitle 


\begin{abstract}
There has been great progress recently in formally specifying the memory model of microprocessors like ARM and POWER. 
These specifications are, however, too complicated for reasoning about program behaviors, verifying compilers \etc{}, because they involve microarchitectural details like the reorder buffer (ROB), partial and speculative execution, instruction replay on speculation failure, \etc{}
In this paper we present a new \emph{Instantaneous Instruction Execution} (\IIE) framework which allows us to specify weak memory models in the same style as SC and TSO.
Each instruction in \IIE{} is executed instantaneously and in-order such that the state of the processor is always correct. 
The effect of instruction reordering is captured by the way data is moved between the processors and the memory non-deterministically, using three conceptual devices: \emph{invalidation buffers}, \emph{timestamps} and \emph{dynamic store buffers}.
We prove that \IIE{} models capture the behaviors of modern microarchitectures and cache-coherent memory systems accurately, thus eliminating the need to think about microarchitectural details.

\end{abstract}




\category{C.0}{General}{Modeling of computer architecture} 


\keywords
Weak memory models, Operational semantics

\section{Introduction}

Computer architects make microarchitectural optimizations in processors which ensure that single-threaded programs can be run unmodified, but often create new and unexpected behaviors for multi-threaded programs. 
The effect of these optimizations manifests itself through load and store instructions because these are the only instructions through which threads can communicate with each other.
Memory models abstract hardware in a way that is useful for programmers to understand the behaviors of their programs.

There are ongoing efforts to specify memory models for multithreaded programming in C, C++ \cite{c++n4527} and other languages. 
These efforts are influenced by the type of memory models that can be supported efficiently on existing architectures like x86, POWER and ARM. 
While the memory model for x86 \cite{owens2009better,sewell2010x86,Sarkar:2009:SXM:1594834.1480929} is captured succinctly by the 
Total Store Order (TSO) model, the models for POWER \cite{sarkar2011understanding} and ARM \cite{flur2016modelling} are considerably more complex. 
The formal specifications of the POWER and ARM models have required exposing microarchitectural details like speculative execution, instruction reordering and the state of partially executed instructions, which, in the past, have always been hidden from the user. 
In addition, details of the memory system like write-though vs write-back caches, shared vs not-shared memory buffers, \etc{} were also needed for a precise specification of these two models.


Even though empirical evidence is weak, many architects believe that weak memory models, such as the ones for ARM, POWER, Alpha and RMO, offer some performance advantage or simpler implementation over TSO and other stronger memory models. 
We think that architects are unlikely to give up on weak memory models because of the flexibility they provide for high performance implementations.
It is, therefore, important to develop a framework for defining weak memory models, which, like SC and TSO operational models, does not involve microarchitecture and memory system details.   
This paper offers such a framework based on \emph{Instantaneous Instruction Execution} (\IIE). 

In the \IIE{} framework, instructions are executed in order and atomically, and consequently, the processor always has the up-to-date state.
The model descriptions use a multi-ported monolithic memory which executes loads and stores instantaneously. 
The data movement between the processors and the memory takes place asynchronously in the background.  
For specifying weak memory models, we combine \IIE{} with three new conceptual devices: \emph{invalidation buffers} to capture instruction reordering, \emph{timestamps} to enforce data dependencies, and \emph{dynamic store buffers} to model shared store buffers and write-through caches in a topology independent way. 
We present several different weak memory models -- WMM and \wmmDep{} which are similar to the Alpha and RMO models; and \wmmSSB{} which is similar to the ARM and POWER models. 

To convince the reader that we have not ruled out any fundamental and important microarchitectural optimizations, we give an abstract description of a speculative microarchitecture (\robModelName) with a coherent pipelined memory system (\ccmModelName).
The structure of \robModelName{} 
is common to all high-performance processor implementations, regardless of the memory model they support; 
implementations of stronger memory models based on \robModelName{} use extra hardware checks to prevent or kill some specific memory behaviors.  
We prove that our weakest memory model, WMM, allows all sorts of microarchitecture optimizations, that is, \robccm{} $\subseteq$ WMM. 

One optimization that has been discussed in literature but has not been implemented in any commercial microprocessor yet is \emph{load-value speculation} \cite{lipasti1996value,ghandour2010potential,perais2014eole,perais2014practical}.
It allows us to predict a value for a load; the load is killed later if the predicted value does not match the load result from the memory system.
Even if load-value speculation is included, our result \robccm{} $\subseteq$ WMM holds.
Surprisingly, if value speculation is permitted in the implementation then we can also prove that WMM $\subseteq$ \robccm, that is, the WMM and \robccm{} become equivalent.   
We show via a common programming example that an extra fence needs to be inserted in WMM to enforce data-dependencies. 
This is an unnecessary cost if we know for sure that our implementation would not use value speculation.
\wmmDep{} is a slight modification of WMM to enforce ordering of data-dependent loads using timestamps. 
We also prove that \robDepModelName{} (the \robModelName{} implementation without load-value speculation) is equivalent to \wmmDep.

ARM and POWER microarchitectures use shared store-buffers and write-though caches, and unfortunately, such memory systems introduce behaviors not seen in other weak memory models. 
The ISA for these machines include ``weaker'' and ``stronger'' fences with slightly different functionality because weaker fences have smaller performance penalty than the stronger ones. 
This requires memory fence instructions to enter store buffers, muddying the clean separation between the cache-coherent memory systems and processors.   
We introduce \flowModelName, an abstract model for hierarchy of shared store buffers or write-through caches, which is adapted from the storage subsystem in the \emph{Flowing Model} of \cite{flur2016modelling}.
We define \wmmSSB, an extension of WMM, specifically to deal with such \emph{multi-copy non-atomic store} systems and show that
\robSSBflow{} $\subseteq$ \wmmSSB{}, in which \robSSBModelName{} is the processor implementation adapted from \robModelName{} to be compatible with \flowModelName.

In summary, this paper makes the following contributions:
\begin{enumerate}
\item \IIE, a new framework for describing memory models with three new conceptual devices: invalidation buffers, timestamps, and dynamic store buffers;
\item WMM and \wmmDep{} memory models which are like the RMO and Alpha models;
\item \wmmSSB{} model to embody ARM and POWER like multi-copy non-atomic stores;
\item \robModelName, an abstract description of the microarchitecture underlying all modern high performance microprocessors;
\item A proof that \robccm{} = WMM;
\item A proof that \robDepccm{} = \wmmDep; and
\item A proof that \robSSBflow{} $\subseteq$ \wmmSSB.
\end{enumerate}

\noindent\textbf{Paper organization:}
Section \ref{sec: related work} presents the related work.
Section \ref{sec: implementation} defines \robccm{}, an implementation scheme of multiprocessors.
We introduce the \IIE{} framework in Section \ref{sec: I2E}.
We use \IIE{} and invalidation buffers to define WMM in Section \ref{sec: WMM}.
Section \ref{sec: data dep} defines \wmmDep{} using timestamps to capture data dependency.
Section \ref{sec: non atomic mem} defines \wmmSSB{} using dynamic store buffers to model multi-copy non-atomic memory systems.
Section \ref{sec: conclude} offers the conclusion.

\section{Related Work} \label{sec: related work}

SC \cite{lamport1979make} is the most intuitive memory model, but naive implementations of SC suffer from poor performance.
Gharachorloo \etal{} proposed load speculation and store prefetch to enhance the performance of SC \cite{gharachorloo1991two}.
Over the years, researchers have proposed more aggressive techniques to preserve SC \cite{ranganathan1997using,guiady1999sc+,gniady2002speculative,ceze2007bulksc,wenisch2007mechanisms,blundell2009invisifence,singh2012end,lin2012efficient,gope2014atomic}. 
Perhaps because of their hardware complexity, the adoption of these techniques in commercial microprocessor has been limited. 
Instead the manufactures and researchers have chosen to present weaker memory model interfaces, \eg{} TSO \cite{sparc1992sparcv8}, PSO \cite{weaver1994sparc}, RMO \cite{weaver1994sparc}, x86 \cite{owens2009better,sewell2010x86,Sarkar:2009:SXM:1594834.1480929}, Processor Consistency \cite{goodman1991cache}, Weak Consistency \cite{dubois1986memory}, RC \cite{gharachorloo1990memory}, CRF \cite{shen1999commit}, POWER \cite{power2013version} and ARM \cite{armv7ar}.
The tutorials by Adve \etal{} \cite{adve1996shared} and by Maranget \etal{} \cite{maranget2012tutorial} provide relationships among some of these models.

The lack of clarity in the definitions of POWER and ARM memory models in their respective company documents has led some researchers to empirically determine allowed/disallowed behaviors \cite{sarkar2011understanding,mador2012axiomatic,alglave2014herding,flur2016modelling}.
Based on such observations, in the last several years, both \emph{axiomatic} models and \emph{operational} models have been developed which are compatible with each other \cite{alglave2009semantics,Alglave2011,alglave2012formal,mador2012axiomatic,alglave2014herding,sarkar2011understanding,sarkar2012synchronising,alglave2013software,flur2016modelling}.
However, these models are quite complicated; for example, the POWER axiomatic model has 10 relations, 4 types of events per instruction, and 13 complex axioms \cite{mador2012axiomatic}, some of which have been added over time to explain specific behaviors \cite{alglave2009semantics,Alglave2011,alglave2012fences,mador2012axiomatic}. 
The abstract machines used to describe POWER and ARM operationally are also quite complicated, because they require the user to think in terms of partially executed instructions~\cite{sarkar2011understanding,sarkar2012synchronising}.
In particular, the processor sub-model incorporates ROB operations, speculations, instruction replay on speculation failures, \etc{}, explicitly, which are needed to explain the enforcement of specific dependency (\ie{} data dependency).
We present an \IIE{} model \wmmDep{} in Section \ref{sec: data dep} that captures data dependency and sidesteps all these complications.
Another source of complexity is the multi-copy non-atomicity of stores, which we discuss in Section \ref{sec: non atomic mem} with our solution \wmmSSB.

Adve \etal{} defined Data-Race-Free-0 (DRF0), a class of programs where shared variables are protected by locks, and proposed that DRF0 programs should behave as SC \cite{adve1990weak}.
Marino \etal{} improves DRF0 to the DRFx model, which throws an exception when a data race is detected at runtime \cite{Marino:2010:DSE:1806596.1806636}.
However, we believe that architectural memory models must define clear behaviors for all programs, and even throwing exceptions is not satisfactory enough.

A large amount of research has also been devoted to specifying the memory models of high-level languages, \eg{} C/C++ \cite{c++n4527,boehm2008foundations,batty2011mathematizing,batty2012clarifying,Pichon-Pharabod:2016:CSR:2837614.2837616,Batty:2016:OSA:2914770.2837637,Lahav:2016:TRC:2914770.2837643,Krebbers:2014:OAS:2535838.2535878,Batty:2013:LAC:2480359.2429099,Kang:2015:FCM:2737924.2738005} and Java \cite{manson2005java,cenciarelli2007java, maessen2000improving,Bogdanas:2015:KCS:2775051.2676982,Demange:2013:PBB:2480359.2429110}.
There are also proposals not tied to any specific language \cite{Boudol:2009:RMM:1480881.1480930,Crary:2015:CRM:2676726.2676984}.
This remains an active area of research because a widely accepted memory model for high-level parallel programming is yet to emerge, while this paper focuses on the memory models of underlying hardware.

Arvind and Maessen 
have specified precise conditions for preserving store atomicity even when instructions can be reordered  \cite{arvind2006memory}. 
In contrast, the models presented in this paper do not insist on store atomicity at the program level.

There are also studies on verifying programs running under weak memory models \cite{Torlak:2010:MCA:1809028.1806635,Kuperstein:2011:PAR:1993498.1993521,Atig:2010:VPW:1706299.1706303}.
Simple memory model definitions like \IIE{} models will definitely facilitate this research area.

\section{Implementation of Modern Multiprocessors} \label{sec: implementation}

Modern multiprocessor systems consist of out-of-order processors and highly pipelined coherent cache hierarchies.
In addition to pipelining and out-of-order execution, the processor may perform \emph{branch prediction}, \ie{} predict the PC of the next instruction during instruction fetch in order to fetch and execute the next instruction, \emph{memory dependency speculation}, \ie{} issue a load to memory even when there is an older store with unresolved address, and even \emph{load-value speculation}, \ie{} predict the result of a load before the load is executed.
The memory systems also employ pipelining and out-of-order execution for performance.
For example, the memory system may not process requests in the FIFO manner (consider a \emph{cache miss} followed by a \emph{cache hit}).
These optimizations are never visible to a single-threaded program but can be exposed by multithreaded programs.
In this section, we present ``physical'' models that describe the operations (\eg{} the ones mentioned above) inside high-performance processors and cache hierarchies.
These physical models are similar to those in \cite{sarkar2011understanding,flur2016modelling}, but here they only serve as a reference to capture the behaviors of modern multiprocessors precisely; we use them to verify the \IIE{} memory models proposed later.
It should be noted that the physical models are presented in an abstract manner, \eg{}, the inner structure of the branch predictor is abstracted by a function which may return any value.
The model also abstracts away resource management issues, such as register renaming, finite-size buffers and associated tags by assuming unbounded resources.

We associate a globally unique tag with each store instruction so that we can identify the store that each load reads from.
The tag is also saved in memory when the store writes the memory.
Such tags do not exist in real implementations but are needed in our model for reasons that will become clear in Section \ref{sec: rob rule}.

While the processor remains similar for all implementations, it is difficult to offer a common model for two dominant cache hierarchies.
Machines, such as Intel x86, have used write-back cache-coherent memory systems.
In contrast, ARM and POWER machines employ shared store-buffers and write-through caches in their memory systems.
We will first discuss \ccmModelName{}, model of a write-back cache-coherent memory system, and postpone the discussion of \flowModelName{}, the write-through cache system, until Section \ref{sec: non atomic mem}.

\subsection{\ccmModelName{}: the Semantics of Write-Back Cache Hierarchies} \label{sec: ccm spec}

Figure \ref{fig: pow+ccm} shows how out-of-order processors (\robModelName{}) and a write-back cache hierarchy (\ccmModelName{}) are connected together. 
A processor $i$ can send load and store requests to \ccmModelName{} by calling the following {methods} of port $i$ of \ccmModelName:
\begin{itemize}
	\item $\reqLdFunc(t^L, a)$: a load request to address $a$ with tag $t^L$. 
	\item $\reqStFunc(a, v, t^S)$: a store request that writes data $v$ to address $a$.
	$t^S$ is the globally unique tag for the store instruction.
\end{itemize}
Note that the processor also attaches a tag $t^L$ to each load request in order to associate the future load response with the requesting load instruction.
The memory sends responses back to a processor by calling the following methods of the processor:
\begin{itemize}
	\item $\respLdFunc(t^L, res, t^S)$: $res$ is the result for the load with tag $t^L$, and $t^S$ is the tag for the store that supplied $res$ to the memory.
	\item $\respStFunc(a)$: $a$ is the store address.
\end{itemize}
Store response is needed to inform the processor that a store has been completed. 
No ordering between the processing of requests inside the memory should be assumed by the processor.

\begin{figure}[!htb]
	\centering
	\includegraphics[width=0.8\columnwidth]{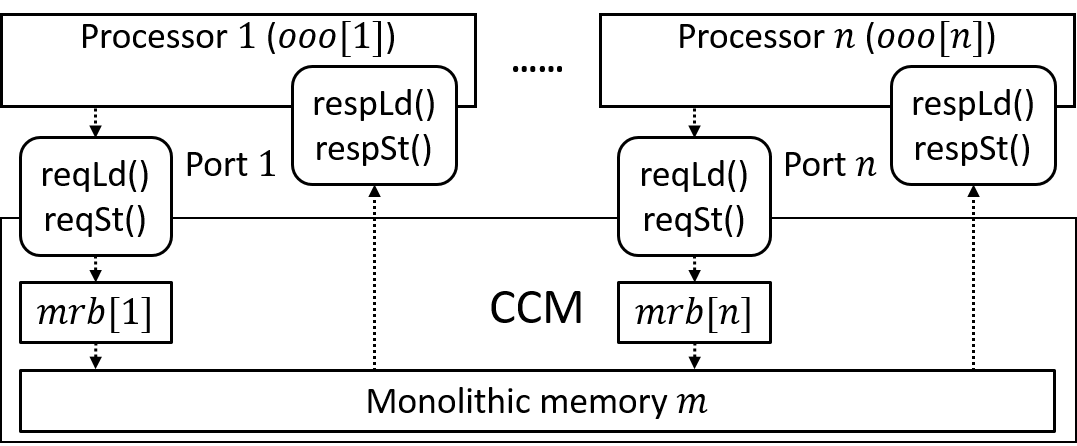}
	\nocaptionrule\caption{Multiprocessor system} \label{fig: pow+ccm}
\end{figure}

\ccmModelName{} consists of $n$ \emph{memory request buffers} $mrb[1\ldots n]$, one for each processor, and a monolithic memory $m$.
A monolithic memory location $m[a]$ contains $\langle v, t^S\rangle$, in which $v$ is the value written by a store with tag $t^S$.
The $\reqLdFunc$ and $\reqStFunc$ methods simply insert the incoming requests from processor $i$ into $mrb[i]$.
\ccmModelName{} processes requests by picking any request from any $mrb$.
If the request is a load $\langle \LdInst, t^L, a \rangle$ from $mrb[i]$, then \ccmModelName{} calls method $\respLdFunc(t^L, v, t^S)$ of processor $i$, where $\langle v, t^S\rangle = m[a]$.
If the request is a store $\langle \StInst, a, v, t^S\rangle$, then we update $m[a]$ to be $\langle v, t^S\rangle$, and call method $\respStFunc(a)$ of processor $i$.
The behavior of \ccmModelName{} is shown in Figure \ref{fig: ccm rule}. 

\begin{figure}[!htb]
	\centering
	\begin{boxedminipage}{\columnwidth}
		\small
		\textbf{\ccmLdRule{} rule} (load request processing). \reduceRuleSpace
		\begin{displaymath}
		\frac{
			\langle \LdInst, t^L, a \rangle = mrb[i].\getRandFunc();\ \langle v, t^S \rangle = m[a];
		}{
			mrb[i].\removeFunc(\langle \LdInst, t^L, a \rangle);\ ooo[i].\respLdFunc(t^L, v, t^S);
		}
		\end{displaymath} \reduceRuleEndSpace
		
		\textbf{\ccmStRule{} rule} (store request processing). \reduceRuleSpace
		\begin{displaymath}
		\frac{
			\langle \StInst, a, v, t^S \rangle = mrb[i].\getRandFunc();
		}{
			mrb[i].\removeFunc(\langle \StInst, a, v, t^S \rangle);\ m[a] \assignVal \langle v, t^S\rangle;\ ooo[i].\respStFunc(a);
		}
		\end{displaymath}
	\end{boxedminipage}
	\nocaptionrule\caption{\ccmModelName{} operational semantics} \label{fig: ccm rule}
\end{figure}

We describe the behavior of a system as a set of state-transition \emph{rules}, written as
\[
\frac{\mathit{predicates\ on\ the\ current\ state}}{\mathit{the\ action\ on\ the\ current\ state}}
\]
The predicates are expressed either by pattern matching or using a $\whenFunc(\mathit{expression})$ clause.
$mrb[i].\getRandFunc()$ returns any entry in $mrb[i]$, and $mrb[i].\removeFunc(en)$ removes entry $en$ from $mrb[i]$.

To understand how such a simple structure can abstract the write-back cache-coherent hierarchy, we refer to the cache coherence proof by Vijayaraghavan \etal{} \cite{Vijayaraghavan2015}.
It shows that a request can complete only by reading or writing an L1 cache line when it has sufficient permissions, and that under such circumstances a write-back cache-coherent hierarchy is exactly equivalent to the monolithic memory abstraction.
However, the order of responses may be different from the order of requests due to the out-of-order processing inside the hierarchy.
Such reordering is captured by $mrb$.

\subsection{\robModelName{}: the Model of Out-of-Order Processors} \label{sec: pow}

We will first give an informal description of the behavior of a speculative out-of-order processor \robModelName, shown in Figure \ref{fig: pow model}; the actual rules are presented later.

The processor fetches an instruction from the address given by the PC register, and updates PC based on the prediction by a branch predictor.
The fetched instruction is decoded and enqueued into the reorder buffer (ROB).
ROB contains all the in-flight instructions in the fetched order but executes them out of order.
An instruction can be executed when all of its source operands have been computed, and the result of its execution is stored in its ROB entry.
The computed source operands come from either an older ROB entry or the register file.
The ROB then commits the instructions in the fetched order. In-order commitment is required to implement \emph{precise interrupts} and \emph{exceptions}.
After an instruction is committed, it is removed from the ROB and the register file is updated with the result of the instruction's execution.

When a branch instruction is executed, if the branch target is not equal to the address of the next instruction that was fetched, then all instructions in ROB after the branch are ``flushed'' (\ie{} discarded) and the PC is set to the correct branch target, allowing the correct set of instructions to be fetched.

A store instruction is executed by computing the store address and data, and is enqueued into the store buffer at commit.
In the background, the store buffer can send the oldest store for an address into the memory, and delete that store when the response comes back from the memory.

In contrast, the execution of a load instruction splits into two phases.
The first phase is to compute the load address.
In the second phase, a load will search older entries in the ROB and the store buffer for the latest store to the same address.
If such a store is found, that store's value (and tag) is read -- this is called ``data forwarding'' or ``data bypassing''.
Otherwise, a load request is sent to the memory with a unique tag (we use the index of the ROB entry).
Eventually, the memory system can send a response back to the processor with a load result; the ROB entry for the load (identified with the tag) is updated with the result.

A load can be issued to memory at any time as long as its address is available, even when there are older unresolved branches or stores with uncomputed addresses.
If an older store is executed later and writes to the same address, then any such loads that were executed earlier have violated \emph{memory dependency} and should be flushed.
The details will be discussed later.
Note that loads which have been issued to the memory can be flushed from ROB for various reasons, and the responses for the flushed loads are discarded.

The processor may also employ a load-value predictor, which predicts the result of any load that does not have a value.
The predicted result can be used in the execution of other instructions.
When the load gets its value from data forwarding or memory and the value is not equal to the predicted one,  all instructions younger than the load are flushed.

There are two fences: $\ComInst$ and $\RecInst$.
The $\ComInst$ fence stalls at the commit slot until the store buffer is empty.
The $\RecInst$ fence prevents a younger load from being issued to memory, and also stops the data forwarding across the fence.

\begin{figure}[!htb]
	\centering
	\includegraphics[width=0.37\textwidth]{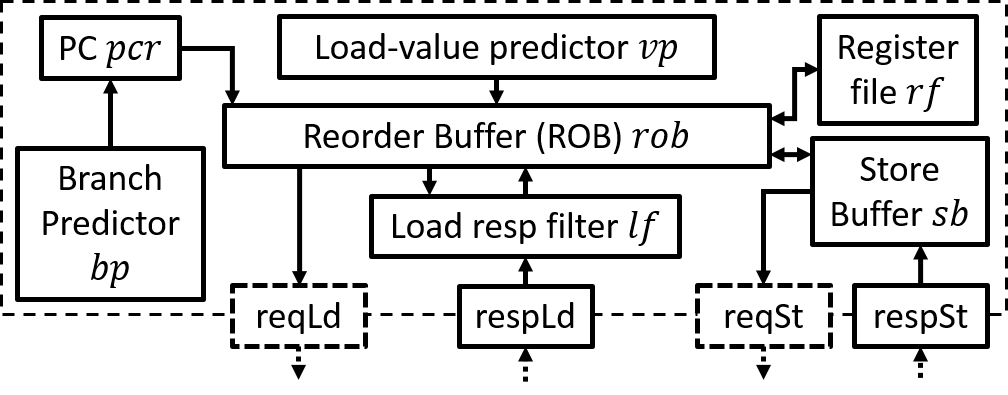}
	\nocaptionrule\caption{\robModelName{} implementation structure} \label{fig: pow model}
\end{figure}

In the following we give a precise description of how \robModelName{} operates.
(We will directly use the variable names in Figure \ref{fig: pow model}, \eg{}, $pcr$ stands for the PC register).
We never reuse ROB entries, and follow the convention that \emph{younger} entries will have \emph{larger} indices.
We refer to the oldest entry in ROB as the \emph{commit slot}.

\subsubsection{Component Functionality}
Since the implementation has to deal with partially executed instructions, we need to keep information about in-flight instructions in ROB (all $srcs$ fields represent source register names):
\begin{itemize}
	\item $\langle \NmInst, op, srcs, dst, val \rangle$: A non-memory instruction (\eg{} ALU and branch instructions).
	$op$ is the type of operation. 
	$val$ represents the computed value for the destination register $dst$ and is initially $\epsilon$. 
	These instructions include branch instructions.
	\item $\langle \LdInst, srcs, dst, a, v, t \rangle$: A load instruction to address $a$.
	$v$ is the load result for destination register $dst$, and $t$ is the tag of the store that provides value $v$. 
	All of $a$, $v$ and $t$ are initially $\epsilon$.
	\item $\langle \StInst, srcs, a, v, t \rangle$: A store instruction that writes data $v$ to address $a$.
	$t$ is the unique tag for this store assigned at decode time.
	Both $a$ and $v$ are initially $\epsilon$.
	\item $\langle \ComInst \rangle$ and $\langle \RecInst \rangle$ fences.
\end{itemize}
We use the $\fetchFunc(pc)$ function to fetch an instruction from memory address $pc$ and decode it into the above form.

$pcr$ is updated speculatively when a new instruction $ins$ is fetched into ROB using the $\predictFunc(ins)$ method of $bp$.
We assume that $bp$ always predicts correctly for non-$\NmInst$ instructions.
$\rf$ is updated conservatively when an instruction $ins$ is committed from ROB using the $\updateFunc(ins)$ method.
Each $sb$ entry also contains an $iss$ bit, which indicates whether the store has been issued to the memory or not.
The following methods are defined on $sb$:
\begin{itemize}
	\item $\enqFunc(a, v, t)$: enqueues the $\langle \mathrm{address, value, tag} \rangle$ tuple $\langle a, v, t \rangle$ into $sb$, and initializes the $iss$ bit of the new entry as $\False$.
	\item $\emptyFunc()$: returns $\True$ when $sb$ is empty.
	\item $\getAnyAddr()$: returns any store address present in $sb$; or returns $\epsilon$ if $sb$ is empty.
	\item $\getOldestFunc(a)$: return the $\langle$store data, tag, $iss$ bit$\rangle$ of the oldest store for address $a$ in $sb$.
	\item $\issueFunc(a)$: sets the $iss$ bit of the oldest store for address $a$ to $\True$.
	\item $\removeOldestFunc(a)$: deletes the oldest store to address $a$ from $sb$.
\end{itemize}

An ROB entry is defined as $\langle pc, npc, ins, ex \rangle$, where $pc$ is the PC of the instruction in the entry, $npc$ is the (predicted) PC of the next instruction, $ins$ is the instruction in this entry, and $ex$ is the state of the instruction.
$ex$ field has one of the following values: $\Idle$, $\Exe$, $\ReEx$, and $\Done$.
An instruction is $\Idle$ before it starts execution, and will become $\Done$ after execution finishes.
Both $\Exe$ and $\ReEx$ are only used for $\LdInst$ instructions to indicate that the load request is being processed in \ccmModelName{}. 
$\ReEx$ additionally implies that the load needs to be re-executed because the result of the current load request in memory is going to be wrong.
We initialize the $ex$ field of an instruction $ins$ using function $\initExFunc(ins)$, which returns $\Done$ for fence instructions and returns $\Idle$ otherwise.

$\lf$ is for filtering out load responses from \ccmModelName{} for the killed instructions. 
It is a bit vector of the same length as $rob$ (so it is also infinitely long in our description). 
$\lf[idx]$ is $\True$ if and only if a $\LdInst$ instruction at $idx$ has been flushed from $rob$ while its load request is still being processed in the memory.

Load-value speculation is modeled by the $\predictFunc(en)$ method of $vp$, which can predict the load result for ROB entry $en$.

All methods defined for $rob$ are listed in Table \ref{tab: part rob method}.
Besides, we use $rob[idx]$ to refer to the ROB entry at index $idx$. 
We also use $rob[idx].pc$, $rob[idx].npc$, $rob[idx].ins$ and $rob[idx].ex$ to refer to the $pc$, $npc$, $ins$ and $ex$ fields of $rob[idx]$, respectively.

\begin{table}[!htb]
	\centering
	\small
	\begin{tabular}{|p{0.95\columnwidth}|}
	\hline
		$\enqFunc(en)$: enqueues a new entry $en$ into $rob$. \\ \hline
		$\getReadyFunc()$: finds an entry for which all source register values are ready, and returns the $\langle \mathrm{index, entry} \rangle$ pair of it. \\ \hline
		$\getLdFunc()$: finds any entry containing a $\LdInst$ instruction, and returns the $\langle \mathrm{index, entry} \rangle$ pair of it. \\ \hline
		$\getCommitFunc()$: returns the commit slot of $rob$. \\ \hline
		$\deqFunc()$: deletes the entry, which is the commit slot, from $rob$. \\ \hline
		$\computNmFunc(idx)$: computes the result of the $\NmInst$ instruction at $idx$ using values in $\rf$ and $rob$, and returns $\langle$next PC, computed $val$ field$\rangle$. \\ \hline
		$\computeAddrFunc(idx)$: computes and returns the address of the memory instruction at $idx$ using values in $\rf$ and $rob$. \\ \hline
		$\computeStDataFunc(idx)$: computes and returns the store data of the $\StInst$ instruction at $idx$ using values in $\rf$ and $rob$. \\ \hline
		$\flushFunc(idx, pc)$: it deletes all entries in $rob$ with indices $\geq idx$ and updates $pcr$ to $pc$.
		For every deleted $\LdInst$ whose $ex$ field is $\Exe$ or $\ReEx$ (\ie{} the load is in memory), it sets the corresponding bit in $\lf$ to $\True$. \\ \hline
		$\findRecStFunc(idx, a)$: it looks for the value for $\LdInst\ a$ at $idx$ by searching older entries in the $rob$ and then in $sb$, and returns a $\langle\mathrm{value, tag}\rangle$ pair if it finds an executed store to address $a$.
		If the search finds nothing, then a $\top$ is returned so that the load can be issued to memory. 
		The search is also terminated if a $\RecInst$ fence is encountered and in that case $\epsilon$ is returned to indicate that the load should be stalled. \\ \hline
		$\findLdKilledByStFunc(idx, a)$: this method is called when a store at $idx$ resolves its address to $a$. 
		It identifies $\LdInst\ a$ instructions in $rob$ affected by this store by searching for $\LdInst\ a$ instructions from $idx+1$ to the youngest entry. 
		The search stops if another $\StInst\ a$ is encountered. 
		Since there can be several affected loads, it returns a list of their indices. 
		If a load has not started execution yet (\ie{} $ex$ field is $\Idle$), it will not be affected by $\StInst\ a$ and thus will not be returned. \\ \hline
		$\findLdKilledByLdFunc(idx, a, t)$: this method is called when a $\LdInst\ a$ at $idx$ reads from a store with tag $t$ in memory.
		It identifies $\LdInst\ a$ instructions in $rob$ which are younger than the load at $idx$ but read values staler than the value of store $t$.
		The method searches from $idx+1$ to the youngest entry for the first executed $\LdInst\ a$ instruction (\ie{} $ex$ field is $\Done$), which reads from a store with tag $t' \neq t$, and returns the index of that instruction in $rob$.
		The method returns $\top$ if no such load is found or a $\StInst\ a$ is encountered first. \\ \hline
\end{tabular}
\nocaptionrule\caption{Methods for $rob$} \label{tab: part rob method}
\end{table}

\subsubsection{Rules to Describe \robModelName{} Behavior} \label{sec: rob rule}
Figure \ref{fig: rob rule} shows the rules of \robModelName{}, where $ccm$ represents the \ccmModelName{} port connected to the processor, and Figure \ref{fig: rob interface} shows the interface methods of \robModelName{} to process the responses from memory.

\begin{figure}[!htb]
	\centering
	\begin{boxedminipage}{\columnwidth}
		\small
		\textbf{\robFetchRule{} rule} (instruction fetch). \reduceRuleSpace
		\begin{displaymath}
		\frac{
			ins = \fetchFunc(pcr);\ npc = bp.\predictFunc(ins);
		}{
			rob.\enqFunc(\langle pcr, npc, ins, \initExFunc(ins) \rangle);\ pcr \assignVal npc;
		}
		\end{displaymath} \reduceRuleEndSpace
		
		\textbf{\robNmExRule{} rule} ($\NmInst$, non-memory instruction, execution). \reduceRuleSpace
		\begin{displaymath}
		\frac{
			\begin{array}{c}
			\langle idx, \langle pc, npc, \langle \NmInst, op, srcs, dst, \epsilon \rangle, \Idle \rangle \rangle = rob.\getReadyFunc(); \\
			\langle nextpc, val \rangle = rob.\computNmFunc(idx);
			\end{array}
		}{
			\begin{array}{c}
			rob[idx] \assignVal \langle pc, nextpc, \langle \NmInst, op, srcs, dst, val \rangle, \Done \rangle; \\
			\ifFunc\ nextpc \neq npc\ \thenFunc\ rob.\flushFunc(idx+1, nextpc); \\ 
			\end{array}
		}
		\end{displaymath} \reduceRuleEndSpace
		
		\textbf{\robLdAddrRule{} rule} ($\LdInst$ address calculation). \reduceRuleSpace
		\begin{displaymath}
		\frac{
			\begin{array}{c}
			\langle idx, \langle pc, npc, \langle \LdInst, srcs, dst, \epsilon, v, \epsilon \rangle, \Idle \rangle \rangle = rob.\getReadyFunc(); \\
			a = rob.\computeAddrFunc(idx);
			\end{array}
		}{
			rob[idx].ins \assignVal \langle \LdInst, srcs, dst, a, v, \epsilon \rangle;
		}
		\end{displaymath} \reduceRuleEndSpace
		
		\textbf{\robLdPredRule{} rule} ($\LdInst$ result value prediction). \reduceRuleSpace
		\begin{displaymath}
		\frac{
			\begin{array}{c}
			\langle idx, \langle pc, npc, \langle \LdInst, srcs, dst, a, \epsilon, \epsilon \rangle, ex \rangle \rangle = rob.\getLdFunc(); \\ 
			v = vp.\predictFunc(\langle pc, npc, \langle \LdInst, srcs, dst, a, \epsilon, \epsilon \rangle, ex \rangle); \\
			\end{array}
		}{
			rob[idx].ins \assignVal \langle \LdInst, srcs, dst, a, v, \epsilon \rangle;
		}
		\end{displaymath} \reduceRuleEndSpace
		
		\textbf{\robLdBypassRule{} rule} ($\LdInst$ execution by data forwarding). \reduceRuleSpace
		\begin{displaymath}
		\frac{
			\begin{array}{c}
			\langle idx, \langle pc, npc, \langle \LdInst, srcs, dst, a, v, \epsilon \rangle, \Idle \rangle \rangle = rob.\getReadyFunc(); \\
			\langle res, t \rangle = rob.\findRecStFunc(idx, a); \\ 
			\whenFunc(a \neq \epsilon\ \wedge\ res \neq \epsilon\ \wedge\ res \neq \top); \\
			\end{array}
		}{
			\begin{array}{c}
			\qquad rob[idx] \assignVal \langle pc, npc, \langle \LdInst, srcs, dst, a, res, t \rangle, \Done \rangle; \\
			\qquad \ifFunc\ v \neq \epsilon\ \wedge\ v \neq res\ \thenFunc\ rob.\flushFunc(idx+1, npc); \\
			\end{array}
		}
		\end{displaymath} \reduceRuleEndSpace
		
		\textbf{\robLdReqRule{} rule} ($\LdInst$ execution by sending request to \ccmModelName). \reduceRuleSpace
		\begin{displaymath}
		\frac{
			\begin{array}{c}
			\langle idx, \langle pc, npc, \langle \LdInst, srcs, dst, a, v, \epsilon \rangle, \Idle \rangle \rangle = rob.\getReadyFunc(); \\
			res = rob.\findRecStFunc(idx, a); \\ 
			\whenFunc(a \neq \epsilon\ \wedge\ res == \top\ \wedge\ \neg \lf[idx]); \\
			\end{array}
		}{
			\begin{array}{c}
			ccm.\reqLdFunc(idx, a);\ rob[idx].ex \assignVal \Exe; \\
			\end{array}
		}
		\end{displaymath} \reduceRuleEndSpace
		
		\textbf{\robStExRule{} rule} ($\StInst$ execution). \reduceRuleSpace
		\begin{displaymath}
		\frac{
			\begin{array}{c}
			\langle idx, \langle pc, npc, \langle \StInst, srcs, \epsilon, \epsilon, t \rangle, \Idle \rangle \rangle = rob.\getReadyFunc(); \\
			a = rob.\computeAddrFunc(idx);\ v = rob.\computeStDataFunc(idx); \\
			list = rob.\findLdKilledByStFunc(idx, a); \\
			\end{array}
		}{
			\begin{array}{c}
			rob[idx] \assignVal \langle pc, npc, \langle \StInst, srcs, a, v, t \rangle, \Done \rangle; \\
			\begin{array}{l}
			\mathbf{for}\ kIdx = list.\mathsf{first}()\ \mathbf{to}\ list.\mathsf{tail}() \\
			\qquad \ifFunc\ rob[kIdx].ex == \Done\ \thenFunc \\
			\qquad \qquad rob.\flushFunc(kIdx, rob[kIdx].pc);\ \mathbf{break}; \\
			\qquad \elseFunc\ rob[kIdx].ex \assignVal \ReEx; \\
			\end{array} \\
			\end{array}
		}
		\end{displaymath} \reduceRuleEndSpace
		
		\textbf{\robNmLdRecRetRule{} rule} (commit $\NmInst/\LdInst/\RecInst$ from ROB). \reduceRuleSpace
		\begin{displaymath}
		\frac{
			\langle pc, npc, ins, \Done \rangle = rob.\getCommitFunc();\ \whenFunc(\isNmLdRecFunc(ins));
		}{
			rob.\deqFunc();\ \rf.\updateFunc(ins);
		}
		\end{displaymath} \reduceRuleEndSpace
		
		\textbf{\robStRetRule{} rule} (commit $\StInst$ from ROB). \reduceRuleSpace
		\begin{displaymath}
		\frac{
			\langle pc, npc, ins, \Done \rangle = rob.\getCommitFunc();\ \langle \StInst, srcs, a, v, t \rangle = ins;
		}{
			rob.\deqFunc();\ sb.\enqFunc(a, v, t);\ \rf.\updateFunc(ins);
		}
		\end{displaymath} \reduceRuleEndSpace
		
		\textbf{\robComRetRule{} rule} (commit $\ComInst$ from ROB). \reduceRuleSpace
		\begin{displaymath}
		\frac{
			\langle pc, npc, \langle \ComInst \rangle, \Done \rangle = rob.\getCommitFunc();\ \whenFunc(sb.\emptyFunc());
		}{
			rob.\deqFunc();\ \rf.\updateFunc(\langle \ComInst \rangle);
		}
		\end{displaymath} \reduceRuleEndSpace
		
		\textbf{\robDeqSbRule{} rule} (issue store to \ccmModelName). \reduceRuleSpace
		\begin{displaymath}
		\frac{
			a = sb.\getAnyAddr();\ \langle v, t, \False \rangle = sb.\getOldestFunc(a);\ \whenFunc(a \neq \epsilon);
		}{
			ccm.\reqStFunc(a, v, t);\ sb.\issueFunc(a);
		}
		\end{displaymath}
	\end{boxedminipage}
	\nocaptionrule\caption{\robModelName{} operational semantics} \label{fig: rob rule}
\end{figure}

\begin{figure}[!htb]
	\centering
	\begin{boxedminipage}{\columnwidth}
		\small
		$\respLdFunc(idx, res, t)$ \textbf{method:} \reduceRuleSpace
		\begin{displaymath}
		\begin{array}{l}
		\ifFunc\ \lf[idx]\ \thenFunc\ \lf[idx] \assignVal \False;\ \textcolor{blue}{/\!/\ \mathrm{wrong\Hyphen{}path\ load\ response}}  \\
		\elseFunc \\
		\quad \letFunc\ \langle pc, npc, \langle \LdInst, srcs, dst, a, v, \epsilon \rangle, ex \rangle = rob[idx]\ \inFunc \\
		\quad \quad \ifFunc\ ex == \ReEx\ \thenFunc\ rob[idx].ex \assignVal \Idle; \\
		\quad \quad \elseFunc\ \codecomment{/\!/\ \mathrm{save\ load\ result\ and\ check\ value\ misprediction}} \\
		\quad \quad \quad rob[idx] \assignVal \langle pc, npc, \langle \LdInst, srcs, dst, a, res, t \rangle, \Done \rangle; \\
		\quad \quad \quad \ifFunc\ v \neq \epsilon\ \wedge\ v \neq res\ \thenFunc\ rob.\flushFunc(idx+1, npc); \\
		\quad \quad \quad \elseFunc\ \codecomment{/\!/\ \mathrm{kill\ younger\ load\ with\ staler\ value}} \\
		\quad \quad \quad \quad kIdx = rob.\findLdKilledByLdFunc(a,idx,t); \\
		\quad \quad \quad \quad \ifFunc\ kIdx \neq \top\ \thenFunc\ rob.\flushFunc(kIdx, rob[kIdx].pc); \\
		\end{array}
		\end{displaymath} \reduceRuleEndSpace
		
		$\respStFunc(a)$ \textbf{method:} $sb.\removeOldestFunc(a);$
	\end{boxedminipage}
	\nocaptionrule\caption{\robModelName{} interface methods} \label{fig: rob interface}
\end{figure}

Rule \robLdReqRule{} sends a load request to \ccmModelName{}. 
However, $\lf$ has to be checked to avoid sending a request with a duplicated $idx$ to memory.
(Since we never reuse $rob$ indices here, this check will always pass).
When the load response arrives as shown in the $\respLdFunc(idx, res)$ method in Figure \ref{fig: rob interface}, we check $\lf$ to see if it corresponds to a $\LdInst$ instruction which has already been killed.
If so, we throw away the response and reset the $\lf$ bit.
Otherwise, we check the $ex$ field of the original requesting ROB entry.
If it is $\ReEx$, the load response is discarded and the $ex$ field is set to $\Idle$ so that the load can be re-executed later.
Otherwise, we record the load result and flush $rob$ in case of load-value misprediction.
If there is no load-value misprediction, we kill eagerly executed younger loads which get results staler than the current load response using the $\findLdKilledByLdFunc$ method.

There are two points worth noticing about loads.
First, the load address can be computed from some unverified predicted values, so two loads can be executed out-of-order even if one load uses the result of the other as load address.
Second, two loads to the same address on the same processor can return from \ccmModelName{} out-of-order as long as they reads from the same store, making the loads still appear to be executed in order.
While this mechanism assumes that the load result has the unique tag associated with the store read by the load, in a real implementation there are no unique tags for stores.
In actual implementations, the processors monitors the coherence messages during the period between these two load responses; if the cache-line read by the younger load is invalidated, then the younger load is killed.
This mechanism helps maintain the SC for a single address property at the program level while imposing minimum restrictions on the out-of-order execution in hardware.
POWER and ARM processors also employ this mechanism \cite{sarkar2011understanding,flur2016modelling}.
(Notice that the tags for stores are solely for the purpose of detecting whether two loads read from the same store).
We do not use $\findLdKilledByLdFunc$ to kill loads in \robLdBypassRule{}, because such loads must have already been killed by the store that forwards the data, as explained below in the \robStExRule{} rule.

Rule \robStExRule{} computes the store address and data of a $\StInst$ instruction, and searches for younger loads to the same address which violate memory dependency.
The \textbf{for} loop is used to process all violating loads from the oldest to the youngest.
If the load has not finished execution, we mark its $ex$ field for re-execution. 
Otherwise, the load may have propagated the wrong result to younger instructions, so we kill it and flush the subsequent instructions from $rob$.

\section{Defining Memory Models Using I\textsuperscript{2}E} \label{sec: I2E}

The \IIE{} framework defines memory models using the structure in Figure \ref{fig: gen op model}.
The state of the system with $n$ processors is defined as $\langle ps, m \rangle$, where $m$ is an $n$-ported {monolithic} memory which is connected to the $n$ processors. 
$ps[i]$ ($i=1\ldots n$) represents the state of the $i^{th}$ processor. 
Each processor contains a register state $s$, which represents all architectural registers, including both the general purpose registers and special purpose registers, such as PC. 
Cloud represents additional state elements, \eg{} a \emph{store buffer},  that a specific memory model may use in its definition.

Since we want our definitions of the memory models to be independent from ISA, we introduce the concept of \emph{decoded instruction set} (DIS).
A \emph{decoded instruction} contains all the information of an instruction after it has been decoded and has read all source registers.
Our DIS has the following five instructions. 
\begin{itemize}
	\item $\langle \NmInst, dst, v \rangle$: instructions that do not access memory, such as ALU or branch instructions.
	It writes the computation result $v$ into destination register $dst$.
	\item $\langle \LdInst, a, dst \rangle$: a load that reads memory address $a$ and updates the destination register $dst$.
	\item $\langle \StInst, a, v \rangle$: a store that writes value $v$ to memory address $a$.
	\item $\langle \ComInst\rangle$ and $\langle \RecInst \rangle$: the two types of fences.
\end{itemize}
The $\ComInst$ and $\RecInst$ fences already appeared in \robModelName, and later we will define their semantics in proposed \IIE{} models.

Since instructions are executed instantaneously and in-order in \IIE{} models, the register state of each processor is by definition always up-to-date.
Therefore we can define the following two methods on each processor to manipulate the register state $s$:
\begin{itemize}
	\item $\decodeFunc()$: fetches the next instruction and returns the corresponding decoded instruction based on the current $s$.
	\item $\executeFunc(dIns, ldRes)$: updates $s$ (\eg{} by writing destination registers and changing PC) according to the current decoded instruction $dIns$.
	A $\LdInst$ requires a second argument $ldRes$ which should be the loaded value.
	For other instructions, the second argument can be set to don't care (``$\Hyphen$").
\end{itemize}

In \IIE{}, the meaning of an instruction cannot depend on a future store, so all \IIE{} models forbid stores from overtaking loads.
This automatically excludes all thin-air read behaviors, and matches the in-order commit property of \robModelName.

\begin{figure}[!htb]
	\centering
	\subfloat[General model structure\label{fig: gen op model}]{\includegraphics[width=0.535\columnwidth]{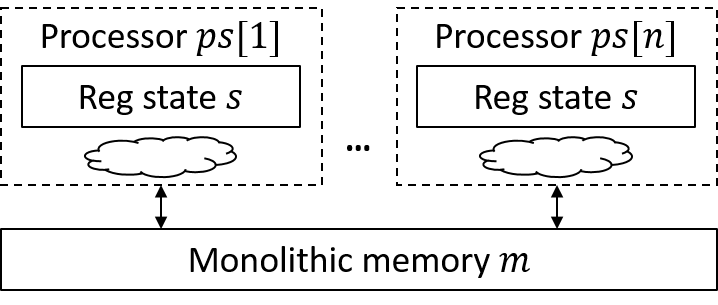}} \quad
	\subfloat[WMM structure\label{fig: wmm op model}]{\includegraphics[width=0.385\columnwidth]{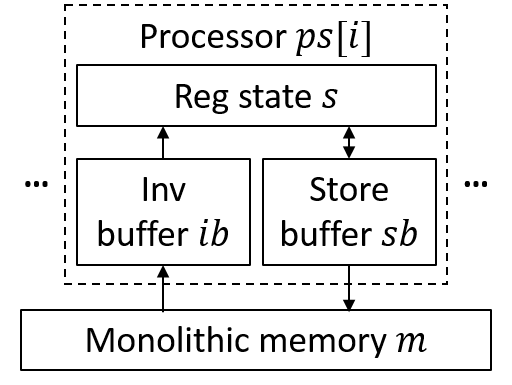}}
	\nocaptionrule\caption{Structures of \IIE{} models}
\end{figure}

\section{WMM Model} \label{sec: WMM}

Our first \IIE{} model, WMM (Figure \ref{fig: wmm op model}), adds two conceptual devices to each processor: a store buffer $sb$ and an invalidation buffer $ib$.
Despite the simplicity of these two devices, they make WMM equivalent to \robccm.
It is notable that WMM can capture the subtle effects induced by various speculations in \robModelName.

The $sb$ in WMM is almost the same as the one in \robModelName{} except that it does not need the $\issueFunc$ method here and thus the $iss$ bit is also not needed.
(The store tag is also not needed here).
We change the $\removeOldestFunc(a)$ method of $sb$ to return the $\langle$address, value$\rangle$ pair of the oldest store for address $a$ in addition to the deletion of that store.
We also define the following two new methods on $sb$:
\begin{itemize}
	\item $\existFunc(a)$: returns $\True$ if address $a$ is present in $sb$.
	\item $\getYoungestFunc(a)$: returns the youngest store data to address $a$ in $sb$.
\end{itemize}
Buffering stores in $sb$ allows loads to overtake stores, and enables reorderings of stores.
A $\ComInst$ fence will flush all stores in the local $sb$ into the monolithic memory to make them globally visible.

In case of load-load reordering, a reordered load may read a stale value, and this behavior is simulated by the $ib$ of each processor in WMM.
$ib$ is an unbounded buffer of $\langle \mathrm{address, value} \rangle$ pairs, each representing a stale memory value for an address that can be observed by the processor. 
A stale value enters $ib$ when some store buffer pushes a value to the monolithic memory.
When ordering is needed, stale values should be removed from $ib$ to prevent younger loads from reading them.
In particular, the $\RecInst$ fence will clear the local $ib$.
The following methods are defined on $ib$:
\begin{itemize}
	\item $\insertFunc(a,v)$: inserts $\langle \mathrm{address, value} \rangle$ pair $\langle a, v \rangle$ into $ib$.
	\item $\existFunc(a)$: returns $\True$ if address $a$ is present in $ib$.
	\item $\getRandAndRemoveFunc(a)$: returns any value $v$ for address $a$ in $ib$, and removes all values for $a$, which are inserted into $ib$ before $v$, from $ib$.
	\item $\clearFunc()$: removes all contents from $ib$ to make it empty.
	\item $\removeAddrFunc(a)$: removes all (stale) values for address $a$ from $ib$.
\end{itemize}

\subsection{Operational Semantics of WMM}

Figure \ref{fig: wmm rule} shows the operational semantics of WMM.
The first 7 rules are the instantaneous execution of decoded instructions, while the \wmmDeqSbRule{} rule removes the oldest store for any address (say $a$) from $sb$ and commits it to the monolithic memory. 
\wmmDeqSbRule{} also inserts the original memory value into the $ib$ of all other processors to allow $\LdInst\ a$ in these processors to effectively get reordered with older instructions. 
However, this insertion in $ib$ should not be done if the corresponding $sb$ on that processor already has a store to $a$. 
This restriction is important, because if a processor has address $a$ in its $sb$, then it can never see stale values for $a$. 
For the same reason, when a $\StInst\ a\ v$ is inserted into $sb$, we remove all values for $a$ from the $ib$ of the same processor.

\begin{figure}[!htb]
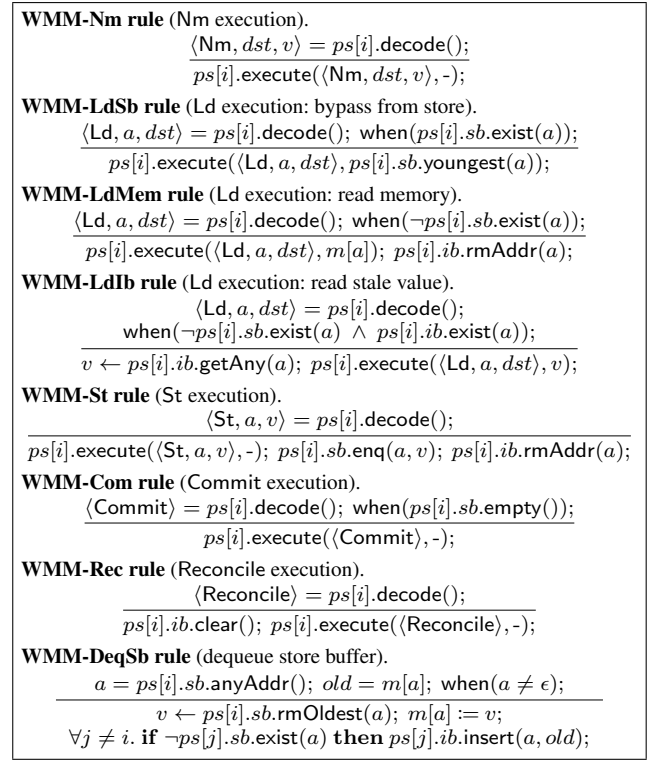

	\centering
	\begin{boxedminipage}{\columnwidth}
		\small
		\textbf{\wmmNmRule{} rule} ($\NmInst$ execution). \reduceRuleSpace
		\begin{displaymath}
		\frac{
			\langle \NmInst, dst, v \rangle = ps[i].\decodeFunc();
		}{
			ps[i].\executeFunc(\langle \NmInst, dst, v \rangle, \Hyphen);
		}
		\end{displaymath} \reduceRuleEndSpace
		
		\textbf{\wmmLdSbRule{} rule} ($\LdInst$ execution: bypass from store). \reduceRuleSpace
		\begin{displaymath}
		\frac{
			\langle \LdInst, a, dst \rangle = ps[i].\decodeFunc();\ \whenFunc(ps[i].sb.\existFunc(a));
		}{
			ps[i].\executeFunc(\langle \LdInst, a, dst \rangle, ps[i].sb.\getYoungestFunc(a));
		}
		\end{displaymath} \reduceRuleEndSpace
		
		\textbf{\wmmLdMemRule{} rule} ($\LdInst$ execution: read memory). \reduceRuleSpace
		\begin{displaymath}
		\frac{
			\langle \LdInst, a, dst \rangle = ps[i].\decodeFunc();\ \whenFunc(\neg ps[i].sb.\existFunc(a));
		}{
			ps[i].\executeFunc(\langle \LdInst, a, dst \rangle, m[a]);\ ps[i].ib.\removeAddrFunc(a);
		}
		\end{displaymath} \reduceRuleEndSpace
		
		\textbf{\wmmLdIbRule{} rule} ($\LdInst$ execution: read stale value). \reduceRuleSpace
		\begin{displaymath}
		\frac{
			\begin{array}{c}
			\langle \LdInst, a, dst \rangle = ps[i].\decodeFunc(); \\
			\whenFunc(\neg ps[i].sb.\existFunc(a)\ \wedge\ ps[i].ib.\existFunc(a)); \\
			\end{array}
		}{
			v \leftarrow ps[i].ib.\getRandAndRemoveFunc(a);\ ps[i].\executeFunc(\langle \LdInst, a, dst \rangle, v);
		}
		\end{displaymath} \reduceRuleEndSpace
		
		\textbf{\wmmStRule{} rule} ($\StInst$ execution). \reduceRuleSpace
		\begin{displaymath}
		\frac{
			\langle \StInst, a, v \rangle = ps[i].\decodeFunc();
		}{
			ps[i].\executeFunc(\langle \StInst, a, v \rangle, \Hyphen);\ ps[i].sb.\enqFunc(a, v);\ ps[i].ib.\removeAddrFunc(a);
		}
		\end{displaymath} \reduceRuleEndSpace
		
		\textbf{\wmmComRule{} rule} ($\ComInst$ execution). \reduceRuleSpace
		\begin{displaymath}
		\frac{
			\langle \ComInst \rangle = ps[i].\decodeFunc();\ \whenFunc(ps[i].sb.\emptyFunc());
		}{
			ps[i].\executeFunc(\langle \ComInst \rangle, \Hyphen);
		}
		\end{displaymath} \reduceRuleEndSpace
		
		\textbf{\wmmRecRule{} rule} ($\RecInst$ execution). \reduceRuleSpace
		\begin{displaymath}
		\frac{
			\langle \RecInst \rangle = ps[i].\decodeFunc();
		}{
			ps[i].ib.\clearFunc();\ ps[i].\executeFunc(\langle \RecInst \rangle, \Hyphen);
		}
		\end{displaymath} \reduceRuleEndSpace
		
		\textbf{\wmmDeqSbRule{} rule} (dequeue store buffer). \reduceRuleSpace
		\begin{displaymath}
		\frac{
			\begin{array}{c}
			a = ps[i].sb.\getAnyAddr();\ old = m[a];\ \whenFunc(a \neq \epsilon); \\
			\end{array}
		}{
			\begin{array}{c}
			v \leftarrow ps[i].sb.\removeOldestFunc(a);\ m[a] \assignVal v; \\
			\forall j \neq i.\ \ifFunc\ \neg ps[j].sb.\existFunc(a)\ \thenFunc\ ps[j].ib.\insertFunc(a,old); \\
			\end{array}
		}
		\end{displaymath}
	\end{boxedminipage}
	\nocaptionrule\caption{WMM operational semantics} \label{fig: wmm rule}
\end{figure}

Load execution rules in Figure \ref{fig: wmm rule} correspond to three places from where a load can get its value.
\wmmLdSbRule{} executes $\LdInst\ a$ by reading from $sb$.
If address $a$ is not found in $sb$, then the load can read from the monolithic memory (\wmmLdMemRule).
However, in order to allow the load to read a stale value (to model load reordering), \wmmLdIbRule{} gets the value from $ib$.
(Since $\getRandAndRemoveFunc$ has side-effects, we use $\leftarrow$ to bind its return value to a free variable).
The model allows {non-deterministic} choice in the selection of \wmmLdMemRule{} and \wmmLdIbRule{}.
To make this idea work, \wmmLdMemRule{} has to remove all values for $a$ from $ib$,  because these values are staler than the value in memory.
Similarly, \wmmLdIbRule{} removes all the values for $a$, which are staler than the one read, from $ib$.

\noindent\textbf{Synchronization instructions:}
Atomic read-modify-write ($\RMWInst$) instructions can also be included in WMM.
$\RMWInst$ directly operates on the monolithic memory, so the rule to execute $\RMWInst$ is simply the combination of \wmmLdMemRule, \wmmStRule{} and \wmmDeqSbRule.

\subsection{Litmus Tests for WMM} \label{sec: wmm litmus test}

WMM executes instructions instantaneously and in order, but because of store buffers ($sb$) and invalidation buffers ($ib$), a processor can see the effect of loads and stores on some other processor in a different order than the program order on that processor.
We explain the reorderings permitted and forbidden by the definition of WMM using well-known examples.

\noindent\textbf{Fences for mutual exclusion:}
Figure \ref{fig: dekker wmm} shows the kernel of Dekker's algorithm in WMM, which guarantees mutual exclusion by ensuring registers $r_1$ and $r_2$ cannot both be zero at the end.
All four fences are necessary.
Without the $\RecInst$ fence $I_3$, $I_4$ could read 0 from $ib$, as if $I_4$ overtakes $I_1$ and $I_2$.
Without the $\ComInst$ fence $I_2$, $I_1$ could stay in the $sb$, and $I_8$ gets 0 from memory.

\begin{figure}[!htb]
	\centering
	\begin{minipage}[b]{0.47\columnwidth}
		\centering
		\small
		\begin{tabular}{|l|l|}
			\hline
			{Proc. P1} & {Proc. P2} \\ 
			\hline
			$\!\! I_1: \StInst\ a\ 1 \!\!$    & $\!\! I_5: \StInst\ b\ 1 \!\!$ \\
			$\!\! I_2: \ComInst \!\!$         & $\!\! I_6: \ComInst \!\!$ \\
			$\!\! I_3: \RecInst \!\!$         & $\!\! I_7: \RecInst \!\!$ \\
			$\!\! I_4: r_1 = \LdInst\ b \!\!$ & $\!\! I_8: r_2 = \LdInst\ a \!\!$ \\ \hline
			\multicolumn{2}{|l|}{\hspace{-3pt}WMM forbids: $r_1 = 0, r_2 = 0 \!\!$} \\ \hline
		\end{tabular}
		\nocaptionrule \caption{Dekker's algorithm in WMM} \label{fig: dekker wmm}
	\end{minipage}
	\hspace{1pt}
	\begin{minipage}[b]{0.47\columnwidth}
		\centering
		\small
		\begin{tabular}{|l|l|}
			\hline
			{Proc. P1} & {Proc. P2} \\ 
			\hline
			$\!\! I_1: \StInst\ a\ 42$  & $\!\! I_4: r_1 = \LdInst\ f \!\!$ \\
			$\!\! I_2: \ComInst$        & $\!\! I_5: \RecInst \!\!$ \\
			$\!\! I_3: \StInst\ f\ 1$   & $\!\! I_6: r_2 = \LdInst\ a \!\!$ \\ 
			\hline
			\multicolumn{2}{|l|}{\hspace{-3pt}WMM forbids: $r_1=1, r_2 = 0 \!\!$} \\ 
			\hline
		\end{tabular}
		\nocaptionrule \caption{Message passing in WMM} \label{fig: msg pass wmm}
	\end{minipage}
\end{figure}

\noindent\textbf{Fences for message passing:}
Figure \ref{fig: msg pass wmm} shows a way of passing data 42 from P1 to P2 by setting a flag at address $f$.
Fences $I_2$ and $I_5$ are necessary.
Without the $\ComInst$ fence $I_2$, the two stores on P1 may get reordered in writing memory, so $I_6$ may not see the new data.
Without the $\RecInst$ fence $I_5$, $I_6$ could see the stale value 0 from $ib$.
It is as if the two loads on P2 are reordered.

\noindent\textbf{Memory dependency speculation:}
WMM is able to capture the behaviors caused by memory dependency speculation in hardware.
For example, the behavior in Figure \ref{fig: wmm mem dep pred} is possible in \robccm{} due to memory dependency speculation, \ie{} P2 predicts that the store address of $I_5$ is not $a$, and execute $I_6$ early to get value 0.
WMM allows this behavior because $I_6$ can read 0 from $ib$.

\begin{figure}[!htb]
	\begin{minipage}[b]{0.52\columnwidth}
		\small
		\begin{tabular}{|l|l|}
			\hline
			Proc. P1 & Proc. P2 \\
			\hline
			$\!\! I_1: \StInst\ a\ 1 \!\!$ & $\!\! I_4: r_1 = \LdInst\ b \!\!$ \\
			$\!\! I_2: \ComInst \!\!$      & $\!\! I_5: \StInst\ (r_1 \!+\! c \!-\! 1)\ 1 \!\!$ \\
			$\!\! I_3: \StInst\ b\ 1 \!\!$ & $\!\! I_6: r_2 = \LdInst\ a \!\!$ \\
			\hline
			\multicolumn{2}{|l|}{WMM allows: $r_1=1,r_2=0$} \\
			\hline
		\end{tabular}
		\nocaptionrule \caption{Memory dependency speculation} \label{fig: wmm mem dep pred}
	\end{minipage}
	\hspace{2pt}
	\begin{minipage}[b]{0.45\columnwidth}
		\centering
		\small
		\begin{tabular}{|l|l|}
			\hline
			Proc. P1 & Proc. P2 \\
			\hline
			$\!\! I_1: \StInst\ a\ 1 \!\!$ & $\!\! I_4: r_1 = \LdInst\ b \!\!$ \\
			$\!\! I_2: \ComInst \!\!$      & $\!\! I_5: r_2 = \LdInst\ r_1 \!\!$ \\
			$\!\! I_3: \StInst\ b\ a \!\!$ & \\
			\hline
			\multicolumn{2}{|l|}{\hspace{-4pt}WMM allows: $r_1=a,r_2=0 \!\!$} \\
			\hline
		\end{tabular}
		\nocaptionrule \caption{Load-value speculation} \label{fig: wmm val pred}
	\end{minipage}
\end{figure}

\noindent\textbf{Load-value speculation:}
WMM can capture the behaviors caused by load-value speculation in hardware.
For instance, the behavior in Figure \ref{fig: wmm val pred} is the result of such speculation in \robccm, \ie{} P2 can predict the result of $I_4$ to be $a$ and execute $I_5$ early to get value 0.
When $I_4$ returns from memory later with value $a$, the prediction on $I_4$ turns out to be correct and the result of $I_5$ can be kept.
WMM allows this behavior because $I_5$ can read 0 from $ib$.



\noindent\textbf{SC for a single address:}
WMM maintains SC for all accesses to a single address, \ie{} all loads the stores to a single address can be put into a total order, which is consistent with the program order ($\poEdge$), read-from relation ($\rfEdge$), and coherence order ($\coEdge$).
The coherence order is a total order of stores to this address; in WMM it is the order of writing the monolithic memory.
This property holds for WMM because both $sb$ and $ib$ manages values of the same address in a FIFO manner.
This property also implies the following two axioms \cite{batty2011mathematizing,maranget2012tutorial,c++n4527} ($L_1,L_2,S_1,S_2$ denote loads and stores to the same address):
\begin{description}
	\descitem{CoRR} (Read-Read Coherence): $L_1\poEdge L_2\ \wedge\ S_1\rfEdge L_1\ \wedge\ S_2\rfEdge L_2 \Longrightarrow S_1 == S_2\ \vee\ S_1\coEdge S_2$. 
	\descitem{CoWR} (Write-Read Coherence): $S_2\rfEdge L_1\ \wedge\ S_1\poEdge L_1 \Longrightarrow S_1 == S2\ \vee\ S_1\coEdge S_2$.
\end{description}
WMM satisfies these two axioms.
As for \robccm{}, The coherence order is the order of writing the monolithic memory in \ccmModelName, and these two axioms hold due to the $\findLdKilledByStFunc$ and $\findLdKilledByLdFunc$ methods used in \robModelName{} (see Appendix~\ref{sec:sup-sc} for the proof).

\noindent\textbf{SC for well-synchronized programs:}
The critical sections in well-synchronized programs are all protected by locks.
To maintain SC behaviors for such programs in WMM, we can add a $\RecInst$ after acquiring the lock and a $\ComInst$ before releasing the lock.

\noindent\textbf{Fences to restore SC:}
For any program, if we insert a $\ComInst$ followed by a $\RecInst$ before every $\LdInst$ and $\StInst$ instruction, the program behavior in WMM will be sequential consistent.

In summary, WMM can reorder stores to different addresses, and allows a load to overtake other loads (to different addresses), stores and $\ComInst$ fences.
A load cannot overtake any $\RecInst$ fence, while dependencies generally do not enforce ordering.

\subsection{Equivalence of \robccm{} and WMM}

\begin{theorem}\label{thm: rob < wmm}
	\robccm{} $\subseteq$ WMM.
\end{theorem}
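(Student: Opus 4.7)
The proof plan is a standard refinement (simulation) argument. I exhibit a simulation relation $R$ between states of \robccm{} and states of WMM, together with a mapping from each \robccm{} rule firing to a (possibly empty) sequence of WMM rule firings, such that $R$ is preserved step by step and the externally observable behaviours -- the committed register writes and the evolution of the monolithic memory -- agree.

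Writing $sb$, $rob$, $mrb$, $m$ for the components of \robccm{}, the relation $R$ states that, for every processor $i$: (i) the WMM register state $ps[i].s$ equals the result of running, sequentially, the instructions that \robccm{} has already retired from $rob[i]$; (ii) $ps[i].sb$ is the same ordered list of stores as $sb[i]$, discarding the $iss$ bit and the store tag; (iii) the two monolithic memories coincide; and (iv) $ps[i].ib$ holds exactly the stale values inserted by the simulation's past \wmmDeqSbRule firings minus whatever has since been removed by \wmmRecRule, \wmmStRule, \wmmLdMemRule, and \wmmLdIbRule in the simulated trace. The mapping fires a WMM rule precisely at each \robccm{} retirement (\robNmLdRecRetRule, \robStRetRule, \robComRetRule) and at each \ccmStRule event: a retired $\NmInst$, $\StInst$, $\ComInst$, or $\RecInst$ maps to its direct WMM analogue; a retired $\LdInst$ maps to whichever of \wmmLdSbRule, \wmmLdMemRule, \wmmLdIbRule is justified by the source of its committed value; and a \ccmStRule firing that drains a store from processor $j$ maps to \wmmDeqSbRule on $ps[j]$. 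Every other \robccm{} rule -- fetch, any in-ROB execution rule, data forwarding, memory-dependency or stale-load kills, load-value prediction, \robDeqSbRule, and load responses -- is silent in the WMM trace.

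To relate in-flight loads to the WMM rules they will eventually need, I strengthen $R$ with a \emph{load-value invariant}: for every entry of the form $\langle \LdInst, srcs, dst, a, v, t\rangle$ in $rob[i]$ with $v\neq\epsilon$ and $t\neq\epsilon$, the value $v$ is already producible in WMM from the state of processor $i$, meaning either $ps[i].sb$ contains $a$ with youngest value $v$, or $\langle a, v\rangle$ appears in $ps[i].ib$, or $m[a]=v$. Under this invariant the retirement case for a load is immediate -- pick the WMM load rule whose disjunct is satisfied -- and the other cases are straightforward bookkeeping. For example, \robStRetRule transfers the store from the head of $rob[i]$ to $sb[i]$ exactly as \wmmStRule enqueues into $ps[i].sb$ and prunes the same address from $ib[i]$; \robComRetRule can fire only when $sb[i]$ is empty, so by clause (ii) $ps[i].sb$ is empty too and \wmmComRule is admissible; and each \ccmStRule removes the oldest $sb$ entry for the drained address in exactly the way \wmmDeqSbRule does, with the broadcast old value being precisely what $m[a]$ held just before the \ccmModelName{} write.

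The main obstacle is the \emph{persistence} of the load-value invariant across the interleaving window between the moment a load's value is fixed in $rob$ and the moment it retires. In that window, stores by other processors may overwrite $m[a]$, and the stale value the load is holding must remain producible (in $ps[i].sb$, $ps[i].ib$, or $m$) until the load retires. The bulk of the technical work is to show, for each class of intervening event, that the needed witness is not destroyed: (a) no \wmmRecRule can fire for $ps[i]$ in the window, because any $\RecInst$ older than the load would, while still in $rob[i]$, have caused $\findRecStFunc$ to return $\epsilon$ and thereby prevented the load from obtaining a value at all, whereas any $\RecInst$ younger than the load cannot retire first by in-order ROB commit; (b) no local $\StInst\ a$ older than the load can retire after the load's value is fixed without $\findLdKilledByStFunc$ having flushed or forced re-execution of the load; and (c) same-address load-load ordering is preserved by $\findLdKilledByLdFunc$ together with the store-tag matching mechanism, so the FIFO discipline of $ib$ is respected and the stale value is consumed in the right relative order (essentially the CoRR property already indicated for \robccm{} in Section \ref{sec: WMM}). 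Once these persistence lemmas are in place, the simulation closes and gives \robccm{} $\subseteq$ WMM.
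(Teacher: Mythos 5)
Your proposal is correct and is essentially the paper's own proof: the same simulation mapping (WMM steps fired only at ROB commits and at \ccmModelName{} memory writes, all other \robccm{} rules silent), the same invariants equating $m$ and the store buffers, and the same crux --- that a completed load's value persists in $sb$, $m$, or $ps[i].ib$ until retirement because no older $\RecInst$ or same-address store can commit in the interim and the \descref{CoRR} property prevents an intervening same-address load from evicting the witness from $ib$. The only differences are organizational: you carry a forward ``load-value invariant'' for in-flight completed loads and treat flushes/re-executions as silent steps, whereas the paper first excises the flushed and re-executed work from the \robccm{} execution and then argues the non-trivial load case backward from its commit time.
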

\begin{proof}
	First of all, for any execution in \robccm{} which contains flushes on $rob$, if we exclude the flushes and the part of the execution, of which the effects are canceled by the flushes, we will get a new execution with the same program results.
	Similarly, we could exclude any load re-execution (\ie{} a store sets the $ex$ field of a load to $\ReEx$).
	Thus, we only need to consider executions in \robccm{} without any $rob$ flush or load re-execution.
	For any such execution $E$ in \robccm{}, we could simulate $E$ in WMM using the following way to get the same program behavior:
	\begin{itemize}
		\item When an instruction is committed from an ROB in $E$, we execute that instruction in WMM.
		\item When a store writes the monolithic memory of \ccmModelName{} ($ccm.m$) and is dequeued from a store buffer in $E$, we also dequeue that store and writes the monolithic memory ($m$) in WMM.
	\end{itemize}
	After each step of our simulation, we prove inductively that the following invariants hold:
	\begin{enumerate}
		\item The WMM states of $m$ and all store buffers are the same as the \robccm{} states of $ccm.m$ and all store buffers.
		\item All instructions committed from ROBs in \robccm{} have also been executed in WMM with the same results.
	\end{enumerate}
	The only non-trivial case is when a load $L$ to address $a$ is committed from the ROB of processor $i$ ($ooo[i].rob$) in \robccm{} and $L$ is executed correspondingly by $ps[i]$ in WMM.
	Assume that $L$ reads from store $S$ in \robccm{} (via memory, store buffer or ROB).
	We consider the status of $S$ in \robccm{} when $L$ is committed from ROB.
	If $S$ is still in the store buffer ($ooo[i].sb$) or $ccm.m$, then WMM can execute $L$ by reading from $ps[i].sb$ or $m$.
	Otherwise $S$ must have been overwritten by another store in $ccm.m$ before $L$ is committed from ROB.
	In this case, WMM will insert the value of $S$ into $ps[i].ib$ when the overwrite happens in WMM, because there cannot be any store to $a$ in $ooo[i].sb$ at the time of the overwrite.
	Now we only need to show that the value of $S$ is not removed from $ps[i].ib$ by any store, $\RecInst$ or load before $L$ is executed in WMM so that $L$ could read the value of $S$ from $ps[i].ib$ in WMM.
	We consider the time period after the overwrite and before the commit of $L$ in \robccm{}, as well as the corresponding period in WMM (\ie{} after the overwrite and before the execution of $L$).
	Since there cannot be any store to $a$ or $\RecInst$ fence committed from $ooo[i].rob$ during that period (otherwise $L$ cannot read from $S$), the value of $S$ will not be removed from $ps[i].ib$ by stores or $\RecInst$ fences during that period, and $ps[i].sb$ will not contain any store to $a$ when $L$ is executed.
	Furthermore, the \descref{CoRR} axiom of \robccm{} implies that each load $L'$ to address $a$ committed from $ooo[i].rob$ during that period must read from either $S$ or another $S$ which writes $ccm.m$ before $S$.
	Thus, the execution of $L'$ in WMM cannot remove $S$ from $ps[i].ib$, and $L$ can indeed read $S$ from $ps[i].ib$.
\end{proof}

\begin{theorem}\label{thm: rob > wmm}
	WMM $\subseteq$ \robccm.
\end{theorem}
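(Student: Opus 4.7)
My plan is to construct, for each WMM execution $T$, a matching \robccm{} execution $T'$ by induction on the events of $T$, maintaining the invariant that $ccm.m$ equals $m$ (up to the extra store tags carried by \robccm), every $ooo[i].sb$ agrees with $ps[i].sb$, and the register file obtained after committing all instructions matches $ps[i].s$. Throughout $T'$ I will use perfect branch prediction and oracle load-value prediction: every time a load enters the ROB, I will have $vp.\predictFunc$ return exactly the value WMM is about to give that load. With this, no flushing rule in \robModelName{} ever fires, $\lf$ and $\ReEx$ stay inactive, and each instruction leaves $rob[i]$ only through in-order commit.

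The case analysis on WMM rules is uniform for \wmmNmRule, \wmmStRule, \wmmComRule, \wmmRecRule, and \wmmLdSbRule{}: I fetch the instruction eagerly (safe under perfect $bp$) and apply \robNmExRule/\robStExRule/\robLdBypassRule{} followed by the matching commit rule; the $\ComInst$ side condition $sb.\emptyFunc()$ follows from the $sb$ invariant. \wmmDeqSbRule{} is mirrored by \robDeqSbRule{} followed by \ccmStRule, and \wmmLdMemRule{} by \robLdReqRule{} and \ccmLdRule{} placed just before the commit, where $ccm.m[a] = m[a]$ holds inductively. The only delicate case is \wmmLdIbRule{} reading a stale value $v$ placed into $ps[i].ib$ by an earlier \wmmDeqSbRule{} event $D$ on address $a$: I slot the load's \robLdReqRule{} and its \ccmLdRule{} between the \ccmStRule{} of the dequeue that most recently set $m[a] = v$ and the \ccmStRule{} of $D$, so that ccm returns $v$. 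Because $vp$ has already supplied $v$ as the load's value, dependent instructions can execute without waiting for the real response.

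The main obstacle will be showing that this out-of-order insertion never trips \robModelName's safety nets $\findRecStFunc$ (which stalls on an older \RecInst{} or bypasses from an older same-address store), $\findLdKilledByStFunc$, and $\findLdKilledByLdFunc$. For $\findRecStFunc$: WMM's \wmmLdIbRule{} requires $\neg sb.\existFunc(a)$, and both \wmmRecRule{} and any same-address \wmmStRule{} on processor $i$ would clear $v$ from $ps[i].ib$ if they fired between $D$ and $L$; hence in $T$ every older same-processor store to $a$ is dequeued and every older $\RecInst$ on $i$ committed before $D$, and I can schedule $L$'s \robLdReqRule{} after those events so that the search returns $\top$. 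A parallel argument rules out $\findLdKilledByStFunc$ once I also defer the execution of any younger same-address store on $i$ until after $L$'s commit, so no such store ever fires \robStExRule{} while $L$ is still in the ROB. For $\findLdKilledByLdFunc$, I will schedule ccm load responses for each pair $(i,a)$ in program order of the loads; this ordering is realizable because WMM satisfies CoRR (Section~\ref{sec: wmm litmus test}), so the tags read by same-address loads on a given processor are monotone in the coherence order given by the \wmmDeqSbRule{} events. When an older load's response arrives, any younger same-address load not cut off by an intervening $\StInst\ a$ still has $ex = \Exe$ and is ignored by $\findLdKilledByLdFunc$. Weaving these per-processor, per-address constraints together into a single globally consistent total order for $T'$ is where I expect the bulk of the bookkeeping to lie.
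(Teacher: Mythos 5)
Your construction is essentially the paper's own proof: oracle load-value prediction to eliminate all flushes, ROB commits mirroring WMM's instantaneous executions one-for-one, and the key move of slotting an $ib$-load's \robLdReqRule{}/\ccmLdRule{} pair into the interval between the memory write of the store it reads and that store's overwrite, with the $ib$-persistence conditions and CoRR ruling out interference from older same-address stores, fences, and loads. The only divergences are bookkeeping: the paper fires every \robStExRule{} up front so that $\findLdKilledByStFunc$ never encounters an executed load (whereas you defer \emph{younger} same-address stores --- note that $\findLdKilledByStFunc$ only searches entries younger than the store, so the genuine threat is an \emph{older} store, which your dequeued-before-$D$ argument already covers), and the paper pins each $ib$-load's response immediately before the overwrite and argues via store tags rather than imposing a per-processor, per-address program order on responses.
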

\begin{proof}
	For any WMM execution $E$, we could construct a rule sequence $E'$ in \robccm{}, which has the same program behavior.
	The first part of $E'$ is to fetch all instructions executed in $E$ into the ROB of each processor (using \robFetchRule{} rules), then predict the value of every load to the result of that load in $E$ (using \robLdPredRule{} rules), and finally compute all $\NmInst$ instructions, load/store addresses and store data.
	The rest of $E'$ is to simulate each rule in WMM using the following way:
	\begin{itemize}
		\item When a store is dequeued from $ps[i].sb$ and written into $m$ in WMM, we also fire the \robDeqSbRule{} and \ccmStRule{} rules consecutively to write that store into $ccm.m$ and dequeues it from $ooo[i].sb$ in \robccm.
		\item When an instruction $I$ is executed by $ps[i]$ in WMM, we commit $I$ from $ooo[i].rob$ in \robccm.
		If $I$ is a $\LdInst$, we additionally schedule rules to execute this load in \robccm:
		\begin{itemize}
			\item If $I$ reads from $ps[i].sb$ in WMM, then we fire an \robLdBypassRule{} rule for $I$ right before it is committed from ROB.
			\item If $I$ reads from $m$ in WMM, then we fire an \robLdReqRule{} rule and a \ccmLdRule{} rule consecutively to execute it load right before it is committed from ROB.
			\item If $I$ reads from $ps[i].ib$, then we fire a \robLdReqRule{} rule and a \ccmLdRule{} rule consecutively to execute it right before the load value (which $I$ gets in WMM) is overwritten in $ccm.m$.
		\end{itemize}
	\end{itemize}
	Although the construction of the rule sequence is not in order, the sequence constructed after every step is always a valid rule sequence in \robccm{} for all instructions already executed by WMM.
	When we schedule rules for an instruction $I$ in \robccm{}, the constructed $E'$ does not contain any rule for instructions younger than $I$ in program order, and thus the rules for $I$ will not affect any existing rule in $E'$.
	Besides, all operations that the execution of $I$ depends on (\eg{}, executing an instruction older than $I$, or writing a store, which $I$ will read from $ccm.m$ in the scheduled rule, into $ccm.m$) are already in the constructed $E'$, so the scheduled rules for $I$ will not depend on any rule scheduled in the future construction.
	We can prove inductively that the following invariants hold after each construction step:
	\begin{enumerate}
		\item The states of $m$ and all $sb$ in WMM are the same as the states of $ccm.m$ and all $sb$ in \robccm{}.
		\item All instructions executed in WMM have also been executed (with the same results) and committed from ROBs in \robccm{}.
		\item There is no ROB flush or load re-execution in $E'$.
	\end{enumerate}
	The only non-trivial case is that a load $L$ to address $a$ reads the value of a store $S$ from $ps[i].ib$ in WMM.
	In this case, the overwrite of $S$ in $ccm.m$ must happen in the constructed $E'$, and $L$ must be in $ooo[i].rob$ at the time of overwrite.
	Since $S$ is inserted into $ps[i].ib$ and is not removed by any store or $\RecInst$ before $L$ is executed in WMM, there is no store to $a$ or $\RecInst$ fence older than $L$ in $ooo[i].rob$ or $ooo[i].sb$ right before the overwrite in \robccm{}.
	Thus the \robLdReqRule{} and \ccmLdRule{} rules are able to fire and read the value of $S$ at that time.
	The \descref{CoRR} axiom of WMM implies that any load $L'$ to $a$ which is older than $L$ must read from either $S$ or another store that writes $m$ before $S$.
	Thus $L'$ must get its result (not predicted values) before the overwrite in \robccm{}, and $L$ cannot be killed by $L'$.
	($L$ cannot be killed or re-executed by any store because all \robStExRule{} rules are fired at the beginning).
\end{proof}






\section{Modeling Data Dependency} \label{sec: data dep}

Current commercial processors do not use load-value speculation, and these processors can be modeled by \robDepccm, in which \robDepModelName{} is derived by removing the value predictor $vp$ and related operations (\eg{} the \robLdPredRule{} rule) from \robModelName.
The removal of $vp$ implies the enforcement of data-dependency ordering in hardware (\robDepccm).
For example, the behavior in Figure \ref{fig: wmm val pred} is forbidden by \robDepccm.
However, it requires inserting a $\RecInst$ fence between $I_4$ and $I_5$ to forbid this behavior in WMM.
This fence may cause performance loss because it would prevent the execution of loads that follow $I_5$ but do not depend on $I_4$.
This is an unnecessary cost if programs are running on commercial hardware captured by \robDepccm.
To avoid extra $\RecInst$ fences, we present \emph{\wmmDep{}}, an \IIE{} model equivalent to \robDepccm.
\wmmDep{} is derived from WMM by introducing \emph{timestamps} to exclude exactly the behaviors that violate data-dependency orderings.

\subsection{Enforcing Data Dependency with Timestamps} \label{sec: wmm dep intuition}

We derive our intuition for timestamps by observing how \robDepModelName{} works. 
In Figure \ref{fig: wmm val pred}, assume instruction $I_k$ ($k=1\ldots 5$) gets its result or writes memory at time $t_k$ in \robDepccm.
Then $t_5 \geq t_4$ because the result of $I_4$ is a source operand of $I_5$ (\ie{} the load address).
Since $I_4$ reads the value of $I_3$ from memory, $t_4 \geq t_3$, and thus $t_5\geq t_3\geq t_1$.
As we can see, the time ordering reflects enforcement of data dependencies.
Thus, a natural way to extend WMM to \wmmDep{} is to attach a timestamp to each value, which will, in turn, impose additional constraints on rule firing in WMM.
Now we explain how to add timestamps to WMM to get \wmmDep{}.


Let us assume there is a global clock which is incremented every time a store writes the monolithic memory.
We attach a timestamp to each value in WMM, \ie{} an architecture register value, the $\langle \mathrm{address,value} \rangle$ pair of a store, and a monolithic memory value.
The timestamp represents when the value is created.
Consider an instruction $r_3=r_1+r_2$. 
The timestamp of the new value in $r_3$ will be the maximum timestamp of $r_1$ and $r_2$.
Similarly, the timestamp of the $\langle \mathrm{address,value} \rangle$ pair of a store ($\StInst\ a\ v$), \ie{} the \emph{creation} time of the store, is the maximum timestamp of all source operands to compute $\langle a,v \rangle$.
The timestamp of a monolithic memory value is the time when the value becomes visible in memory, \ie{} one plus the time when the value is stored.

Next consider a load $L$ ($\LdInst\ a$) on processor $i$, which reads the value of a store $S$ ($\StInst\ a\ v$).
No matter how WMM executes $L$ (\eg{} by reading $sb$, memory, or $ib$), the timestamp $ts$ of the load value (\ie{} the timestamp for the destination register) is always the maximum of (1) the timestamp $ats$ of the address operand, (2) the time $rts$ when processor $i$ executes the last $\RecInst$ fence, and (3) the time $vts$ when $S$ becomes visible to processor $i$.
Both $ats$ and $rts$ are straightforward.
As for $vts$, if $S$ is from another processor $j$ ($j\neq i$), then $S$ is visible after it writes memory, so $vts$ is timestamp of the monolithic memory value written by $S$.
Otherwise, $S$ is visible to processor $i$ after it is created, so $vts$ is the creation time of $S$.

A constraint for $L$, which we refer to as \emph{stale-timing}, is that $ts$ must be $\le$ the time $ts_E$ when $S$ is overwritten in memory.
This constraint is only relevant when $L$ reads from $ib$.
This constraint is needed because a load cannot return a value in \robDepccm{} if the value has been overwritten in \ccmModelName{} at the time of load execution.

To carry out the above timestamp calculus for load $L$ in WMM, we need to associate the monolithic memory $m[a]$ with the creation time of $S$ and the processor that created $S$, when $S$ updates $m[a]$.
When $S$ is overwritten and its $\langle a,v \rangle$ is inserted into $ps[i].ib$, we need to attach the time interval $[vts, ts_E]$ (\ie{} the duration that $S$ is visible to processor $i$) to that $\langle a,v \rangle$ in $ps[i].ib$.

It should be noted that PC should never be involved in the timestamp mechanism of \wmmDep{}, because the PC of each instruction can be known in advance due to the branch predictor $bp$ in \robDepModelName.

By combining the above timestamp mechanism with the original WMM rules, we have derived \wmmDep.

\subsection{\wmmDep{} Operational Semantics}

Figure \ref{fig: wmm dep op model} shows the operational semantics of \wmmDep{}.
We list the things one should remember before reading the rules in the figure.
\begin{itemize}
	\item The global clock name is $gts$ (initialized as 0), which is incremented when the monolithic memory is updated.
	
	\item Each register has a timestamp (initialized as 0) which indicates when the register value was created.
	
	\item Each $sb$ entry $\langle a,v\rangle$ has a timestamp, \ie{} the creation time of the store that made the entry.
	Timestamps are added to the method calls on $sb$ as appropriate.
	
	\item Each monolithic memory location $m[a]$ is a tuple $\langle v, \langle i, sts \rangle, mts \rangle$ (initialized as $\langle 0, \langle \Hyphen, 0 \rangle, 0 \rangle$), in which $v$ is the memory value, $i$ is the processor that writes the value, $sts$ is the creation time of the store that writes the value, and $mts$ is the timestamp of the memory value (\ie{} one plus the time of memory write). 
	
	\item Each $ib$ entry $\langle a,v \rangle$ has a time interval $[ts_L, ts_U]$, in which $ts_L$ is the time when $\langle a,v \rangle$ becomes visible to the processor of $ib$, and $ts_U$ is the time when $\langle a,v \rangle$ is overwritten in memory and gets inserted into $ib$.
	Thus, the $\insertFunc$ method on $ib$ takes the time interval as an additional argument.
	
	\item Each processor $ps[i]$ has a timestamp $rts$ (initialized as 0), which records when the latest $\RecInst$ was executed by $ps[i]$.
\end{itemize}

Some of the timestamp manipulation is done inside the decode and execute methods of each processor $ps[i]$.
Therefore we define the following methods:
\begin{itemize}
	\item $\decodeTSFunc()$: returns a pair $\langle dIns, ts \rangle$, in which $dIns$ is the decoded instruction returned by the original $\decodeFunc()$ method, and $ts$ is the maximum timestamp of all source registers (excluding PC) of $dIns$.
	\item $\executeTSFunc(dIns, ldRes, ts)$: first calls the original method $\executeFunc(dIns, ldRes)$, and then writes timestamp $ts$ to the destination register of instruction $dIns$.
\end{itemize}
We also replace the  $\getRandAndRemoveFunc$ method on $ib$ with the following two methods to facilitate the check of the stale-timing constraint:
\begin{itemize}
	\item $\getRandFunc(a)$: returns the $\langle \mathrm{value, time\ interval} \rangle$ pair of any stale value for address $a$ in $ib$.
	If $ib$ does not contain any stale value for $a$, $\langle \epsilon, \Hyphen \rangle$ is returned.
	
	\item $\removeOlderFunc(a, ts)$: removes all stale values for address $a$, which are inserted into $ib$ when $gts < ts$, from $ib$.
\end{itemize}

\begin{figure}[!htb]
	\centering
	\small
	\begin{boxedminipage}{\columnwidth}
		\textbf{\wmmDepNmRule{} rule} ($\NmInst$ execution). \reduceRuleSpace
		\begin{displaymath}
		\frac{
			\langle \langle \NmInst, dst, v \rangle, ts \rangle = ps[i].\decodeTSFunc();
		}{
			ps[i].\executeTSFunc(\langle \NmInst, dst, v \rangle, \Hyphen, ts);
		}
		\end{displaymath} \reduceRuleEndSpace
		
		\textbf{\wmmDepLdSbRule{} rule} ($\LdInst$ execution: bypass from store). \reduceRuleSpace
		\begin{displaymath}
		\frac{
			\begin{array}{c}
			\langle \langle \LdInst, a, dst \rangle, ats \rangle = ps[i].\decodeTSFunc(); \\ 
			\whenFunc(ps[i].sb.\existFunc(a));\ \langle v, sts\rangle = ps[i].sb.\getYoungestFunc(a); \\
			\end{array}
		}{
			ps[i].\executeTSFunc(\langle \LdInst, a, dst\rangle, v, \maxFunc(ats, ps[i].rts, sts));
		}
		\end{displaymath} \reduceRuleEndSpace
		
		\textbf{\wmmDepLdMemRule{} rule} ($\LdInst$ execution: read memory). \reduceRuleSpace
		\begin{displaymath}
		\frac{
			\begin{array}{c}
			\langle \langle \LdInst, a, dst \rangle, ats \rangle = ps[i].\decodeTSFunc();\ \whenFunc(\neg ps[i].sb.\existFunc(a)); \\
			\langle v, \langle j, sts \rangle, mts\rangle = m[a];\ vts = (\ifFunc\ i \neq j\ \thenFunc\ mts\ \elseFunc\ sts); \\
			\end{array}
		}{
			\begin{array}{c}
			ps[i].\executeTSFunc(\langle \LdInst, a, dst\rangle, v, \maxFunc(ats, ps[i].rts, vts)); \\ 
			ps[i].ib.\removeAddrFunc(a); \\
			\end{array}
		}
		\end{displaymath} \reduceRuleEndSpace
		
		\textbf{\wmmDepLdIbRule{} rule} ($\LdInst$ execution: read stale value). \reduceRuleSpace
		\begin{displaymath}
		\frac{
			\begin{array}{c}
			\langle \langle \LdInst, a, dst \rangle, ats \rangle = ps[i].\decodeTSFunc(); \\ 
			\langle v, [ts_L, ts_U]\rangle = ps[i].ib.\getRandFunc(a); \\
			\whenFunc(\neg ps[i].sb.\existFunc(a)\ \wedge\ v\neq \epsilon\ \wedge\ ats \leq ts_U); \\
			\end{array}
		}{
			\begin{array}{c}
			ps[i].\executeTSFunc(\langle \LdInst, a, dst\rangle, v, \maxFunc(ats, ps[i].rts, ts_L)); \\ 
			ps[i].ib.\removeOlderFunc(a, ts_U); \\
			\end{array}
		}
		\end{displaymath} \reduceRuleEndSpace
		
		\textbf{\wmmDepStRule{} rule} ($\StInst$ execution). \reduceRuleSpace
		\begin{displaymath}
		\frac{
			\langle \langle \StInst, a, v \rangle, ts \rangle = ps[i].\decodeTSFunc();
		}{
			\begin{array}{c}
			ps[i].\executeTSFunc(\langle \StInst, a, v\rangle, \Hyphen, \Hyphen); \\
			ps[i].sb.\enqFunc(a, v, ts);\ ps[i].ib.\removeAddrFunc(a);
			\end{array}
		}
		\end{displaymath} \reduceRuleEndSpace
		
		\textbf{\wmmDepRecRule{} rule} ($\RecInst$ execution). \reduceRuleSpace
		\begin{displaymath}
		\frac{
			\langle \langle \RecInst \rangle, ts \rangle = ps[i].\decodeTSFunc();
		}{
			ps[i].\executeTSFunc(\langle \RecInst \rangle, \Hyphen, \Hyphen);\ ps[i].ib.\clearFunc();\ ps[i].rts \assignVal gts;
		}
		\end{displaymath} \reduceRuleEndSpace
		
		\textbf{\wmmDepComRule{} rule} ($\ComInst$ execution). \reduceRuleSpace
		\begin{displaymath}
		\frac{
			\langle \langle \ComInst \rangle, ts \rangle = ps[i].\decodeTSFunc();\ \whenFunc(ps[i].sb.\emptyFunc());
		}{
			ps[i].\executeTSFunc(\langle \ComInst \rangle, \Hyphen, \Hyphen);
		}
		\end{displaymath} \reduceRuleEndSpace
		
		\textbf{\wmmDepDeqSbRule{} rule} (dequeue store buffer). \reduceRuleSpace
		\begin{displaymath}
		\hspace{-1pt}\frac{
			\begin{array}{c}
			a = ps[i].sb.\getAnyAddr();\ \langle v', \langle i', sts' \rangle, mts \rangle = m[a]; \\
			ts_U = gts;\ \whenFunc(a \neq \epsilon); \\
			\end{array}
		}{
			\hspace{-3pt}\begin{array}{ll}
			\multicolumn{2}{c}{\langle v, sts \rangle \leftarrow ps[i].sb.\removeOldestFunc(a);} \\ 
			\multicolumn{2}{c}{m[a] \assignVal \langle v, \langle i, sts \rangle, gts+1 \rangle;\ gts \assignVal gts + 1;} \\
			\forall j \neq i. & \hspace{-7pt} \letFunc\ ts_L = (\ifFunc\ j \neq i'\ \thenFunc\ mts\ \elseFunc\ sts')\ \inFunc \\ 
			                  & \hspace{-7pt} \ifFunc\ \neg ps[j].sb.\existFunc(a)\ \thenFunc\ ps[j].ib.\insertFunc(a, v', [ts_L, ts_U]); \\
			\end{array} \hspace{-2pt}
		}
		\end{displaymath}
	\end{boxedminipage}
	\nocaptionrule\caption{\wmmDep{} operational semantics} \label{fig: wmm dep op model}
\end{figure}

In Figure \ref{fig: wmm dep op model}, \wmmDepNmRule{} and \wmmDepStRule{} compute the timestamps of  a $\NmInst$ instruction result and a store $\langle a,v \rangle$ pair from the timestamps of source registers respectively.
\wmmDepRecRule{} updates $ps[i].rts$ with the current time because a $\RecInst$ is executed.
\wmmDepDeqSbRule{} attaches the appropriate time interval to the stale value inserted into $ib$ as described in Section \ref{sec: wmm dep intuition}.

In all three load execution rules (\wmmDepLdSbRule, \wmmDepLdMemRule, and \wmmDepLdIbRule), the timestamp of the load result is $\ge$ the timestamp of the address operand ($ats$) and the latest $\RecInst$ execution time ($ps[i].rts$).
Besides, the timestamp of the load result is also lower-bounded by the beginning time that the value is readable by the processor of the load ($ps[i]$),
In \wmmDepLdSbRule{} and \wmmDepLdIbRule{}, this beginning time (\ie{} $sts$ or $ts_L$) is stored in the $sb$ or $ib$ entry; while in \wmmDepLdMemRule, this beginning time is one of the two times (\ie{} $sts$ and $mts$) stored in the monolithic memory location depending on whether the memory value $v$ is written by $ps[i]$ (\ie{} whether $i$ is equal to $j$).
In \wmmDepLdIbRule, the stale-timing constraint requires that $\maxFunc(ats, ps[i].rts, ts_L)$ (\ie{} the timestamp of the load value) is no greater than $ts_U$ (\ie{} the time when the stale value is overwritten).
Here we only compare $ats$ with $ts_U$, because $ts_L\leq ts_U$ is obvious, and the clearing of $ib$ done by $\RecInst$ fences already ensures $ps[i].rts \leq ts_U$.

\subsection{Litmus Tests for \wmmDep} \label{sec: wmm dep litmus}

\noindent\textbf{Enforcing data dependency:}
First revisit the behavior in Figure \ref{fig: wmm val pred}.
In \wmmDep, the timestamp of the source operand of $I_5$ (\ie{} the result of $I_4$) is 2, while the time interval of the stale value 0 for address $a$ in the $ib$ of P1 is $[0,0]$.
Thus $I_5$ cannot read 0 from $ib$, and the behavior is forbidden.
For a similar reason, \wmmDep{} forbids the behavior in Figure \ref{fig: wmm dep st to ld}, in which $I_4$ carries data dependency to $I_7$ transitively.
This behavior is also impossible in \robDepccm.

\noindent\textbf{Allowing other speculations:}
\wmmDep{} still allows the behavior in Figure \ref{fig: wmm mem dep pred} which can result from memory dependency speculation in hardware.
As we can see, \wmmDep{} still allows implementations to speculate on all dependencies other than data dependency.

\begin{figure}[!htb]
	\centering
	\begin{minipage}[b]{0.46\columnwidth}
		\centering
		\small
		\begin{tabular}{|l|l|}
			\hline
			Proc. P1 & Proc. P2 \\
			\hline
			$\!\! I_1: \StInst\ a\ 1 \!\!$ & $\!\! I_4: r_1=\LdInst\ b \!\!$ \\
			$\!\! I_2: \ComInst \!\!$      & $\!\! I_5: \StInst\ c\ r_1 \!\!$ \\
			$\!\! I_3: \StInst\ b\ a \!\!$ & $\!\! I_6: r_2=\LdInst\ c \!\!$ \\
			                               & $\!\! I_7: r_3 = \LdInst\ r_2\!\!$ \\
			\hline
			\multicolumn{2}{|l|}{\wmmDep{} forbids: $r_1=a,$} \\
			\multicolumn{2}{|l|}{$r_2=a, r_3=0$} \\
			\hline
		\end{tabular}
		\nocaptionrule \caption{Transitive data dependency} \label{fig: wmm dep st to ld}
	\end{minipage}
	\hspace{3pt}
	\begin{minipage}[b]{0.50\columnwidth}
		\small
		\centering
		\begin{tabular}{|l|l|}
			\hline
			Proc. P1 & Proc. P2 \\
			\hline
			$\!\! I_1: \StInst\ a\ 1 \!\!$ & $\!\! I_4: r_1=\LdInst\ b \!\!$ \\
			$\!\! I_2: \ComInst \!\!$      & $\!\! I_5: r_2 \!=\! r_1 \!+\! c \!-\! 1 \!\!$ \\
			$\!\! I_3: \StInst\ b\ 1 \!\!$ & $\!\! I_6: r_3=\LdInst\ r_2 \!\!$ \\
			                               & $\!\! I_7: r_4=\LdInst\ c \!\!$ \\
			                               & $\!\! I_8: r_5 = r_4 \!+\! a \!\!$ \\
			                               & $\!\! I_9: r_6=\LdInst\ r_5 \!\!$ \\
			\hline
			\multicolumn{2}{|l|}{\hspace{-3pt}\wmmDep{} allows: $r_1=1, r_2=c \!\!$} \\
			\multicolumn{2}{|l|}{$r_3=0, r_4=0, r_5=a, r_6=0$} \\
			\hline
		\end{tabular}
		\nocaptionrule \caption{RSW in \wmmDep} \label{fig: rsw}
	\end{minipage}
\end{figure}

\noindent\textbf{Loads to the same address:}
Remember that two loads to the same address can be executed out of order in \robDepModelName{} as long as they read from the same store.
\wmmDep{} captures this subtle optimization.
Consider the Read-from-Same-Write (RSW) program in Figure \ref{fig: rsw}.
The behavior is observable in \robDepccm, because $I_7$ to $I_9$ can be executed before $I_4$ to $I_6$.
It is fine for $I_6$ and $I_7$, which read the same address $c$, to be executed out-of-order, because they both read from the initialization store.
\wmmDep{} allows this behavior, because the timestamp of the address operand of $I_9$ is 0, and $I_9$ can read stale value 0 from $ib$.
(This behavior is also observable on POWER and ARM processors \cite{sarkar2011understanding,flur2016modelling}).

\subsection{Equivalence of \wmmDep{} and \robDepccm}

To simplify the proof, we change the $\findRecStFunc(idx)$ method in \robDepModelName{} to return $\epsilon$ whenever there is a $\RecInst$ fence at index smaller than $idx$ in ROB, \ie{} a load will always be stalled when there is an older $\RecInst$ in ROB.
This change in $\findRecStFunc$ only affects one scenario: a load used to be able to bypass from a store when there is a $\RecInst$ fence older than both the load and store in ROB, and operations dependent on the load result could used to be done before the $\RecInst$ is committed from ROB.
Since the bypass and those dependent operations are all local to the processor, they can still be performed with the same effects immediately after the $\RecInst$ is committed from ROB.
Thus, the change in $\findRecStFunc$ does not affect the semantics of \robDepModelName.

\begin{theorem}
	\robDepccm{} $\subseteq$ \wmmDep.
\end{theorem}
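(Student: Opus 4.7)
The plan is to follow the same skeleton as the proof of Theorem~\ref{thm: rob < wmm} ($\robccm{} \subseteq $ WMM), adapted so that we also track timestamps and verify the new constraints that $\wmmDep$ imposes on load execution. As before, we first argue that flushes and load re-executions in $\robDepccm$ can be discarded without changing observable program results, and we restrict attention to executions $E$ in $\robDepccm$ without any such events. For the simulation in $\wmmDep$, we fire a WMM-style rule whenever an instruction commits from ROB in $E$, and fire $\wmmDepDeqSbRule$ whenever a store in $E$ writes $ccm.m$. The two structural invariants from Theorem~\ref{thm: rob < wmm} carry over verbatim (monolithic memory and store buffer contents agree; committed instructions have been executed with the same result).

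The new work is defining and verifying a timestamp invariant. The natural choice is to let the $\wmmDep$ global clock $gts$ be incremented exactly when $E$ writes $ccm.m$; then each monolithic memory cell's $sts$ and $mts$ correspond to physical times in $E$, each store-buffer entry's creation time is the $gts$ value at the moment the store finished execution in ROB, each $ib$ entry's $[ts_L,ts_U]$ is the interval during which the overwritten value was visible to processor $i$, and each architectural register's timestamp is the $gts$ value at the moment the instruction producing it completed execution in $E$ (not commit time). I will prove by induction that, at each simulation step, the register, $sb$, $m$, and $ib$ timestamps used by $\wmmDep$ equal these physical times, and that for every instruction $I$ committed so far, every source operand of $I$ has a timestamp $\leq$ the physical completion time of $I$ in $E$. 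The latter is where the absence of value prediction is essential: without $vp$, every source register of $I$ must have been produced before $I$ could execute, so the physical execution times respect data dependency, which is precisely what the $\maxFunc(ats,\ldots)$ and $\executeTSFunc$ machinery records.

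Given this invariant, the four load execution rules of $\wmmDep$ are easy to discharge for $\wmmDepLdSbRule$, $\wmmDepLdMemRule$, $\wmmDepRecRule$, $\wmmDepStRule$, $\wmmDepComRule$, and $\wmmDepNmRule$ by the same case analysis as in Theorem~\ref{thm: rob < wmm}, with timestamps computed from the invariant. The only genuinely new case is $\wmmDepLdIbRule$: when a load $L$ on processor $i$ reads from a store $S$ whose value has already been overwritten in $ccm.m$ by the time $L$ commits in $E$. We know from Theorem~\ref{thm: rob < wmm}'s argument that $S$'s old value is still in $ps[i].ib$ at the simulation step for $L$ (using \descref{CoRR}, the absence of an intervening $\RecInst$, and the absence of an intervening $\StInst\,a$ in $ps[i].sb$). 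We therefore need to check the stale-timing constraint $ats \leq ts_U$, where $ts_U$ is the physical time at which $S$ was overwritten and $ats$ is the maximum source-operand timestamp of $L$.

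This is the main obstacle. I plan to prove it as follows: in $E$, the load $L$ must have physically completed its memory access by reading the value of $S$ before $S$ was overwritten in $ccm.m$, because in $\robDepccm$ a load response carries the tag of the store that was in memory at the time of the $\ccmLdRule$ fire, and a load whose returned store-tag does not match the surviving value would either (i) have been killed by a younger overwriting store via $\findLdKilledByStFunc$, which we excluded, or (ii) if still reading an already-overwritten value, necessarily completed before the overwrite. Since every source operand of $L$ was produced no later than $L$'s physical completion time, and that time is $\leq ts_U$, the invariant gives $ats \leq ts_U$. The corresponding bound for $ps[i].rts$ follows because the most recent $\RecInst$ committed before $L$ must have committed before $L$ reads, and in $\wmmDep$ that $\RecInst$ clears $ib$, so the $ib$ entry for $S$ was inserted only after $rts$ was set; thus $rts \leq ts_L \leq ts_U$. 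Once the stale-timing premise is verified, firing $\wmmDepLdIbRule$ reproduces the correct load value with the required timestamp, closing the induction.
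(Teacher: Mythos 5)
Your overall strategy is the paper's: restrict to flush-free, re-execution-free executions, reuse the simulation and structural invariants of Theorem~\ref{thm: rob < wmm}, add a timestamp invariant tying \wmmDep{} timestamps to the $gts$ clock of \robDepccm{}, and reduce the new work to verifying the stale-timing guard of \wmmDepLdIbRule{}. However, there is one genuine flaw: you state the timestamp invariant as an \emph{equality} (``the register, $sb$, $m$, and $ib$ timestamps used by \wmmDep{} equal these physical times''), and that equality is not inductively provable. In \wmmDep{} the timestamp of a $\NmInst$ result is the maximum of its source-register timestamps (\wmmDepNmRule{}), whereas in the given execution $E$ the corresponding \robNmExRule{} may fire at an arbitrarily later value of $gts$ than when its sources were produced (rule firing is nondeterministically delayed, and $gts$ advances with every intervening \ccmStRule{}). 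So already for the first delayed ALU instruction the destination register's \wmmDep{} timestamp is strictly less than its physical completion time, and your induction hypothesis breaks. The paper's invariant is the one-sided bound: each computed timestamp is $\le$ the $gts$ value at which the instruction obtains its result in \robDepccm{} (with equality only for the $mts$ field of memory and the $ts_U$ field of $ib$ entries, which are pinned to actual memory-write events). Everything you subsequently need --- in particular $ats \le$ (physical time the address operands are ready) $\le$ (physical time $L$'s \ccmLdRule{} fires) $\le ts_U$ --- uses only the $\le$ direction, so the repair is local: weaken your equalities to inequalities and re-run the induction.

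Two smaller points. First, your chain $rts \le ts_L \le ts_U$ is wrong in the middle: $ts_L$ is the time the overwritten value became visible, which can precede the last $\RecInst$ commit (e.g., $S$ writes memory, then a $\RecInst$ commits, then $S$ is overwritten). What you actually need, and what your own justification supports, is only $rts \le ts_U$: a $\RecInst$ committing after the overwrite would have cleared the $ib$ entry, so the last $\RecInst$ before $L$ commits no later than the overwrite. (The paper observes that the \wmmDepLdIbRule{} guard does not even test $rts$ for exactly this reason.) Second, your argument that $L$ physically reads $S$ before the overwrite should also cover the case where $L$ obtains $S$ by forwarding via \robLdBypassRule{}; that case is immediate since the bypass occurs before $S$ even reaches $ccm.m$, but it should be mentioned for the case split to be complete.
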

\begin{proof}
	We also introduce the global clock $gts$ to \robDepccm; $gts$ is be incremented whenever the \ccmStRule{} fires.
	This proof is almost the same as that of Theorem \ref{thm: rob < wmm} except for a new invariant: The timestamp computed for each instruction result (\ie{} $\NmInst$ result, store address and data, and load value) in \wmmDep{} is $\le$ the the value of $gts$ when the instruction gets its result in \robDepccm{}.
	See Appendix~\ref{sec:sup-wmm-d} for the complete proof.
\end{proof}

\begin{theorem}
	\wmmDep{} $\subseteq$ \robDepccm.
\end{theorem}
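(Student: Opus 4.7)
The plan is to adapt the construction from the proof of Theorem~\ref{thm: rob > wmm}, but without the upfront load-value prediction phase, since \robDepModelName{} lacks $vp$. The timestamps in \wmmDep serve as the schedule that dictates when to fire each \robDepccm rule. I augment \robDepccm with a ghost global clock $gts$, incremented by every \ccmStRule firing, to synchronize with the \wmmDep clock.

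Given a \wmmDep execution $E$, I would construct the \robDepccm rule sequence $E'$ by walking $E$ in order. For \wmmDepNmRule, \wmmDepStRule, \wmmDepComRule, and \wmmDepRecRule, I schedule the fetch, execution, and commit of the instruction in the natural way (using \robNmExRule, \robStExRule, \robNmLdRecRetRule, \robStRetRule, \robComRetRule as appropriate). For \wmmDepDeqSbRule, I fire \robDeqSbRule followed by \ccmStRule, so that $gts$ in \robDepccm matches \wmmDep. For \wmmDepLdSbRule, I fire \robLdBypassRule just before commit; for \wmmDepLdMemRule, I fire \robLdReqRule and \ccmLdRule just before commit. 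For \wmmDepLdIbRule reading a stale value with interval $[ts_L, ts_U]$, I fire \robLdReqRule and \ccmLdRule right before the \ccmStRule that pushes $gts$ past $ts_U$, \ie{}, just before the overwrite of that value in $m$. The load's commit occurs at its usual program-order position in ROB.

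On top of the state-matching and execution-matching invariants of Theorem~\ref{thm: rob > wmm}, the central new invariant to prove inductively is that the timestamp assigned in \wmmDep to any result upper-bounds the value of $gts$ in \robDepccm at which the same result becomes available (for register values, store $\langle a,v\rangle$ creations, and memory writes). Combined with the stale-timing premise $ats \leq ts_U$ of \wmmDepLdIbRule, this guarantees that $L$'s address register is already computed in \robDepccm by the moment we need to fire \robLdReqRule, so that \robLdAddrRule together with the \robNmExRule, \robStExRule, and commit rules on the instructions the address transitively depends on can all be scheduled earlier in $E'$.

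The main obstacle is verifying that $\findRecStFunc$ returns $\top$ at the moment \robLdReqRule fires for a stale-read load $L$. This reduces to two sub-claims: no $\RecInst$ older than $L$ remains in ROB, and no store to the same address $a$ older than $L$ remains in ROB or $sb$. The first holds because any such $\RecInst$, if it had been executed in \wmmDep before $L$, would have cleared the stale value from $ib$, contradicting the firing of \wmmDepLdIbRule; so in our construction that $\RecInst$ has already been committed in \robDepccm. The second holds because the \wmmDepLdIbRule premise $\neg ps[i].sb.\existFunc(a)$ ensures that every older store to $a$ was already dequeued in \wmmDep, and the construction mirrors those \wmmDepDeqSbRule firings with \robDeqSbRule/\ccmStRule sequences in $E'$ beforehand. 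The remaining sanity checks---the value-misprediction path being vacuous without $vp$, $\findLdKilledByLdFunc$ not killing still-$\Idle$ younger loads, and \descref{CoRR} consistency giving the right memory value to read---follow exactly as in the proof of Theorem~\ref{thm: rob > wmm}.
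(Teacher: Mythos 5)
Your overall strategy (schedule \robDepccm{} rules according to the \wmmDep{} timestamps and prove a timestamp-versus-availability invariant) is the right one, but the concrete schedule you give does not realize it, and the defect is substantive rather than presentational. You fire \robLdBypassRule{} (for \wmmDepLdSbRule) and \robLdReqRule{}/\ccmLdRule{} (for \wmmDepLdMemRule) ``just before commit,'' and for \wmmDepLdIbRule{} ``just before the overwrite'' at $ts_U$. In all three cases the load's result becomes available in $E'$ at a value of $gts$ that can strictly exceed the load's result timestamp in \wmmDep{}, because the commit is pinned to the load's in-order \wmmDep{} execution time and arbitrarily many memory writes can occur between the creation of a value and its in-order consumption. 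This falsifies the very invariant you name as central --- that the timestamp upper-bounds the $gts$ at which the result becomes available --- and the induction collapses for any younger instruction whose stale read depends on such a load. The paper's RSW litmus test (Figure~\ref{fig: rsw}) is a concrete counterexample: $I_7$ ($r_4=\LdInst\ c$) has result timestamp $0$, so $I_9$ may legally read the stale value $0$ of $a$ whose interval ends at $ts_U=0$; but \wmmDep{} cannot execute $I_7$ in program order until after $I_1$ and $I_3$ have written memory ($gts\ge 2$), so under your ``just before commit'' schedule $I_9$'s address is not available until $gts\ge 2$, too late to issue its request before $a$ is overwritten at the transition $gts=0\to 1$. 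Your remark that the dependency chain ``can all be scheduled earlier'' lists only \robNmExRule{} and \robStExRule{}, which confirms the oversight: loads in the chain must be executed early too.

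The repair is what the paper does: every result-producing rule --- including \robLdBypassRule{} or \robLdReqRule{}/\ccmLdRule{} for \emph{every} load, not just $ib$-reads --- is fired at an upper time exactly equal to the \wmmDep{} timestamp of its result (the maximum of the address timestamp, the last $\RecInst$ commit time, and the time the store being read becomes visible to the processor), with a fractional lower time in $(0,1)$ used to order same-$gts$ rules consistently and to keep each execution after its sources and before its own commit; only the commit rules are pinned to the \wmmDep{} execution order. With that schedule the invariant holds with equality, the stale-timing premise $ats\le ts_U$ yields precisely that the request can be issued before the overwrite, and your two sub-claims about $\findRecStFunc$ go through essentially as you argue --- except that the absence of older stores to $a$ must be established at the (early) firing time of \robLdReqRule{}, not at the load's commit time; this needs the paper's observation that the stale value persists in $ps[i].ib$ from the overwrite until the load executes, which rules out any intervening store to $a$ on that processor.
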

\begin{proof}
	We also introduce the global clock $gts$ to \robDepccm, $gts$ is be incremented whenever the \ccmStRule{} fires.
	This proof is similar to the proof for Theorem \ref{thm: rob > wmm} except for the following two points:
	\begin{enumerate}
		\item Without a value predictor in \robDepModelName, the time when an instruction can be executed in \robDepccm{} is subject to when the source operands of the instruction become ready.
		When constructing the execution for \robDepccm{} (to simulate the behavior of \wmmDep), we always fire the instruction execution rule as early as possible.
		In particular for each $\LdInst$ instruction, we execute it (by firing the \robLdAddrRule{} and \robLdBypassRule{} rules or the \robLdAddrRule{}, \robLdReqRule, and \ccmLdRule{} rules) as soon as the source operands are ready and the expected load value (\ie{} the value read in \wmmDep{}) becomes visible to the processor.
		
		\item A new invariant: the value of $gts$ in \robDepccm{} when an instruction gets its result ($\NmInst$ result, store address and data, or load value) is equal to the timestamp of that result in \wmmDep.
	\end{enumerate}
	See Appendix~\ref{sec:sup-wmm-d} for the complete proof.
\end{proof}

\section{Modeling Multi-Copy Non-Atomic Stores} \label{sec: non atomic mem}

Unlike the multi-copy atomic stores in WMM, stores in ARM and POWER multiprocessors are multi-copy non-atomic, \ie{} a store may become visible to different processors at different times.
This is caused by sharing store buffers or write-through caches in the memory system.
If multiple threads share a store buffer or a write-through cache, a store by any of these threads may be seen by all these threads before other processors.
Although we could tag stores with thread IDs in the store buffer, it is infeasible to distinguish between stores by different threads in the write-through cache.
While \ccmModelName{} cannot model such store behaviors, the storage subsystem of the Flowing model is believed to have captured precisely the behaviors of this multi-copy non-atomicity given a topology of the hierarchy of shared store buffers or write-through caches \cite{flur2016modelling}.

In this section, we first introduce a new \IIE{} model, \wmmSSB, which captures the multi-copy non-atomic store behaviors in a topology-independent way.
\wmmSSB{} is derived from WMM by changing the store buffers to a new conceptual device: \emph{dynamic store buffers}.
Next we introduce \robSSBflow, the physical model for multiprocessors with multi-copy non-atomic stores; \flowModelName{} is the memory abstraction taken from the Flowing model and \robSSBModelName{} is the processor model adapted from \robModelName{}.
We will finally prove \robSSBflow{} $\subseteq$ \wmmSSB.

\subsection{\wmmSSB{}: Copying From One Store Buffer into Another} \label{sec: st propagate}

We can model the multi-copy non-atomicity of stores by introducing a background rule to make copies of a store in a store buffer into other store buffers.
We refer to these store buffers with the ability of copying stores as \emph{dynamic store buffers}.
(We will still use store buffers to refer to dynamic store buffers in the rest of this section).
However, we need to ensure that all stores for an address can be put in a total order, \ie{} the coherent order ($\coEdge$), and the order seen by any processor is consistent with this total order (\ie{} SC for a single address).
{\wmmSSB{}} is an \IIE{} model to generate such behaviors.

To identify all the copies of a store in various store buffers, we assign a unique tag $t$ when a store is inserted in the store buffer, and this tag is copied when a store is copied from one store buffer to another.
When a store is committed from the store buffer to the memory, all its copies must be deleted from all the store buffers which have them. 
A store can be committed only if all its copies are the oldest store for that address in their respective store buffers.

All the stores for an address in a store buffer are kept as a strictly ordered list, where the \emph{youngest} store is the one that entered the store buffer \emph{last}. 
We make sure that all ordered lists are can be combined transitively to form a strict partial order, which has now to be understood in terms of the tags on stores because of the copies.
By the end of the program, this partial order on the stores for an address becomes the coherence order, so we refer to this partial order as the \emph{partial coherence order}.

Consider the states of store buffers shown in Figure \ref{fig: store propagate example}.
$A$, $B$, $C$ and $D$ are different stores to the same address, and their tags are $t_A$, $t_B$, $t_C$ and $t_D$, respectively.
$A'$ and $B'$ are copies of $A$ and $B$ respectively created by the background copy rule.
Ignoring $C'$, the partial coherence order contains: $t_D \coOrd t_B \coOrd t_A$ ($D$ is older than $B$, and $B$ is older than $A'$ in P2), and $t_C \coOrd t_B$ ($C$ is older than $B'$ in P3).
Note that $t_D$ and $t_C$ are not related here.

At this point, if we copied $C$ in P3 as $C'$ into P1, we would add a new edge $t_A \coOrd t_C$, which would break the partial order by introducing the cycle $t_A \coOrd t_C \coOrd t_B \coOrd t_A$.
Therefore copying of $C$ into P1 should not be allowed in this state.
Similarly, copying a store with tag $t_A$ into P1 or P2 should be forbidden because it would immediately create a cycle:  $t_A \coOrd t_A$.
In general, the background copy rule must be constrained so that invariance of the partial coherence order after copying is maintained.

\begin{figure}[!htb]
	\centering
	\includegraphics[width=0.75\columnwidth]{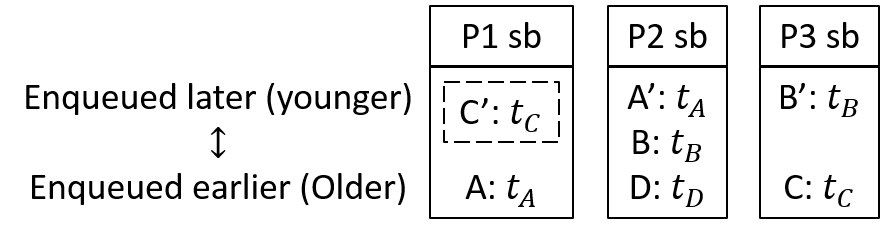}
	\nocaptionrule\caption{Example states of store buffers (primes are copies)} \label{fig: store propagate example}
\end{figure}

The operational semantics of \wmmSSB{} is defined by adding/replacing three rules to that of WMM (Figure \ref{fig: wmm rule}).
These new rules are shown in Figure \ref{fig: wmm ssb op model}: 
A new background rule \wmmSSBPropSt{} is added to WMM and the \wmmSSBStRule{} and \wmmSSBDeqSbRule{} rules replace the \wmmStRule{} and \wmmDeqSbRule{} rules of WMM, respectively.
Before reading these new rules, one should note the following facts:
\begin{itemize}
	\item The $\decodeFunc$ method now returns $\langle \StInst, a, v, t\rangle$ for a store, in which $t$ is the unique tag assigned to the store.
	Each store buffer entry becomes a tuple $\langle a,v,t \rangle$, in which $t$ is the tag.
	Tags are also introduced into the methods of $sb$ appropriately.
	
	\item The $sb$ now has the following three methods:
	\begin{itemize}
		\item $\hasTagFunc(t)$: returns $\True$ if $sb$ contains a store with tag $t$.
		\item $\getOldestFunc(a)$: returns the $\langle \mathrm{value, tag} \rangle$ pair of the oldest store to address $a$ in $sb$.
		It returns $\langle \epsilon,\epsilon\rangle$ if $sb$ does not contain $a$.
		\item $\getRandFunc()$: returns the $\langle \mathrm{address, value, tag} \rangle$ tuple of any store present in $sb$.
		It returns $\langle \epsilon,\epsilon,\epsilon\rangle$ if $sb$ is empty.
	\end{itemize}
	
	\item A new function $\noCycleFunc(a, t, j)$ is defined to check whether the background rule could copy a store with tag $t$ for address $a$ into the $sb$ of processor $j$.
	It returns $\True$ if the partial coherence order among the tags of all stores for address $a$ does not contain any cycle after doing the copy.
\end{itemize}

\begin{figure}[!htb]
	\centering
	\begin{boxedminipage}{\columnwidth}
		\small
		\textbf{\wmmSSBStRule{} rule} ($\StInst$ execution). \reduceRuleSpace
		\begin{displaymath}
		\frac{
			\langle \StInst, a, v, t \rangle = ps[i].\decodeFunc();
		}{
			\begin{array}{c}
			ps[i].\executeFunc(\langle \StInst, a, v, t \rangle, \Hyphen); \\
			ps[i].sb.\enqFunc(a, v, t);\ ps[i].ib.\removeAddrFunc(a); \\
			\end{array}
		}
		\end{displaymath} \reduceRuleEndSpace
		
		\textbf{\wmmSSBDeqSbRule{} rule} (dequeue store buffer). \reduceRuleSpace
		\begin{displaymath}
		\frac{
			\begin{array}{c}
			a = ps[i].sb.\getAnyAddr();\ old = m[a];\ \langle v,t \rangle = ps[i].sb.\getOldestFunc(a); \\
			\multicolumn{1}{l}{\whenFunc(a \neq \epsilon\ \wedge} \\
			\multicolumn{1}{r}{\forall j.\ \neg ps[j].sb.\hasTagFunc(t)\ \vee\ ps[j].sb.\getOldestFunc(a) == \langle v,t \rangle);} \\
			\end{array}
		}{
			\begin{array}{ll}
			\multicolumn{2}{c}{m[a] \assignVal v;} \\
			\forall j. & \ifFunc\ ps[j].sb.\hasTagFunc(t)\ \thenFunc\ ps[j].sb.\removeOldestFunc(a); \\ 
			& \elseFunc\ \ifFunc\ \neg ps[j].sb.\existFunc(a)\ \thenFunc\ ps[j].ib.\insertFunc(a,old); \\
			\end{array}
		}
		\end{displaymath} \reduceRuleEndSpace
		
		\textbf{\wmmSSBPropSt{} rule} (copy store from processor $i$ to $j$). \reduceRuleSpace
		\begin{displaymath}
		\frac{
			\begin{array}{c}
				\langle a, v, t \rangle = ps[i].sb.\getRandFunc();\ \whenFunc(a\neq \epsilon\ \wedge\ \noCycleFunc(a, t, j)); \\
			\end{array}
		}{
			ps[j].sb.\enqFunc(a,v,t);\ ps[j].ib.\removeAddrFunc(a);
		}
		\end{displaymath}
	\end{boxedminipage}
	\nocaptionrule\caption{\wmmSSB{} operational semantics} \label{fig: wmm ssb op model}
\end{figure}

In Figure \ref{fig: wmm ssb op model}, \wmmSSBStRule{} simply introduces the store tag to the original \wmmStRule{} rule.
In \wmmSSBDeqSbRule{}, when we write a store ($\langle a, v, t\rangle$) into memory, we ensure that each copy of this store is the oldest one to address $a$ in its respective $sb$.
The old memory value is inserted into the $ib$ of each processor whose $sb$ does not contain address $a$.
\wmmSSBPropSt{} copies a store ($\langle a,v,t \rangle$) from $ps[i]$ to $ps[j]$.
The check on $\noCycleFunc(a,t,j)$ guarantees that no cycle is formed in the partial coherence order after the copy.
Copying stores from $ps[i]$ to $ps[i]$ will be automatically rejected because $\noCycleFunc$ will return $\False$.
Since we enqueue a store into $ps[j].sb$, we need to remove all stale values for address $a$ from $ps[j].ib$.

\wmmSSB{} still use the \wmmComRule{} to execute a $\ComInst$ fence, but this rules has very different implications here.
In \wmmSSB{}, a store cannot be moved from $sb$ to memory unless all its copies in other store buffers can be moved at the same time.
Hence the effect of a $\ComInst$ fence is not local; it implicitly affects all other store buffers.
In literature, such fences are known as \emph{cumulative}.

\subsection{Litmus Tests for \wmmSSB}

We show by examples that \wmmSSB{} allows multi-copy non-atomic store behaviors, and that fences in \wmmSSB{} have the cumulative properties similar to those in POWER and ARM memory models.

We first consider the Write-Write Causality (WWC) example in Figure \ref{fig: wwc wmm ssb} (which is forbidden by WMM).
\wmmSSB{} allows this behavior by first copying $I_1$ into the $sb$ of P2 to let all instructions on P2 and P3 proceed.
$I_1$ will write memory only after $I_5$ has done so.
This behavior is allowed in hardware in case a store buffer is shared by P1 and P2 but not P3.
To forbid this behavior in \wmmSSB, a $\ComInst$ fence is required between $I_2$ and $I_3$ on P2 to push $I_1$ into memory.
The inserted $\ComInst$ fence has a cumulative global effect in ordering $I_1$ before $I_3$ (and hence $I_5$).

Figure \ref{fig: iriw wmm ssb} shows another well-known example called Independent Reads of Independent Writes (IRIW), which is forbidden by WMM.
\wmmSSB{} allows this by copying $I_1$ and $I_2$ into the $sb$ of P3 and P4 respectively.
This behavior is possible in hardware, in case P1 and P3 shares a store buffer while P2 and P4 shares a different one.

To forbid the behavior in Figure \ref{fig: iriw wmm ssb} in \wmmSSB, we can insert a $\ComInst$ fence between $I_3$ and $I_4$ on P3, and another $\ComInst$ fence between $I_6$ and $I_7$ on P4.
As we can see, a $\ComInst$ followed by a $\RecInst$ in \wmmSSB{} has a similar effect as the POWER $\syncInst$ fence and the ARM $\dmbInst$ fence.
Cumulation is achieved by globally advertising observed stores ($\ComInst$) and preventing later loads from reading stale values ($\RecInst$).

\begin{figure}[!htb]
	\centering
	\begin{minipage}{\columnwidth}
		\centering
		\small
		\begin{tabular}{|l|l|l|}
			\hline
			Proc. P1 & Proc. P2 & Proc. P3 \\
			\hline
			$I_1: \StInst\ a\ 2$ & $I_2: r_1=\LdInst\ a$        & $I_4: r_2=\LdInst\ b$     \\
			                     & $I_3: \StInst\ b\ (r_1 - 1)$ & $I_5: \StInst\ a\ r_2$     \\
			\hline
			\multicolumn{3}{|l|}{RC forbids: $r_1=2,\ r_2=1,\ m[a]=2$} \\
			\hline
		\end{tabular}
		\nocaptionrule \caption{WWC in \wmmSSB} \label{fig: wwc wmm ssb}
	\end{minipage}\\
	\vspace{8pt}
	\begin{minipage}{\columnwidth}
		\begin{tabular}{|l|l|l|l|}
			\hline
			Proc. P1 & Proc. P2 & Proc. P3 & Proc. P4 \\
			\hline
			$I_1: \StInst\ a\ 1$ & $I_2: \StInst\ b\ 1$ & $I_3: r_1=\LdInst\ a$ & $I_6: r_3=\LdInst\ b$ \\
			                     &                      & $I_4: \RecInst$       & $I_7: \RecInst$ \\
			                     &                      & $I_5: r_2=\LdInst\ b$ & $I_8: r_4=\LdInst\ a$ \\
			\hline
			\multicolumn{4}{|l|}{\wmmSSB{} allows: $r_1=1,\ r_2=0,\ r_3=1,\ r_4=0$} \\
			\hline
		\end{tabular}
		\nocaptionrule \caption{IRIW in \wmmSSB} \label{fig: iriw wmm ssb}
	\end{minipage}
\end{figure}

\subsection{\robSSBflow: a Physical Model for Muiltiprocessors with Multi-Copy Non-Atomic Memory}

For clarity of discussion, we reiterate the important concepts of the original Flowing model (see Section 7.1$\sim$7.4 in \cite{flur2016modelling}) while describing the changes made for \flowModelName{} due the differences in the fences of \wmmSSB{} and ARM.
\flowModelName{} has $n$ ports, each of which will be connected to a processor.
In addition to the $\reqLdFunc(idx, a)$ and $\reqStFunc(a,v,t)$ port methods ($idx$ is the ROB index, \ie{} the load request tag, and $t$ is the unique store tag), \flowModelName{} provides a $\reqComFunc()$ port method so that the processor could send a $\ComInst$ fence (instead of the ARM $\dmbInst$ fence) as a barrier request into the memory system.
Inside \flowModelName, there are $k$ segments $s[1\ldots k]$, and a monolithic memory $m$ (same as the one in \ccmModelName).
All the segments and the monolithic memory are connected together into a tree rooted at $m$.
Each segment contains a list of memory requests (\ie{} loads, stores and $\ComInst$).
Each port of \flowModelName{} is connected to a segment in the tree, and the port methods (\ie{} $\reqLdFunc$, $\reqStFunc$ and $\reqComFunc$) simply add the new request to the top of the list of the connected segment.
\flowModelName{} has the following three internal rules:
\begin{enumerate}
	\item \textbf{\flowReorderRule:} Two consecutive requests $r_{new}$ and $r_{old}$ in the same segment ($r_{new}$ is closer to the top of the list of the segment) can be reordered except for the following two cases:
	\begin{enumerate}
		\item $r_{new}$ and $r_{old}$ are memory accesses to the same address.
		\item $r_{new}$ is a $\ComInst$ and $r_{old}$ is a store.
	\end{enumerate}
	\item \textbf{\flowBypassRule:} When a load request $r=\langle \LdInst, pid, idx, a\rangle$ (\ie{} a load to address $a$ from the ROB entry with index $idx$ of processor $pid$) and a store request $r'=\langle \StInst, a, v, t\rangle$ (\ie{} a store to address $a$ with data $v$ and tag $t$) are two consecutive requests in the segment and $r'$ is closer to the bottom of the segment, we can remove $r$ from the segment and call method $\respLdFunc(idx, v, t)$ of processor $pid$ (\ie{} $r$ is satisfied by $r'$).
	\item \textbf{\flowFlowRule:} The request $r$ at the bottom of segment $s[i]$ can be removed from $s[i]$.
	If the parent of $s[i]$ is another segment $s[j]$, we add $r$ to the top of $s[j]$ (\ie{} $r$ flows from $s[i]$ to its parent $s[j]$).
	Otherwise, the parent of $s[i]$ is $m$, and we take the following actions according to the type of $r$:
	\begin{itemize}
		\item If $r$ is $\langle \LdInst, pid, idx, a \rangle$, we call method $\respLdFunc(idx, v, t)$ of processor $pid$, in which pair $\langle v, t\rangle$ is the current state of $m[a]$. 
		\item If $r$ is $\langle \ComInst, pid \rangle $, \ie{} a $\ComInst$ fence from processor $pid$, we call method $\respComFunc()$ (which is defined later) of processor $pid$ to indicate the completion of the fence.
		\item If $r$ is $\langle \StInst, a, v, t\rangle$, we update $m[a]$ to be $\langle v, t\rangle$.
		No response is sent for the store request.
	\end{itemize}
\end{enumerate}

We adapt \robModelName{} to \robSSBModelName{} to fit the new memory system.
The operational semantics (the changed part) and the interface methods of \robSSBModelName{} are shown in Figures \ref{fig: rob ssb rule} and \ref{fig: rob ssb interface} respectively, where $\fm$ represents the \flowModelName{} interface port connected to the processor, and method $rob.\setCommitFunc(en)$ sets the commit slot of ROB to $en$.

The first change for \robSSBModelName{} is to remove $sb$ from each processor, because store buffering is already modeled inside \flowModelName.
Thus when a store is committed from ROB in rule \robSSBStRetRule{}, the store request is directly sent to \flowModelName.
The second change is sending a $\ComInst$ request to \flowModelName{} when a $\ComInst$ fence reaches the commit slot of ROB as shown in rule \robSSBComRetRule{}.
When the $\ComInst$ response comes back from \flowModelName{} via method $\respComFunc$, the fence is committed from ROB.
To avoid duplicate $\ComInst$ requests to \flowModelName{}, we change function $\initExFunc$ to also return $\Idle$ for a $\ComInst$ fence, and hence the $ex$ field of a $\ComInst$ fence will be set to $\Idle$ in rule \robFetchRule{}.
When the $\ComInst$ request is sent to \flowModelName{} in rule \robSSBComRetRule{}, we set the $ex$ field to $\Exe$.

The last change is about detecting whether the out-of-order execution of loads to the same address in the same processor violates SC for single address.
The detection is harder in case of \flowModelName{} than that in \ccmModelName, because loads can be satisfied in any segment or monolithic memory inside \flowModelName, while loads can only be satisfied in the monolithic memory in case of \ccmModelName.
The original Flowing model has specified complicated conditions of this check to avoid unnecessary flush of loads, but we believe those conditions may still cause some loads to be flushed unnecessarily.
Instead of further complicating the check, we simply guarantee that loads to the same address are issued to \flowModelName{} in order.
Since \flowModelName{} keeps the order of memory accesses to the same address, this can ensure SC for single address.
Rule \robSSBLdReqRule{} enforces the in-order issue by killing younger loads in the same way as the \robStExRule{} rule does.
This also makes \robSSBModelName+\ccmModelName{} obey the \descref{CoRR} and \descref{CoWR} axioms (see Appendix~\ref{sec:sup-sc} for the proof).

\begin{figure}[!htb]
	\centering
	\begin{boxedminipage}{\columnwidth}
		\small
		\textbf{\robSSBStRetRule{} rule} (commit $\StInst$ from ROB). \reduceRuleSpace
		\begin{displaymath}
		\frac{
			\langle pc, npc, ins, \Done \rangle = rob.\getCommitFunc();\ \langle \StInst, srcs, a, v, t \rangle = ins;
		}{
			rob.\deqFunc();\ \fm.\reqStFunc(a, v, t);\ \rf.\updateFunc(ins);
		}
		\end{displaymath} \reduceRuleEndSpace
		
		\textbf{\robSSBComRetRule{} rule} ($\ComInst$ issue). \reduceRuleSpace
		\begin{displaymath}
		\frac{
			\langle pc, npc, \langle \ComInst \rangle, \Idle \rangle = rob.\getCommitFunc();
		}{
			rob.\setCommitFunc(\langle pc, npc, \langle \ComInst \rangle, \Exe \rangle);\ \fm.\reqComFunc();
		}
		\end{displaymath} \reduceRuleEndSpace
		
		\textbf{\robSSBLdReqRule{} rule} ($\LdInst$ execution by sending request to \flowModelName). \reduceRuleSpace
		\begin{displaymath}
		\frac{
			\begin{array}{c}
			\langle idx, \langle pc, npc, \langle \LdInst, srcs, dst, a, v, \epsilon \rangle, \Idle \rangle \rangle = rob.\getReadyFunc(); \\
			list = rob.\findLdKilledByStFunc(idx, a);\ res = rob.\findRecStFunc(idx, a); \\ 
			\whenFunc(a \neq \epsilon\ \wedge\ res == \top\ \wedge\ \neg \lf[idx]); \\
			\end{array}
		}{
			\begin{array}{c}
			\fm.\reqLdFunc(idx, a);\ rob[idx].ins \assignVal \Exe; \\
			\begin{array}{l}
			\mathbf{for}\ kIdx = list.\mathsf{first}()\ \mathbf{to}\ list.\mathsf{tail}() \\
			\qquad \ifFunc\ rob[kIdx].ex == \Done\ \thenFunc \\
			\qquad \qquad rob.\flushFunc(kIdx, rob[kIdx].pc);\ \mathbf{break}; \\
			\qquad \elseFunc\ rob[kIdx].ex \assignVal \ReEx; \\
			\end{array} \\
			\end{array}
		}
		\end{displaymath}
	\end{boxedminipage}
	\nocaptionrule\caption{\robSSBModelName{} operational semantics} \label{fig: rob ssb rule}
	\vspace{8pt}
	\centering
	\begin{boxedminipage}{\columnwidth}
		\small
		$\respLdFunc(idx, res, t)$ \textbf{method:} \reduceRuleSpace
		\begin{displaymath}
		\begin{array}{l}
		\ifFunc\ \lf[idx]\ \thenFunc\ \lf[idx] \assignVal \False;\ \textcolor{blue}{/\!/\ \mathrm{wrong\Hyphen{}path\ load\ response}}  \\
		\elseFunc \\
		\quad \letFunc\ \langle pc, npc, \langle \LdInst, srcs, dst, a, v, \epsilon \rangle, ex \rangle = rob[idx]\ \inFunc \\
		\quad \quad \ifFunc\ ex == \ReEx\ \thenFunc\ rob[idx].ex \assignVal \Idle; \\
		\quad \quad \elseFunc\ \codecomment{/\!/\ \mathrm{save\ load\ result\ and\ check\ value\ misprediction}} \\
		\quad \quad \quad rob[idx] \assignVal \langle pc, npc, \langle \LdInst, srcs, dst, a, res, t \rangle, \Done \rangle; \\
		\quad \quad \quad \ifFunc\ v \neq \epsilon\ \wedge\ v \neq res\ \thenFunc\ rob.\flushFunc(idx+1, npc); \\
		\end{array}
		\end{displaymath} \reduceRuleEndSpace
		
		$\respComFunc()$ \textbf{method:} $rob.\deqFunc(); \rf.\updateFunc(\langle \ComInst \rangle);$
	\end{boxedminipage}
	\nocaptionrule\caption{\robSSBModelName{} interface methods} \label{fig: rob ssb interface}
\end{figure}

\subsection{\wmmSSB{} Abstracting \robSSBflow} \label{sec: wmm-s > hw}

\begin{theorem}
	\robSSBflow{} $\subseteq$ \wmmSSB.
\end{theorem}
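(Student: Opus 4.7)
The plan is to construct, for each program behavior produced by $\robSSBflow$, a corresponding execution in $\wmmSSB$ yielding the same behavior, mirroring the strategy used in the proof of Theorem~\ref{thm: rob < wmm}. As a preliminary I would observe that ROB flushes and load re-executions in $\robSSBflow$ do not affect the program-visible result, so attention can be restricted to flush-free, re-execution-free traces. For such a trace $E$, I would simulate its events in $\wmmSSB$ as follows: when an instruction is committed from the ROB of processor $i$ in $E$, fire the matching $\wmmSSB$ execution rule on $ps[i]$; when a store finally updates the monolithic memory via the topmost \flowFlowRule{} of $\flowModelName$, fire \wmmSSBDeqSbRule{} in $\wmmSSB$ for that store.

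The subtle events are those occurring inside $\flowModelName$ before the store reaches memory. A \flowBypassRule{} that satisfies a load on processor $j$ from a store originally issued by processor $i \neq j$ would be matched in $\wmmSSB$ by first firing \wmmSSBPropSt{} to copy the store (with its unique tag) from $ps[i].sb$ into $ps[j].sb$, so that the subsequent $\wmmSSB$ load execution uses \wmmLdSbRule{}. Load-reorderings internal to $\flowModelName$ that let a younger load on processor $j$ race ahead and cause an older same-processor load to appear to read a stale value would be matched by the $ib$ machinery of $\wmmSSB$: I would schedule the relevant \wmmSSBDeqSbRule{} so that the stale value has been deposited into $ps[j].ib$ by the time the in-order $\wmmSSB$ trace reaches the older load, which then reads from $ib$ via \wmmLdIbRule{}.

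The main obstacle is proving that the $\noCycleFunc(a,t,j)$ side-condition on \wmmSSBPropSt{} is never violated by the copies this construction introduces. I intend to maintain the invariant that the partial coherence order among tags of same-address stores in $\wmmSSB$ is a subrelation of the order induced by the arrangement of those stores along the tree of segments of $\flowModelName$: since segments cannot reorder two same-address accesses (case~(a) of \flowReorderRule{}) and the topology is a tree rooted at $m$, the induced relation is acyclic, so any copy mirroring a bypass preserves acyclicity. I would pair this with the invariant that, for each in-flight store $S$ to address $a$, the set of processors whose $sb$ holds a tagged copy of $S$ in $\wmmSSB$ equals the set of processors that can still observe $S$ via some reachable bypass in the current $\flowModelName$ configuration. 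A final wrinkle is the $\ComInst$ fence: a $\ComInst$ from processor $i$ can only reach $m$ after every store still beneath it in its segments has flowed down (case~(b) of \flowReorderRule{}), which corresponds in $\wmmSSB$ to the \wmmComRule{} requirement that $ps[i].sb$ be empty of all stores, native and copied. Combining these invariants—state equality of $m$ and store buffers, coherence-order acyclicity, and the $\ComInst$-drain equivalence—then yields a step-by-step simulation and hence $\robSSBflow \subseteq \wmmSSB$.
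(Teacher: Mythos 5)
Your overall skeleton matches the paper's: restrict to flush-free, re-execution-free traces, drive the simulation off ROB commits and off stores reaching the monolithic memory, use \wmmSSBPropSt{} for cross-processor observation, use $ib$ for stale reads, and argue acyclicity of the partial coherence order from the tree topology plus the same-address non-reordering clause of \flowReorderRule{} (that last part is essentially the paper's own argument). However, your central invariant --- that $ps[j].sb$ holds a copy of $S$ exactly for those processors $j$ that \emph{can still observe} $S$ via some reachable bypass in the current \flowModelName{} configuration --- is the wrong one, and the construction built on it breaks in two places. First, \wmmComRule{} requires $ps[i].sb$ to be empty, but when the $\ComInst$ response returns from \flowModelName{} there may well be stores by \emph{other} processors sitting in segments on the path from port $i$ to the root (issued after the fence passed, or never destined to be read by $i$); under your invariant these would populate $ps[i].sb$ and block the fence. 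Second, \wmmLdSbRule{} is \emph{forced} whenever the address is present in $sb$ and always returns the \emph{youngest} entry; under your invariant a committed load that in \flowModelName{} read from memory, or from a store that has since been overtaken on its path by a newer same-address store, would be handed the wrong value. The timing is also off: copying at the moment of the \flowBypassRule{} event desynchronizes $sb$ contents from the in-order commit stream that drives the \wmmSSB{} trace.

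The paper's fix is the notion of a store being \emph{observed by commits of} $ooo[i]$ (committed by $i$, or returned by a load already committed by $i$): invariant~1 states that $ps[i].sb$ contains \emph{exactly} the stores that are both still in some segment and observed by commits of $ooo[i]$, and the \wmmSSBPropSt{} rule is fired lazily, right before the committed load reads $ps[i].sb$, and only if the store is not yet observed. The rule used to execute a load then depends on where its source store $S$ sits \emph{at commit time} (in $\fm.m$, already overwritten, or still in a segment), not on how the load was satisfied inside \flowModelName. Finally, you would still need the \descref{CoRR} and \descref{CoWR} axioms of \robSSBflow{} to show that the copied store can be inserted as the youngest same-address entry of $ps[i].sb$ without violating the ordering invariant, and that nothing evicts the stale value from $ps[i].ib$ before the load executes; your proposal does not supply these arguments. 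As written, the proof has a genuine gap that cannot be closed without replacing the reachability-based invariant by the observation-based one.
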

\begin{proof}
	Similar to the proof of Theorem \ref{thm: rob < wmm}, we only consider executions in \robSSBflow{} without any ROB flush or load re-execution.
	At any moment in \robSSBflow, we define a store $S$ is \emph{observed by commits of processor $i$ ($ooo[i]$)} if and only if $S$ has been committed from $ooo[i].rob$ or $S$ has been returned by a load $L$ which has been committed from $ooo[i].rob$.
	For two stores $S_1$ and $S_2$ both observed by commits of a processor, we say $S_1$ is \emph{closer to the root} than $S_2$ in \flowModelName, when the segment of $S_1$ is closer to the root than that of $S_2$ in the tree hierarchy of \flowModelName{} (assuming each edge in the tree has length 1), or when $S_1$ and $S_2$ are in the same segment and $S_1$ is closer to the bottom of the segment.

	For any execution $E$ in \robSSBflow, we simulate it in \wmmSSB{} using the following way:
	\begin{itemize}
		\item When a store request flows into the monolithic memory inside \flowModelName{} ($\fm.m$) using the \flowFlowRule{} rule of \robSSBflow{}, \wmmSSB{} fires a \wmmSSBDeqSbRule{} rule to dequeue that store from all store buffers and write it into the monolithic memory ($m$).
		\item When a $\NmInst$, $\RecInst$, or $\StInst$ instruction is committed from ROB in \robSSBflow{}, we execute that instruction in \wmmSSB.
		\item When a $\ComInst$ fence is committed from ROB in the $\respComFunc$ method called by a \flowFlowRule{} rule in \robSSBflow{}, we execute that fence in \wmmSSB.
		\item When a $\LdInst$ instruction $L$, which reads from a store $S$, is committed from $ooo[i].rob$ in \robSSBflow{}, we execute $L$ using the following actions in \wmmSSB{} according to the status of $S$ right before the commit of $L$ in \robSSBflow:
		\begin{itemize}
			\item If $S$ is in $\fm.m$ at that time, then \wmmSSB{} executes $L$ by reading from $m$.
			\item If $S$ has been overwritten in $\fm.m$ before the commit of $L$, then $L$ can read $ps[i].ib$ in \wmmSSB{}.
			\item If $S$ is in a segment inside \flowModelName{} at that time, then $L$ should read $ps[i].sb$ in \wmmSSB{}.
			In case $S$ is not observed by commits of $ooo[i]$ right before the commit of $L$, we fire a \wmmSSBPropSt{} rule to copy $S$ into $ps[i].sb$ right before $L$ reads $ps[i].sb$.
		\end{itemize}
	\end{itemize}
	After each step of the simulation, we could prove inductively that the following invariants hold:
	\begin{enumerate}
		\item All stores in $ps[i].sb$ in \wmmSSB{} are exactly the set of stores, which are in the segments of \flowModelName{} and observed by commits of $ooo[i]$. 
		The segments that contain these stores must be on the path from the root to port $i$ in the tree hierarchy of \flowModelName{}.
		\item In case $S_1$ and $S_2$ are two stores to the same address in $ps[i].sb$, $S_1$ is older than $S_2$ in $ps[i].sb$ if and only if $S_1$ is closer to the root than $S_2$ in \flowModelName{}.
	\end{enumerate}
	We do not take any action in \wmmSSB{} when requests flow between segments in \flowModelName{} or when requests are reordered in \flowModelName{}.
	These operations in \robSSBflow{} do not affect the above invariants.

	Here we only consider the final case, \ie{} a load $L$ to address $a$, which reads from a store $S$, is committed from $ooo[i].sb$ in \robSSBflow{}, and $S$ remains in a segment of \flowModelName{} at the commit time of $L$.
	If $S$ has already been observed by commits of $ooo[i]$ before the commit of $L$, the \descref{CoRR} and \descref{CoWR} axioms of \robSSBflow{} imply that there cannot be any store to $a$ which is also observed by commits of $ooo[i]$ and is further from root than $S$ in \flowModelName{}.
	Thus, $S$ must be the youngest store to $a$ in $ps[i].sb$ and $L$ can read from it in \wmmSSB.
	Otherwise, $S$ is not observed by commits of $ooo[i]$ right before the commit of $L$, and the \descref{CoRR} and \descref{CoWR} axioms of \robSSBflow{} imply that $S$ must be further from root in the tree hierarchy of \flowModelName{} than any store observed by commits of $ooo[i]$ at that time.
	Therefore, if we insert $S$ into $ps[i].sb$, both invariants still hold.
	Since there is no cycle in the tree hierarchy of \flowModelName{} and the order of stores to the same address in any store buffer in \wmmSSB{} is the same as the order of distance from the root of those stores in \flowModelName{}, the $\noCycleFunc$ check in the \wmmSSBPropSt{} rule which copies $S$ into $ps[i].sb$ must succeed.
	Then $L$ can read from $S$ in $ps[i].sb$.
	
	See Appendix~\ref{sec:sup-wmm-s} for remaining cases.
\end{proof}

\section{Conclusion} \label{sec: conclude}

We provide a framework which uses simple hardware abstractions based on \IIE{}
processors, monolithic memory, invalidation buffers, timestamps and dynamic
store buffers to capture all microarchitectural optimizations present in modern
processors. We have proved the equivalences between the simple
abstractions and their realistic microarchitectural counterparts; we believe
this work can be useful for both programmers to reason about their programs on
real hardware, and on architects to reason about the effect of their optimizations
on program behavior.

\bibliographystyle{abbrvnat}
\bibliography{ref}

\appendix
\newtheorem*{theorem*}{Theorem}

\newtheorem{thm}{Theorem}
\setcounter{thm}{2}

\section{Proof of Equivalence Between \wmmDep{} and \robDepccm}\label{sec:sup-wmm-d}
\begin{thm}
	\robDepccm{} $\subseteq$ \wmmDep.
\end{thm}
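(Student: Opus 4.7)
The plan is to follow the same simulation strategy as the proof of Theorem~\ref{thm: rob < wmm} and to layer on top of it a new bookkeeping invariant about timestamps. First I would restrict attention to \robDepccm{} executions without ROB flushes or load re-executions, since these can be excised without changing the observable program behavior. I would then augment the \robDepccm{} trace with a global clock $gts$ that is incremented whenever \ccmStRule{} fires, so that there is a natural bijection between store commits in \robDepccm{} and the $gts$ increments of the \wmmDepDeqSbRule{} rule in \wmmDep{}. The simulation in \wmmDep{} is: execute an instruction when it commits from the ROB in \robDepccm{}, and fire a \wmmDepDeqSbRule{} rule when the corresponding \ccmStRule{} fires. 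Loads whose source store is still resident in some $sb$ or in $ccm.m$ are mapped to \wmmDepLdSbRule{} or \wmmDepLdMemRule; loads whose source store was already overwritten are mapped to \wmmDepLdIbRule.

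Alongside the two invariants carried over from Theorem~\ref{thm: rob < wmm}, I would maintain the following new invariant: for every instruction result computed in \wmmDep{} (the $val$ of a $\NmInst$, the address/value pair of a store, the $rts$ of a processor, and the value returned by a load) the timestamp assigned in \wmmDep{} is no greater than the value of $gts$ at the moment the corresponding result was produced in \robDepccm{}. For $\NmInst$ and $\StInst$, this propagates trivially because the result timestamp is a max of source-register timestamps, each bounded inductively by the earlier $gts$ value at which that source was produced, and $gts$ is monotonically non-decreasing. For $\RecInst$, we set $rts \assignVal gts$ in \wmmDep, and the fence commits in \robDepccm{} after all earlier producers, so the bound is immediate. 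For each of the three load rules I would verify case by case that all three terms in the $\maxFunc(ats, ps[i].rts, \cdot)$ expression are bounded by the $gts$-value at which the load actually obtained its datum in \robDepccm.

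The main obstacle is the \wmmDepLdIbRule{} case, where I must additionally verify the stale-timing predicate $\maxFunc(ats, ps[i].rts, ts_L) \le ts_U$ guarding the rule. Here $ts_U$ is the $gts$ value at the moment the store $S$ being read was overwritten in $ccm.m$. The crucial use of data-dependency ordering in \robDepccm{} (no $vp$, so load addresses are not speculated) is that the load's \ccmLdRule{} response must arrive from $ccm$ strictly before $S$ is overwritten --- otherwise the load would have observed the overwriting store, not $S$. Because \robLdReqRule{} cannot fire until the address operand of the load has been produced by the \robDepModelName{} in-order dataflow, by the induction hypothesis $ats \le gts_{\text{req}} \le ts_U$. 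Similarly, any preceding $\RecInst$ on the same processor commits, and thus sets $rts$, no later than the load's issue time, giving $ps[i].rts \le ts_U$. Finally $ts_L \le ts_U$ follows directly from how \wmmDepDeqSbRule{} stamps the entry inserted into $ib$. Putting these three inequalities together both justifies the applicability of \wmmDepLdIbRule{} and establishes the new invariant for the load's destination register.
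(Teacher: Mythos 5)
Your proposal follows essentially the same route as the paper's proof: the same flush-free simulation, the same global clock $gts$ tied to \ccmStRule{} firings, and the same key invariant that every timestamp computed in \wmmDep{} is bounded by the value of $gts$ at which the corresponding result is produced in \robDepccm{}, with the stale-timing guard of \wmmDepLdIbRule{} discharged exactly as the paper does. The only detail to tighten is that in the \wmmDepLdIbRule{} case the load may have obtained the value of $S$ by forwarding from the ROB or store buffer (when $S$ is from the same processor) rather than via a \ccmLdRule{} response, but the same inequality $ats \le ts_U$ still holds there, since the bypass also requires the load address to be computed while $S$ is still visible and hence before the overwrite.
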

\begin{proof}
	In order to relate the time in \robDepccm{} to that in \wmmDep, we also introduce the global clock $gts$ to \robDepccm.
	$gts$ is be incremented by one whenever the \ccmStRule{} rule fires, \ie{} when the monolithic memory of \ccmModelName{} ($ccm.m$) is written.
	In the rest of this proof, we mean the value of $gts$ when referring to time.
    Similar to the proof of Theorem 1 (\ie{} \robccm{} $\subseteq$ WMM in the paper), we only need to consider executions in \robDepccm{} without any ROB flush or load re-execution.
	
	For any execution $E$ in \robDepccm, we simulate it using the same way as in the proof of Theorem 1, \ie{} when \robDepccm{} commits an instruction from ROB or writes a store into $ccm.m$, \wmmDep{} executes that instruction or writes that store into $m$.
	After each step of simulation, we maintain the following invariants ($ooo[i]$ represents processor $i$ in \robDepccm):
	\begin{enumerate}
		\item The states of $m$ and all $sb$ in WMM are the same as $ccm.m$ and all $sb$ in \robDepccm.
		\item The $gts$ in WMM is the same as that in \robDepccm.
		\item All instructions committed from ROBs in \robDepccm{} have also been executed with the same results in WMM.
		\item The $rts$ of each processor $ps[i]$ in WMM is equal to the time when the last $\RecInst$ fence is committed from $ooo[i].rob$.
		\item The timestamp of the result of each $\NmInst$ or $\LdInst$ instruction in WMM is $\le$ the time when the instruction gets its result in \robDepccm{} by the \robNmExRule, \robLdBypassRule, or \ccmLdRule{} rule.
		\item The timestamp of each store in any $sb$ in WMM is $\le$ the time when the store is executed in \robDepccm{} by the \robStExRule{} rule.
		\item For each monolithic memory location $m[a]=\langle v, \langle i, sts \rangle, mts\rangle$ in WMM, $sts$ is $\le$ the time when the store that writes $v$ to memory is executed in $ooo[i].rob$ in \robDepccm{}, and $mts$ is equal to the time right after that store writes $ccm.m$ in \robDepccm{}.
		\item For each invalidation buffer entry $\langle a, v, [ts_L, ts_U]\rangle$ of $ps[i].ib$ in WMM, $ts_U$ is the time right before value $v$ is overwritten by another store in $ccm.m$.
		If the store that writes $v$ to memory is from processor $i$, then $ts_L$ is $\le$ the time when that store is executed in $ooo[i].rob$ by the \robStExRule{} rule.
		Otherwise, $ts_L$ is equal to the time right after $v$ is written to $ccm.m$.
	\end{enumerate}
	The above invariants can be proved inductively in the same way used in the proof of Theorem 1.
	
	In particular, we consider the case that \robDepccm{} commits a load $L$ to address $a$, which reads the value of store $S$, from $ooo[i].rob$.
	In this case, WMM executes $L$ on $ps[i]$ according to the status of $S$ in \robDepccm{} when $L$ is committed from $ooo[i].rob$:
	\begin{itemize}
		\item If $S$ is still in $ooo[i].sb$ when $L$ commits from ROB, then \wmmDep{} can execute $L$ by reading $ps[i].sb$ (\ie{} the \wmmDepLdMemRule{} rule).
		
		\item If $S$ is in $ccm.m$ at that time, then \wmmDep{} executes $L$ by reading $m$ (\ie{} the \wmmDepLdMemRule{} rule).
		Note that $S$ may be from processor $i$ or another processor $j$.
		In either case, the \wmmDepLdMemRule{} rule maintains all the invariants.
		
		\item The final case is that $S$ has been overwritten by another store in $ccm.m$ before $L$ is committed from ROB.
		In this case, \wmmDep{} executes $L$ by reading $ps[i].ib$ (\ie{} the \wmmDepLdIbRule{} rule).
		For the same reason used in the proof of Theorem 1, the value of $S$ will be in $ps[i].ib$ when $L$ is executed in \wmmDep{}.
		We assume the time interval of $S$ in $ps[i].ib$ is $[ts_L, ts_U]$.
		The guard of the \wmmDepLdIbRule{} rule, \ie{} $ats < ts_U$ ($ats$ is the timestamp of the load address), will be satisfied, because timestamps in WMM is always $\le$ the corresponding time in \robDepccm, and the source register values for the load address must have been computed before the load value is overwritten in $ccm.m$ in \robDepccm.
		No matter $S$ is from processor $i$ or another processor, the way of setting $ts_L$ when $S$ is inserted into $ps[i].ib$ ensures that the timestamp computed for the load result in the \wmmDepLdIbRule{} rule conforms to all the invariants.
	\end{itemize}
\end{proof}

\begin{thm}
	\wmmDep{} $\subseteq$ \robDepccm.
\end{thm}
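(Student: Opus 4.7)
The plan is to mirror the construction used in the proof of Theorem 2 (WMM $\subseteq$ \robccm), but to contend with the absence of a value predictor in \robDepModelName. Whereas in Theorem 2 the construction could pre-fetch every instruction, pre-predict every load result via \robLdPredRule, and then freely reorder the remaining rules, here load values cannot be injected out of thin air, so I would instead construct the \robDepccm{} execution $E'$ simulating a given \wmmDep{} execution $E$ incrementally, firing each ROB execution rule at the earliest moment it is enabled. As in the forward direction, I introduce a global clock $gts$ into \robDepccm, incrementing it whenever \ccmStRule{} fires, so that $gts$ is synchronised with its \wmmDep{} counterpart.

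First I would initialise $E'$ by enqueueing every instruction executed in $E$ into the appropriate ROB via \robFetchRule{} rules. Then I would walk through $E$ and append rules to $E'$ according to the following correspondence. A \wmmDepDeqSbRule{} event on processor $i$ for address $a$ is translated to a \robDeqSbRule{} rule on $ooo[i]$ immediately followed by a \ccmStRule{} rule. A $\NmInst$, $\StInst$, $\ComInst$, or $\RecInst$ execution on processor $i$ is translated to the appropriate ROB execution rule (\robNmExRule{} or \robStExRule{}) scheduled as soon as its source operands are ready, followed by the matching commit rule (\robNmLdRecRetRule, \robStRetRule, or \robComRetRule). A load reading $sb$ is translated to \robLdAddrRule{} followed by \robLdBypassRule; a load reading $m$ or $ib$ is translated to \robLdAddrRule, \robLdReqRule, and \ccmLdRule, with the \ccmLdRule{} timed so that $ccm.m[a]$ still holds the very value that \wmmDep{} produced.

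The inductive invariants I would carry through the construction are: (i) the states of $m$ and every $sb$ in \wmmDep{} agree with those of $ccm.m$ and every $sb$ in \robDepccm; (ii) every instruction already executed in \wmmDep{} has been committed from the corresponding ROB in $E'$ with the same result; (iii) no \flushFunc{} call and no load re-execution is triggered in $E'$; and most importantly (iv) the value of $gts$ in \robDepccm{} at the moment an instruction obtains its result is exactly equal to the timestamp \wmmDep{} assigns to that result. Invariant (iv) is the engine of the simulation. For a load reading a stale value $\langle v,[ts_L,ts_U]\rangle$ from $ib$, the \wmmDep{} result timestamp is $\maxFunc(ats, ps[i].rts, ts_L)$, and the stale-timing guard forces this to be $\le ts_U$. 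Using (iv) for the address sources and for the most recent $\RecInst$, I can fire the \robLdReqRule{}+\ccmLdRule{} pair at exactly time $\maxFunc(ats, ps[i].rts, ts_L)$, since at that moment the load's address has been computed, every older $\RecInst$ has committed, and $ccm.m[a]$ still holds $v$ (the overwriting \ccmStRule{} does not fire until $gts$ reaches $ts_U + 1$).

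The main obstacle will be showing that the scheduled blocks really do compose into a legal \robDepccm{} execution free of ROB flushes and load re-executions. I expect two subtle points. First, per-address ordering: I must verify that \robStExRule{} for a store $S$ never finds a younger completed load to the same address (which would invoke \findLdKilledByStFunc{}) and that \ccmLdRule{} for a load is never preceded by a later-program-order same-address \ccmStRule{} whose tag would cause \findLdKilledByLdFunc{} to kill the load; both follow from the per-address SC properties (CoRR and CoWR) of \wmmDep{} combined with invariant (iv). Second, I must check that each load's address is genuinely computable at the intended firing time, which reduces via invariant (iv) and the definition of \decodeTSFunc{} to the fact that the timestamp of a decoded address is the maximum of its source-register timestamps, so the corresponding \robLdAddrRule{} is always enabled no later than the target $gts$. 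Once these points are settled, the remaining cases mirror the Theorem 2 construction, and the completed formal argument can be deferred to the appendix.
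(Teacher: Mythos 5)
Your proposal follows essentially the same route as the paper's proof: introduce $gts$ into \robDepccm, build $E'$ by scheduling each instruction's execution rules as early as their operands, the last $\RecInst$, and the visibility of the read store allow, carry the key invariant that the $gts$ at which an instruction obtains its result equals its \wmmDep{} timestamp, and use the stale-timing bound $ats \le ts_U$ together with per-address coherence to show the retroactively inserted \robLdReqRule{}/\ccmLdRule{} pairs fire before the overwrite and trigger no flush or re-execution. The only machinery you would still need to spell out is the paper's fine-grained ordering of rules sharing the same $gts$ value (its $\langle ut, lt\rangle$ time pairs, with store writes pinned at $lt=1$), which makes the out-of-order insertion of load executions into the already-constructed sequence rigorous.
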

\begin{proof}
	We still introduce $gts$ into \robDepccm{}, and $gts$ is incremented by one whenever a store writes $ccm.m$.
	Since multiple rules of \robDepccm{} may fire under the same $gts$, we introduce a pair $\langle ut, lt \rangle$ to specify the exact time when each rule fires;
	$ut$ is the upper time, which specifies the value of $gts$ when the rule fires;
	$lt$ is the lower time, a rational number inside $(0,1]$, which is used to order rules with the same upper time.
	When comparing two time pairs, we first compare the $ut$ part, and only compare the $lt$ part when $ut$ parts are equal.
	All \robDeqSbRule{} rules must have $lt=1$, while all other rules in \robDepccm{} must have $0 < lt < 1$.
	
	For any WMM execution $E$, we could construct a rule sequence $E'$ in \robDepccm{}, which has the same program behavior.
	In the construction of $E'$, we always fire the \robDeqSbRule{} and \ccmStRule{} rules atomically to write a store to $ccm.m$, so we only use \robDeqSbRule{} to denote this sequence in the rest of the proof.
	Similarly, we always fire \robLdAddrRule{} and \robLdBypassRule{} atomically to forward data to a load, and always fire \robLdAddrRule{}, \robLdReqRule{}, and \ccmLdRule{} rules atomically to satisfy a load from $ccm.m$, so we will only mention the \robLdBypassRule{} and \robLdReqRule{} rules to refer to the above two atomic sequences in the rest of the proof.
	
	The first part of $E'$ is to fetch all instructions executed in $E$ into the ROB of each processor (using \robFetchRule{} rules).
	The construction of the rest of $E'$ is similar to that in the proof of Theorem 2 (\ie{} WMM $\subseteq$ \robccm{} in the paper), \ie{} when \wmmDep{} writes a store to $m$ or executes an instruction, we write that store to $ccm.m$ or schedule rules to execute and commit that instruction in \robDepccm.
	We maintain the following invariants after each step of construction (the states of \robDepccm{} refers to the states after firing all rules in the constructed $E'$):
	\begin{enumerate}
		\item The states of $m$ and all $sb$ in WMM are the same as the states of $ccm.m$ and all $sb$ in \robDepccm{}.
		\item The $gts$ in WMM is the same as that in \robDepccm.
		\item The upper time assigned to each rule in $E'$ is equal to the value of $gts$ when the rule fires.
		The lower time assigned to each \robDeqSbRule{} is 1, while the lower time assigned to each other rule is within $(0,1)$.
		\item All instructions executed in WMM have also been executed (with the same results) and committed from ROBs in \robDepccm{}.
		\item The value of $gts$ when each instruction is executed or each store is written into $m$ in \wmmDep{} is equal to the upper time of the rule to commit that instruction from ROB or write that store to $ccm.m$ in \robDepccm.
		\item For each $\NmInst$, $\StInst$ or $\LdInst$ instruction executed in \wmmDep, the timestamp computed for the execution result (\ie{} $\NmInst$ instruction result, store address and data, or load result) in \wmmDep{} is equal to the upper time of corresponding rule in \robDepccm{} (\ie{} \robNmExRule, \robStExRule, \robLdBypassRule, or \robLdReqRule) that executes the instruction.
		\item No flush or load re-execution happens in any ROB.
	\end{enumerate}
	Besides proving the above invariants, we also need to show that the rules scheduled in $E'$ can indeed fire, \eg{}, the time of a rule to execute an instruction is smaller than the time of committing that instruction, but is larger than the time of each rule that computes the source operand of that instruction.
	
	The detailed way of constructing $E'$ to simulate each rule in WMM is shown below (the current states of \robDepccm{} refers to the states after firing all existing rules in the constructed $E'$):
	\begin{itemize}
		\item When \wmmDep{} writes a store from $ps[i].sb$ into $m$, we fire the \robDeqSbRule{} rule to write that store from $ooo[i].sb$ to $ccm.m$ at time $\langle ut, 1 \rangle$ in $E'$, in which $ut$ is the current $gts$ in \robDepccm.
		\item When an instruction $I$ is executed by $ps[i]$ in \wmmDep, we commit $I$ from $ooo[i].rob$ in $E'$ at time $\langle ut_c, lt_c \rangle$, in which $ut_c$ is the current $gts$ in \robDepccm, and $lt_c$ is chosen so that this commit rule happens after all existing rules in $E'$.
		If $I$ is $\NmInst$, $\StInst$ or $\LdInst$, we also need to schedule rules to execute it.
		Assume the maximum time among all the rules to compute the source register values of $I$ is $\langle ut_a, lt_a\rangle$.
		(All such rules must have been scheduled in the previous construction steps).
		\begin{itemize}
			\item If $I$ is $\NmInst$ or $\StInst$, then we schedule the corresponding \robNmExRule{} or \robStExRule{} rule for $I$ to fire at time $\langle ut_a, lt_a'\rangle$.
			$lt_a'$ is chosen so that $lt_a < lt_a' < 1$ and $\langle ut_a, lt_a'\rangle < \langle ut_c, lt_c\rangle$.
			\item If $I$ is $\LdInst$, we assume that $I$ reads from a store $S$ in $E$, and that $ooo[i]$ commits the last $\RecInst$ older than $I$ at time $\langle ut_r, lt_r \rangle$ in the previously constructed $E'$.
			\begin{itemize}
				\item If $S$ and $I$ are from the same processor (\ie{} $ooo[i]$), let $\langle ut_s, lt_s \rangle$ be the time of the \robStExRule{} rule for $S$ in the previously constructed $E'$.
				We fire either a \robLdBypassRule{} rule or a \robLdReqRule{} rule (depending on where $S$ is at the rule firing time) to execute $I$ in $E'$ at time $\langle ut_e, lt_e\rangle$, in which $ut_e = \max(ut_a, ut_r, ut_s)$.
				$lt_e$ is chosen so that $\langle ut_c, lt_c\rangle > \langle ut_e, lt_e\rangle > \max(\langle ut_a, lt_a\rangle, \langle ut_r, lt_r \rangle, \langle ut_s, lt_s \rangle)$ and $0 < lt_e < 1$.
				
				\item Otherwise, $S$ and $I$ are from different processors, and let $\langle ut_w, 1 \rangle$ be the time of the \robDeqSbRule{} rule that writes $S$ into $ccm.m$ in the previously constructed $E'$.
				We fire a \robLdReqRule{} rule to execute $I$ at time $\langle ut_e, lt_e\rangle$, in which $ut_e = \max(ut_a, ut_r, ut_w+1)$.
				$lt_e$ is chosen so that $\langle ut_c, lt_c\rangle > \langle ut_e, lt_e\rangle > \max(\langle ut_a, lt_a\rangle, \langle ut_r, lt_r \rangle)$ and $0 < lt_e < 1$.
			\end{itemize}
			Note that $lt_e$ always exists, because $\langle ut_c, lt_c\rangle$ is larger than the time of any existing rule in $E'$.
		\end{itemize}
	\end{itemize}
	Similar to the proof of Theorem 2, the construction of $E'$ here is not in order, but the $E'$ constructed after every step is always a valid rule sequence in \robDepccm{} for all instructions already executed by \wmmDep.
	For the same reason in the proof of Theorem 2, when we schedule rules for instruction $I$ in a construction step, the rules for $I$ will neither affect any existing rule in $E'$ nor depend on any rule scheduled in future construction steps.
	We can prove inductively that the invariants hold and the scheduled rules in $E'$ can indeed fire after each step of construction.
	
	The only non-trivial case is when $ps[i]$ in \wmmDep{} executes a $\LdInst\ a$ instruction $I$, which reads the value of a store $S$ from $ps[j]$ ($j$ may or may not be equal to $i$).
	(In the following proof we will directly use the time variables, \eg{} $ut_a, lt_a, ut_e, lt_e$, \etc, in the above description of the construction step for $I$).
	Due to the way of choosing $\langle ut_e, lt_e \rangle$, at time $\langle ut_e, lt_e \rangle$ in \robDepccm, we are able to compute the load address for $I$, and all $\RecInst$ fences older than $I$ have been committed from $ooo[i].rob$.
	Thus, $I$ could be executed in \robDepccm{} at that time.
	Furthermore, if $i$ is equal to $j$, then store address and data of $S$ must have been computed before $\langle ut_e, lt_e\rangle$ in the constructed $E'$ (because $\langle ut_s, lt_s \rangle < \langle ut_e, lt_e \rangle$); otherwise $S$ must have been written into $ccm.m$ before $\langle ut_e, lt_e\rangle$ in the constructed $E'$ (because $\langle ut_w, 1 \rangle < \langle ut_e, lt_e \rangle$).
	That is, $S$ is visible to $ooo[i]$ at time $\langle ut_e, lt_e\rangle$.
	Besides, $ats$ and $ps[i].rts$ in the \wmmDep{} rule to execute $I$ (\ie{} \wmmDepLdIbRule, \wmmDepLdMemRule{} or \wmmDepLdSbRule) are always equal to $ut_a$ and $ut_r$ respectively according to the invariants.
	The rest of the proof depends on how $I$ is executed in \wmmDep{}.
	\begin{itemize}
		\item \textbf{$I$ is executed by reading $ps[i].ib$ (\ie{} the \wmmDepLdIbRule{} rule):}
		
		For the time interval $[ts_L, ts_U]$ of the $ib$ entry read in the \wmmDepLdIbRule{} rule for $I$, $ts_U$ should be the value of $gts$ in \wmmDep{} when $S$ is overwritten by another store in $m$.
		If $i$ is equal to $j$, then $ts_L$ is the computed timestamp of the $S$ in the \wmmDepStRule{} rule, and should be equal to $ut_s$; otherwise $ts_L$ is one plus the value of $gts$ when $S$ writes $m$ in \wmmDep, and should be equal to $ut_w+1$.
		Then the computed timestamp of the load value of $I$ in \wmmDepLdIbRule{} rule must be equal to $ut_e$.
		
		Now we only need to show that $I$ can read the value of $S$ in \robDepccm{} at time $\langle ut_e, lt_e\rangle$, and that $I$ will not be killed or forced to re-execute later.
		Since the value of $S$ is enqueued into $ps[i].ib$ and stays there until $I$ is executed in \wmmDep, we know the following things about $E$:
		\begin{itemize}
			\item $ps[i].sb$ does not contain any store to $a$ older than $I$ ever since $S$ is written into $m$.
			\item There is no $\RecInst$ older than $I$ executed by $ps[i]$ ever since the overwrite of $S$, \ie{} $ps[i].rts \le ts_U$.
			\item Any load older than $I$ in $ps[i]$ must read from either $S$ or some other store which writes $m$ before $S$.
		\end{itemize}
		According to the invariants and the requirement $ats \le ts_U$ of \wmmDepLdIbRule{}, we can deduce the following implications:
		\begin{enumerate}
			\item $S$ is overwritten in $ccm.m$ at time $\langle ts_U, 1\rangle$.
			\item $ut_a = ats \le ts_U$.
			\item $ut_r = ps[i].rts \le ts_U$.
			\item \label{imp: no store} Neither $ooo[i].rob$ nor $ooo[i].sb$ contains any store to $a$ older than $L$ ever since $S$ is written into $ccm.m$.
			In particular, if $i$ is equal to $j$, then there is no store between $I$ and $S$ in processor $i$.
			\item \label{imp: old load} Any load older than $I$ in $ooo[i]$ must read from either $S$ or some other store which writes $ccm.m$ before $S$.
			Thus, if $I$ is killed by an older load $I'$, then $I'$ must read from a store that writes $ccm.m$ before $S$.
		\end{enumerate}
		Thus, at time $\langle ut_e, lt_e\rangle$ in \robDepccm, $S$ has not been overwritten in $ccm.m$, \ie{} $S$ is in $ooo[i].rob$, $ooo[i].sb$ or $ccm.m$ (note that $S$ is visible to $ooo[i]$ at that time).
		We do a case analysis on whether $i$ is equal to $j$ (\ie{} whether $S$ is also from processor $i$):
		\begin{itemize}
			\item In case $i$ is equal to $j$, the above implication \ref{imp: no store} ensures that $I$ could read the value of $S$ from $ooo[i].rob$ or $ooo[i].sb$ or $ccm.m$ at time $\langle ut_e, lt_e \rangle$, and $I$ will not be killed or forced to re-execute by any older store.
			According to the above implication \ref{imp: old load}, if $I$ is killed by an older load $I'$ later, then $I'$ must be older than $S$, because $I'$ reads from a store that writes $ccm.m$ before $S$.
			However in this case, the $\findLdKilledByLdFunc$ method performed by $I'$ will be stopped by $S$, and $I$ will not be killed.
			
			\item In case $j\neq i$, the above implication \ref{imp: no store} guarantees that $I$ could read the value of $S$ from $ccm.m$ at time $\langle ut_e, lt_e\rangle$, and $I$ will not be killed or forced to re-execute by any older store.
			According to the above implications \ref{imp: old load} and \ref{imp: no store}, if $I$ is killed by an older load $I'$, then $I'$ must get its value before $S$ is written into $ccm.m$, because $I'$ reads from a store that writes $ccm.m$ before $S$.
			However in this case, $I'$ will get its value before $I$ does, and hence $I'$ cannot kill $I$.
		\end{itemize}
		
		\item \textbf{$I$ is executed by reading $m$ (\ie{} the \wmmDepLdMemRule{} rule):}
		
		For the monolithic memory location $m[a]=\langle v, \langle j, sts \rangle, mts \rangle$ read in the \wmmDepLdMemRule{} rule for $I$, $v$ is the value of $S$, $sts$ is the timestamp computed in the \wmmDepStRule{} rule for $S$, $mts$ is one plus the time when $S$ is written into $m[a]$.
		If $i$ is equal to $j$, then $vts$ in the \wmmDepLdMemRule{} rule for $I$ is equal to $sts$, which is also equal to $ut_s$; otherwise we have $vts = mts = ut_w+1$.
		Then the computed timestamp of the load value of $I$ in \wmmDepLdMemRule{} rule must be equal to $ut_e$.
		
		Now we only need to show that $I$ can read the value of $S$ in \robDepccm{} at time $\langle ut_e, lt_e\rangle$, and that $I$ will not be killed or forced to re-execute later.
		According to invariants and the fact that $I$ reads the value of $S$ from $m$, we can deduce the following implications:
		\begin{enumerate}
			\item $S$ is in $ccm.m$ when $L$ is committed from ROB, so $S$ must have not been overwritten in $ccm.m$ at time $\langle ut_e, lt_e \rangle$ ($< \langle ut_c, lt_c \rangle$), \ie{} $S$ is in $ooo[i].rob$, $ooo[i].sb$ or $ccm.m$ at that time (note that $S$ is visible to $ooo[i]$ at that time).
			\item If $i$ is equal to $j$, then there is no store to $a$ between $I$ and $S$ in processor $i$; otherwise there is no store to $a$ older than $I$ in $ooo[i].rob$ or $ooo[i].sb$ ever since $S$ is written into $ccm.m$.
			\item Any load older than $I$ in $ooo[i]$ must read from either $S$ or some other store which writes $ccm.m$ before $S$.
		\end{enumerate}
		The first and second implications ensure that $I$ can read the value of $S$ at time $\langle ut_e, lt_e\rangle$, and $I$ will not be killed or forced to re-execute by any older store.
		The second and third implications ensure that $I$ will not be killed by any older load, for the same reason used in the previous case where $I$ reads from $ps[i].ib$.
		
		\item \textbf{$I$ is executed by reading $ps[i].sb$ (\ie{} the \wmmDepLdSbRule{} rule):}
		
		In this case, $i$ is equal to $j$, \ie{} $S$ is also from processor $i$.
		For the timestamp $sts$ of the store buffer entry read in the \wmmDepLdSbRule{} rule for $I$, $sts$ is the timestamp computed in the \wmmDepStRule{} rule for $S$, and it is equal to $ut_s$ according to invariants.
		Then the computed timestamp of the load value of $I$ in \wmmDepLdMemRule{} rule must be equal to $ut_e$.
		
		Now we only need to show that $I$ can read the value of $S$ in \robDepccm{} at time $\langle ut_e, lt_e\rangle$, and that $I$ will not be killed or forced to re-execute later.
		According to invariants and the fact that $I$ reads the value of $S$ from $ps[i].sb$, we can deduce the following implications:
		\begin{enumerate}
			\item $S$ is in either $ooo[i].rob$ or $ooo[i].sb$ at time $\langle ut_e, lt_e\rangle$ (note that the store address and data of $S$ have been computed at this time).
			\item There is no store to $a$ between $I$ and $S$ in $ooo[i]$.
			\item Any load to $a$ between $I$ and $S$ in $ooo[i]$ must also get the value of $S$ as its result.
		\end{enumerate}
		The first two implications ensure that  $I$ can read the value of $S$ at time $\langle ut_e, lt_e\rangle$, and $I$ will not be killed or forced to re-execute by any older store.
		The last implication ensures that $I$ will not be killed by any older load.
	\end{itemize}
\end{proof}

\section{Proof of \wmmSSB{} Abstracting \robSSBflow}\label{sec:sup-wmm-s}
\begin{thm}
	\robSSBflow{} $\subseteq$ \wmmSSB.
\end{thm}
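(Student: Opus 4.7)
The plan is to set up a simulation argument mapping each execution $E$ of $\robSSBflow$ to an execution in $\wmmSSB$ with matching observable behavior. As in the body proof, we may restrict to executions without ROB flushes or load re-executions (since those cancel out and have no observable effect). The simulation fires a $\wmmSSB$ rule precisely when $E$ performs one of four externally-relevant events: a store request reaches $\fm.m$ via $\flowFlowRule$; an $\NmInst$, $\RecInst$, or $\StInst$ is committed from some ROB; a $\ComInst$ request reaches $\fm.m$ (causing $\respComFunc$ to commit the fence); or a $\LdInst$ is committed. Internal $\flowReorderRule$ and intra-tree $\flowFlowRule$ actions trigger nothing in $\wmmSSB$. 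The two invariants stated in the body proof are maintained after each step: first, $ps[i].sb$ equals the set of in-flight stores of $\flowModelName$ observed by commits of $ooo[i]$, all residing on the path from the root of the tree to port $i$; second, older-in-$ps[i].sb$ matches closer-to-root in $\flowModelName$. Global states $m$ and $\fm.m$ are kept equal throughout.

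The load-commit case splits by the location of the source store $S$ when $L$ is committed from $ooo[i].rob$. If $S$ is already in $\fm.m$, then by $\rcCoCons$-style reasoning (the $\descref{CoRR}$ and $\descref{CoWR}$ axioms of $\robSSBflow$) no observed-by-$ooo[i]$ store to the same address can still sit in any segment, so $ps[i].sb$ is free of the address and $L$ fires $\wmmLdMemRule$ on $m$. If $S$ was in $\fm.m$ earlier but has since been overwritten, the simulation's earlier $\wmmSSBDeqSbRule$ for the overwriting store inserted the old value $S$ into $ps[i].ib$ (the empty-address side condition holds for the same reason as above); using $\descref{CoRR}$, $\descref{CoWR}$, and the absence of any intervening $\RecInst$ or $\StInst\;a$ in $ooo[i]$ between the overwrite and the commit of $L$, the entry survives and $L$ fires $\wmmLdIbRule$. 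The remaining subcase, already sketched in the body, is when $S$ still resides in a segment; if it is not yet observed by $ooo[i]$, we first invoke $\wmmSSBPropSt$ to copy $S$ into $ps[i].sb$ (the crucial $\noCycleFunc$ check succeeds by the argument below), and then $L$ reads it via $\wmmLdSbRule$.

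The remaining rule categories are routine. Store commits enqueue into $ps[i].sb$ and purge the address from $ps[i].ib$, which is consistent with the effect of $\wmmSSBStRule$. For $\ComInst$ commits the barrier's cumulative semantics in $\flowModelName$ force every observed-by-$ooo[i]$ store to have already exited into $\fm.m$, so by invariant one $ps[i].sb$ is empty and $\wmmComRule$ is enabled; the $\RecInst$ case needs only to observe that clearing $ib$ is a safe over-approximation on the $\wmmSSB$ side. When a store flows into $\fm.m$ we fire $\wmmSSBDeqSbRule$; by invariants one and two, the copies of that store across all store buffers are exactly the oldest for their address in each buffer that holds them, so the guard of $\wmmSSBDeqSbRule$ is satisfied and its dequeue/invalidation effect preserves both invariants.

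The main obstacle is justifying that the $\noCycleFunc$ guard fires successfully whenever the simulation needs to copy a previously unobserved store $S$ into $ps[i].sb$. The argument relies on the tree topology of $\flowModelName$: invariant two makes the per-address intra-buffer order isomorphic to a root-to-leaf linearization of the tree edges traversed by the store. When we prepend $S$ to $ps[i].sb$, $S$ must be strictly closer to the root than every already-present store for its address, because any such present store was observed by $ooo[i]$ only after it flowed past $S$'s current segment on the common root-bound path. Formalizing this requires arguing that at each simulation step the partial coherence order induced across all $sb$'s is a subrelation of a single acyclic order derived from the segment topology plus the flow history; once established, acyclicity is preserved by any insertion consistent with that order, which is exactly the insertion $\wmmSSBPropSt$ performs. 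This topological consistency claim, together with the $\descref{CoRR}$ and $\descref{CoWR}$ axioms for $\robSSBflow$, is where the bulk of the detailed bookkeeping in the appendix proof will live.
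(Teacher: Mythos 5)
Your simulation is the same one the paper uses: the same restriction to flush-free executions, the same four trigger events, the same two invariants (each $ps[i].sb$ holds exactly the in-flight stores observed by commits of $ooo[i]$, lying on the root-to-port path; buffer age order matches distance-from-root order), and the same case split on where the source store $S$ sits when the load commits, with \descref{CoRR}/\descref{CoWR} doing the work in each case. The one place you go wrong is the direction of the ordering in the $\noCycleFunc$ step. You say \wmmSSBPropSt{} \emph{prepends} $S$ and that $S$ must be \emph{strictly closer} to the root than every store to the same address already present in $ps[i].sb$. Both claims are backwards: \wmmSSBPropSt{} calls $\enqFunc$, which makes $S$ the \emph{youngest} entry, and the correct consequence of \descref{CoRR}/\descref{CoWR} is that every already-observed store to $a$ is coherence-before $S$ and hence closer to the root, i.e.\ $S$ is \emph{further} from the root than all of them. (Your own justification --- that the already-present stores have ``flowed past $S$'s current segment'' --- actually establishes this correct direction and contradicts the sentence it is meant to support.) The direction matters: under your stated version the enqueue would violate your second invariant, and the subsequent \wmmLdSbRule{}, which reads $\getYoungestFunc(a)$, would return some other store rather than $S$, so the simulated load would get the wrong value. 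With the direction fixed, the rest of your acyclicity argument --- the per-buffer, per-address orders all embed into the single distance-from-root order induced by the tree, so inserting $S$ as the coherence-latest observed store to $a$ cannot create a cycle --- is exactly the paper's argument.
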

\begin{proof}
	We use $ooo[i]$ to denote processor $i$ in \robSSBflow, and use $\fm.m$ to denote the monolithic memory in \flowModelName.
	Section 7.4 in the paper has already stated the invariants and the way to simulate the behavior of \robSSBflow{} in \wmmSSB.
	That section has also proved the correctness in case that a load $L$ to address $a$, which reads from a store $S$, is committed from $ooo[i].rob$ in \robSSBflow{}, and that $S$ remains in a segment of \flowModelName{} at the commit time of $L$.
	Here we complete the proof for the remaining cases:
	\begin{itemize}
		\item In case a store $S$ to address $a$ is flowed into $\fm.m$ in the \flowFlowRule{} rule, $S$ must be closer to the root (\ie{} $\fm.m$) than any other stores in \flowModelName.
		According to invariants, in \wmmSSB, all the copies of $S$ must be the the oldest stores to $a$ in their respective store buffers.
		Thus, \wmmSSB{} can fire the \wmmSSBDeqSbRule{} rule to write $S$ into the monolithic memory ($m$), and all the invariants still hold.
		
		\item In case a $\NmInst$, $\RecInst$, or $\StInst$ instruction is committed from $ooo[i].rob$, it is trivial to prove that \wmmSSB{} can fire a \wmmNmRule{}, \wmmRecRule, or \wmmSSBStRule{} rule to execute that instruction, and all the invariants still hold.
		
		\item In case a $\ComInst$ fence is committed from $ooo[i].rob$, the $\ComInst$ response from \flowModelName{} ensures that there must not be any store in any segment of \flowModelName{} which are observed by commits of $ooo[i]$ at that time.
		Therefore, $ps[i].sb$ in \wmmSSB{} should also be empty, and the $\ComInst$ fence can be executed.
		
		\item Consider the case that a load $L$ to address $a$, which reads from a store $S$, is committed from $ooo[i].rob$ in \robSSBflow, and that $S$ is in $\fm.m$ at the commit time of $L$.
		In this case, there cannot be any store to $a$ in any segment of $\fm$ which are observed by commits of $ooo[i]$, because otherwise either the CoRR or CoWR axiom will be violated in \robSSBflow.
		Thus, $ps[i].sb$ cannot have any store to $a$ and $L$ can read the value of $S$ from $m$ in \wmmSSB.
		
		\item Consider the case that a load $L$ to address $a$, which reads from a store $S$, is committed from $ooo[i].rob$ in \robSSBflow, and that $S$ has been overwritten in $\fm.m$ before the commit of $L$.
		In this case,  there cannot be any store to $a$ in any segment of $\fm$ which are observed by commits of $ooo[i]$ right before the overwrite of $S$, because otherwise either the CoRR or CoWR axiom will be violated in \robSSBflow.
		Thus, $ps[i].sb$ cannot have any store to $a$ right before $S$ is overwritten in $m$ in \wmmSSB, so the value of $S$ will be inserted into $ps[i].ib$.
		According to the CoRR and CoWR axioms of \robSSBflow, during the period between the overwrite and the commit of $L$, no store to $a$ can be committed from $ooo[i].rob$, and no load to $a$ committed from $ooo[i].rob$ can read from a store which writes $\fm.m$ after $S$.
		In addition, no $\RecInst$ fence can be committed during that period, otherwise the $\RecInst$ fence will forbid $L$ from reading across it to get the value of $S$. 
		Thus, the value of $S$ will stay in $ps[i].ib$ until $L$ is executed by the \wmmLdIbRule{} rule in \wmmSSB.
	\end{itemize}
\end{proof}

\section{CoRR and CoWR Axioms for Physical Models}\label{sec:sup-sc}

In the paper, we have introduced the following axioms
($L_1,L_2,S_1,S_2$ denote loads and stores to the same address):
\begin{description}
	\item[CoRR] (Read-Read Coherence): $L_1\poEdge L_2\ \wedge\ S_1\rfEdge L_1\ \wedge\ S_2\rfEdge L_2 \Longrightarrow S_1 == S_2\ \vee\ S_1\coEdge S_2$. 
	\item[CoWR] (Write-Read Coherence): $S_2\rfEdge L_1\ \wedge\ S_1\poEdge L_1 \Longrightarrow S_1 == S2\ \vee\ S_1\coEdge S_2$..
\end{description}
We have used the fact that physical models (\ie{} \robccm, \robDepccm{} and \robSSBflow) satisfy these two axioms in the proofs of the relations between \IIE{} models with physical models.
Now we formally prove that these axioms hold for all the physical models.
We will directly use $L_1,L_2,S_1,S_2$ in the proofs, and all the operations about $L_1$ and $L_2$ discussed in the proofs are the final operations of $L_1$ and $L_2$ to get their load results, \ie{} $L_1$ and $L_2$ should not be killed or forced to re-execute afterwards.

\subsection{CoRR and CoWR Axioms for \ccmModelName+\robModelName/\robDepModelName}

\begin{lemma}
	\robccm{} satisfies the CoRR axiom.
\end{lemma}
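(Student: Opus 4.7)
The plan is to prove CoRR by contradiction. Suppose $L_1 \poEdge L_2$ are loads to address $a$ on processor $i$, $S_1 \rfEdge L_1$ and $S_2 \rfEdge L_2$, yet $S_1 \neq S_2$ and $S_2 \coEdge S_1$; that is, in the execution of \robccm{}, store $S_2$ writes $ccm.m$ strictly before $S_1$ does. For each $L_j$, I will track the specific rule that delivered the value that actually commits --- either a \robLdBypassRule{} from a local store, or a \ccmLdRule{} returning a memory response --- and derive a contradiction. As in the paper, I may restrict attention to executions without flushes or load re-executions, so the ``final'' read of each $L_j$ is uniquely determined.

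Three structural invariants of \robccm{} drive the argument. First, stores to a common address from the same processor are dequeued from $sb$ in program order, so among such stores the coherence order agrees with program order. Second, when $\StInst\ a$ becomes $\Done$ via \robStExRule, $\findLdKilledByStFunc$ flushes or marks $\ReEx$ every younger non-$\Idle$ $\LdInst\ a$; hence if $L_j$ commits with its current value, no older local store to $a$ can have made the $\Idle\to\Done$ transition after $L_j$'s final read. Third, when a memory response with store tag $t$ arrives via $\respLdFunc$, $\findLdKilledByLdFunc$ flushes the first younger $\Done$ $\LdInst\ a$ produced by a different tag, unless an intervening $\StInst\ a$ separates them in $rob$.

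The case analysis splits on the producer of $S_1$. If $S_1$ originates on processor $i$, then $L_1$ fired \robLdBypassRule{}, placing $S_1$ in $rob$ or $sb$ at bypass time. If $L_2$ also bypasses, it picks the po-latest $\Done$ store to $a$ older than $L_2$, and the first invariant forces either $S_1 = S_2$ or $S_1 \coEdge S_2$. If $L_2$ instead reads from memory, the second invariant rules out $S_1$ being $\Idle$ at the time of $L_2$'s final \robLdReqRule{}, so $S_1$ must already lie in $ccm.m$ at that moment; comparing the value of $m[a]$ at that moment with $S_2$'s value then gives $S_1 \coEdge S_2$. If $S_1$ originates on a different processor, $L_1$ must obtain it via a final \ccmLdRule{} at some time $\tau_1$. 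When $L_2$ bypasses a local $S_2$, that $S_2$ is still in $rob$ or $sb$ after $L_1$ commits (which is after $\tau_1$), hence has not yet reached memory; since $m[a]$ held $S_1$ at $\tau_1$, we again conclude $S_1 \coEdge S_2$. When $L_2$ also reads from memory, with final response time $\tau_2$, the third invariant forces $\tau_2 > \tau_1$ whenever no intervening $\StInst\ a$ shields $L_2$, and then the update history of $m[a]$ between $\tau_1$ and $\tau_2$ gives $S_1 \coEdge S_2$.

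The step I expect to be the main obstacle is the shielded sub-subcase: $L_2$ reads from memory but an intervening $\StInst\ a$ $S'$ in $rob$ blocks $\findLdKilledByLdFunc$ so that $\tau_2 < \tau_1$ is not immediately excluded. Here I will rule out speculative \robLdReqRule{} firings of $L_2$ while $S'$ is still $\Idle$ using the second invariant (any such attempt is later invalidated and is therefore not the final read), and then argue via in-order commit that the po-latest intervening store to $a$ can only reach $ccm.m$ after $L_1$ commits; this forces $L_2$'s final memory read to occur strictly after $\tau_1$, and the update history of $m[a]$ then closes the contradiction.
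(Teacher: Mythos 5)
Your proposal is correct and takes essentially the same route as the paper's proof: argue by contradiction, case-split on how $L_1$ obtains the value of $S_1$ (bypass versus memory response), and use the $\findLdKilledByStFunc$ and $\findLdKilledByLdFunc$ kill mechanisms together with the FIFO draining of $sb$ to show $L_2$ would have been flushed; you merely spell out the shielded sub-case that the paper dismisses with a one-line parenthetical. The one slip worth fixing is that your top-level split should be on the \emph{mechanism} (\robLdBypassRule{} versus a final \ccmLdRule{}) rather than on where $S_1$ originates, since a local store already drained into $ccm.m$ is read via a memory response --- but your ``different processor'' argument never actually uses remoteness, so it covers that case verbatim.
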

\begin{proof}
	We assume $L_1$ and $L_2$ are both from processor $i$ ($ooo[i]$).
	We do a case analysis on how $L_1$ get its final result, \ie{} the value of $S_1$, and prove that $S_2\coEdge S_1$ is impossible in each case.
	
	First consider the case that $L_1$ gets its final result via the \robLdBypassRule{} rule.
	In this case, $S_1$ is also from $ooo[i]$.
	If we have $S_2\coEdge S_1$, then $L_2$ must get its final result, \ie{} the value of $S_2$, before the \robStExRule{} rule for $S_1$ has fired.
	However, in this case, $L_2$ will be killed later when the \robStExRule{} rule for $S_1$ fires.
	
	Next consider the case that $L_1$ reads the value of $S_1$ from the monolithic memory of \ccmModelName{} ($ccm.m$).
	If we have $S_2\coEdge S_1$, then $L_2$ must get its final result before $L_1$ gets the response from \ccmModelName, and there should not be any store to $a$ between $L_1$ and $L_2$ in $ooo[i]$ (otherwise $L_2$ will be killed by the store).
	However, in this case, when $L_1$ gets its response from \ccmModelName, it will kill $L_2$.
\end{proof}

\begin{lemma}
	\robccm{} satisfies the CoWR axiom.
\end{lemma}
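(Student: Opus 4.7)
The plan is to mirror the case analysis used for CoRR in the preceding lemma, arguing by contradiction: assume $S_1 \neq S_2$ and $S_2 \coEdge S_1$, take $L_1$ to be on processor $i$, and split on how $L_1$ obtains its final value, namely that of $S_2$ in \robccm{}.

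If $L_1$ finishes via \robLdBypassRule, then $S_2$ is the store returned by $\findRecStFunc$ scanning older entries of $ooo[i].rob$ and then $ooo[i].sb$, so $S_2$ must be on processor $i$ and program-order before $L_1$. I would then sub-case on whether $S_1$ still resides in $ooo[i].rob$ or $ooo[i].sb$ at the moment of bypass. When it does, the per-address FIFO discipline of the store buffer, together with in-order ROB commit into that buffer, implies that every $\StInst\ a$ program-order between $S_1$ and $L_1$ is still present, so the youngest older store to $a$, which $\findRecStFunc$ returns, is either $S_1$ itself or lies program-order strictly between $S_1$ and $L_1$. In the latter sub-case the same per-address FIFO, combined with the per-address FIFO issue by \robDeqSbRule, gives $S_1 \coEdge S_2$, contradicting $S_2 \coEdge S_1$. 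When $S_1$ is no longer in $ooo[i].rob$ or $ooo[i].sb$, its \ccmStRule{} has already fired, whereas $S_2$ is still sitting in the ROB or store buffer and so has not yet fired \ccmStRule{}, again giving $S_1 \coEdge S_2$.

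If $L_1$ finishes via the \robLdReqRule{}/\ccmLdRule{} pair, the guard of \robLdReqRule{} forces $\findRecStFunc(idx, a) = \top$, so at the moment the request is issued no $\StInst\ a$ older than $L_1$ remains in $ooo[i].rob$ or $ooo[i].sb$; in particular $S_1$ has already been issued from the store buffer, so its \ccmStRule{} has fired and written $ccm.m$. The subsequent \ccmLdRule{} reads the current contents of $ccm.m[a]$, which is by assumption tagged with $S_2$. But $ccm.m[a]$ is monotone under $\coEdge$-overwrites, so once $S_1$'s write has landed the location cannot read back as $S_2$'s value for any $S_2 \coEdge S_1$, the desired contradiction.

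The main obstacle I anticipate is the bookkeeping in the bypass case when $S_1$ is still present: one has to argue carefully both that $\findRecStFunc$'s walk order pins $S_2$ to be the youngest $\StInst\ a$ older than $L_1$ in the combined ROB and store buffer of processor $i$, and that the per-address FIFO structure of $sb$ (together with in-order commit from ROB and the per-address FIFO issue discipline of \robDeqSbRule) is precisely the ingredient that lifts $S_1 \poEdge S_2$ on a single processor into $S_1 \coEdge S_2$ in $ccm.m$. Once these pieces are in place both cases close by direct contradiction.
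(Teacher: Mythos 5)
There is a genuine gap. Your case split implicitly assumes that whenever $L_1$ executes, the older store $S_1$ is already \emph{visible as a store to $a$} — either as an executed entry that $\findRecStFunc$ can see, or as a value already written to $ccm.m$. That assumption fails in exactly the scenario that makes CoWR non-trivial in \robModelName{}: memory dependency speculation, where $L_1$ completes while $S_1$ still sits in $ooo[i].rob$ with an \emph{unresolved} address. $\findRecStFunc$ only finds \emph{executed} stores to $a$; a store whose \robStExRule{} has not yet fired has address $\epsilon$ and is skipped. So in your bypass sub-case with ``$S_1$ still in the ROB,'' the returned store need not be $S_1$ or anything program-order between $S_1$ and $L_1$ — it can be strictly older than $S_1$. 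Likewise, in your \robLdReqRule{} case, $\findRecStFunc(idx,a)=\top$ does \emph{not} imply that $S_1$ has already fired \ccmStRule{}; the guard deliberately lets a load go to memory past an older store with an uncomputed address (this is the litmus test of Figure~\ref{fig: wmm mem dep pred}). In these runs $L_1$ really can read a value of some $S_2$ with $S_2 \coEdge S_1$, so neither of your contradictions closes.

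What rescues the axiom — and what your argument never invokes — is the kill/re-execution mechanism: when $S_1$'s \robStExRule{} later resolves its address to $a$, $\findLdKilledByStFunc(idx,a)$ finds the younger, already-executed $L_1$ (not shielded by an intervening $\StInst\ a$) and flushes it or marks it $\ReEx$, so the value of $S_2$ cannot be $L_1$'s \emph{final} result. This is the paper's entire proof: if $S_2 \coEdge S_1$ then $L_1$ must have obtained its result before $S_1$'s \robStExRule{} fired, and that rule then kills $L_1$, a contradiction. The sub-cases you do treat (where $S_1$ is already executed in the ROB/store buffer, or already in $ccm.m$) are the easy ones; without the speculation-and-kill case the proof is incomplete.
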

\begin{proof}
	We assume $L_1$ and $S_1$ are both from processor $i$ ($ooo[i]$).
	If we have $S_2\coEdge S_1$, then $L_1$ must get its final result, \ie{} the value of $S_2$, before the \robStExRule{} rule for $S_1$ has fired.
	However, in this case, $L_1$ will be killed later when the \robStExRule{} rule for $S_1$ fires.
\end{proof}

Since \robDepccm{} $\subseteq$ \robccm, \robDepccm{} also satisfies the CoRR and CoWR axioms.

\subsection{CoRR and CoWR Axioms for \robSSBflow}

\begin{lemma}
	\robSSBflow{} satisfies the CoRR axiom.
\end{lemma}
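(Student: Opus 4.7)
The plan is to proceed by contradiction: assume $S_2 \coEdge S_1$, so writing $T(S)$ for the time at which store $S$ updates $\fm.m$ via \flowFlowRule, $T(S_2) < T(S_1)$. Two mechanisms drive the argument. First, \robSSBLdReqRule{} invokes $\findLdKilledByStFunc$ when issuing a load, flushing or forcing re-execution of any younger same-address load, so the ``final'' $L_1, L_2$ must be issued into \flowModelName{} in program order, with $L_1$ first. Second, case 1a of \flowReorderRule{} forbids reordering of two same-address requests inside a segment, and stores leave a segment only from the bottom via \flowFlowRule. These together yield the key invariant: for two same-address requests sharing a segment, the earlier-arrived one sits closer to the bottom and so exits that segment (by flow, or for a load by \flowBypassRule) no later than the later-arrived one; in particular, two same-address requests entering at the same port remain in FIFO order throughout the hierarchy.

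I would then split on how $L_1, L_2$ obtain their values --- each either reads $\fm.m$ at the root or is consumed by \flowBypassRule{} inside the tree. In the (read, read) case, the FIFO invariant gives $t_1 < t_2$ for the read times, and since $S_1, S_2$ are the contents of $\fm.m$ at those times, $T(S_2) < T(S_1)$ would force $\fm.m$ at $t_2$ to contain $S_1$ (or a coherence-successor of $S_1$), not $S_2$ --- a contradiction. In the (read, bypass) case, $S_2$ is still in some segment $s_{j_2}$ at the bypass time $t_{L_2}$ (so $T(S_2) > t_{L_2}$); applying the invariant to $L_1$ and $S_2$ along $s_{j_2}, s_{j_2+1}, \ldots, s_k$, $L_1$ reaches $\fm.m$ before $S_2$ does, giving $T(S_1) \le t_1 < T(S_2)$. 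The (bypass, read) case dualizes: after \flowBypassRule{} removes $L_1$, the store $S_1$ remains in its segment and, by the invariant, stays ahead of $L_2$ along the path, so $T(S_1) < t_2$ and hence $T(S_1) \le T(S_2)$. The (bypass, bypass) case combines both ideas: $S_1$ stays ahead of $L_2$ and hence ahead of $S_2$ all the way to the root, so $T(S_1) \le T(S_2)$, with equality iff $S_1 = S_2$.

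The main obstacle is making ``$S_1$ stays ahead of $S_2$'' rigorous in the bypass cases, because $S_2$ typically enters \flowModelName{} from a port other than processor $i$'s and only meets $L_1$'s path at some common ancestor segment $s_{j_2}$. The crucial positional lemma there is: $S_2$ must still occupy $s_{j_2}$ at the bypass time $t_{L_2}$, but if $S_2$ had arrived at $s_{j_2}$ before $L_1$ did, then $L_1$ (above $S_2$ and same-address) could not reach the bottom of $s_{j_2}$ to flow out until $S_2$ had already flowed out, contradicting $S_2$'s presence at $t_{L_2}$. Hence $S_2$ arrives at $s_{j_2}$ strictly after $L_1$, so at $s_{j_2+1}$ the same-address FIFO places $L_1$ ahead of $S_2$, and a straightforward induction along the remaining segments propagates the ``ahead'' relation up to $\fm.m$ and closes the argument.
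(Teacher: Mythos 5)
There is a genuine gap: your case analysis assumes that every load is satisfied inside \flowModelName{} (``each either reads $\fm.m$ at the root or is consumed by \flowBypassRule{} inside the tree''), but in \robSSBModelName{} a load has a third way to obtain its value, namely \robLdBypassRule{} --- data forwarding from an executed but not-yet-committed store sitting in the same ROB. Such a load never calls $\reqLdFunc$ and never appears in any segment, so neither your FIFO invariant nor your positional lemma applies to it, and your mechanism~(1) (``the final $L_1,L_2$ must be issued into \flowModelName{} in program order'') is simply false for it. This case needs a different argument: if $L_1$ forwards from $S_1$ in the ROB, then $S_1$ is an older store on the same processor, and one must argue that $L_2$ cannot end up with a value $S_2$ with $S_2\coEdge S_1$ because (i) if $L_2$ obtains its value before \robStExRule{} fires for $S_1$, that rule's call to $\findLdKilledByStFunc$ flushes or re-executes $L_2$, and (ii) if $L_2$ executes afterwards, $\findRecStFunc$ finds the executed $S_1$ (or a program-order-later store to $a$) and prevents $L_2$ from reading anything coherence-older. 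This is exactly the first case of the paper's proof, and it is driven by the processor-side kill machinery rather than by the storage subsystem's ordering. A symmetric omission occurs when $L_2$ (but not $L_1$) is satisfied by ROB forwarding.

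For the cases you do cover --- both loads entering \flowModelName{} --- your argument is sound and in fact considerably more explicit than the paper's, which compresses the entire storage-subsystem reasoning into the observation that $L_2$ would have to act before $L_1$ issues and would then be killed by \robSSBLdReqRule{}. Your same-address FIFO invariant through the segment tree and the positional lemma at the common ancestor segment are exactly the facts the paper leaves implicit, so that part is a welcome strengthening. Two smaller points to tighten: $\findLdKilledByStFunc$'s search stops at an intervening $\StInst\ a$, so the claim that $L_1$'s issue kills any previously executed $L_2$ needs the side argument (present in the paper) that no store to $a$ lies between $L_1$ and $L_2$ in program order; and a load removed mid-segment by \flowBypassRule{} does not exit ``from the bottom,'' so your invariant should be phrased in terms of the position of the satisfying store rather than of the load's own exit.
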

\begin{proof}
	We assume $L_1$ and $L_2$ are both from processor $i$ ($ooo[i]$).
	We do a case analysis on how $L_1$ get its final result, \ie{} the value of $S_1$, and prove that $S_2\coEdge S_1$ is impossible in each case.
	
	First consider the case that $L_1$ gets its final result via the \robLdBypassRule{} rule.
	In this case, $S_1$ is also from $ooo[i]$.
	If we have $S_2\coEdge S_1$, then $L_2$ must get its final result, \ie{} the value of $S_2$, before the \robStExRule{} rule for $S_1$ has fired.
	However, in this case, $L_2$ will be killed later when the \robStExRule{} rule for $S_1$ fires.
	
	Next consider the case that $L_1$ reads the value of $S_1$ from \flowModelName.
	If we have $S_2\coEdge S_1$, then before $L_1$ issues its request to \flowModelName, $L_2$ must either issue its request to \flowModelName{} or get bypassing from ROB .
	Furthermore, there should not be any store to $a$ between $L_1$ and $L_2$ in $ooo[i]$ (otherwise $L_2$ will be killed by the store).
	However, in this case, when $L_1$ issues its request to \flowModelName, it will kill $L_2$.
\end{proof}

\begin{lemma}
	\robSSBflow{} satisfies the CoWR axiom.
\end{lemma}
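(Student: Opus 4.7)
The plan is to mirror the structure of the CoRR proof for \robSSBflow{} just given, doing a case analysis on how $L_1$ obtains its \emph{final} value. As in that proof, I can restrict attention to executions without ROB flushes or load re-executions, and I can assume $L_1$ is on processor $i$ (\ie{} $ooo[i]$) since $S_1\poEdge L_1$. My goal is, in each case, to show that either $S_1=S_2$ or $S_1\coEdge S_2$, equivalently to rule out $S_2\coEdge S_1$.

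First I would dispatch the case where $L_1$'s final value comes via \robLdBypassRule{}, \ie{} through $\findRecStFunc$ on $ooo[i].rob$ (note that in \robSSBModelName{} there is no local $sb$, so $\findRecStFunc$ only searches the ROB). If the store it returns is $S_1$, we are done. Otherwise $S_2$ is the youngest executed store to $a$ in $ooo[i].rob$ strictly older than $L_1$, so $S_1$ is either still in $ooo[i].rob$ strictly older than $S_2$, or $S_1$ has already been committed from ROB and sent into \flowModelName{} via \robSSBStRetRule{}. In both subcases, in-order commit guarantees that $S_1$ enters port $i$'s leaf segment of \flowModelName{} strictly before $S_2$ does.

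Next I would treat the case where $L_1$'s final value arrives as a \flowModelName{} response from a \robSSBLdReqRule{} issue. Here I would first argue that $S_1$ must already have been committed out of $ooo[i].rob$ (hence already injected into \flowModelName) by the time \robSSBLdReqRule{} fires for $L_1$: otherwise, once $S_1$'s address later resolves to $a$, the \findLdKilledByStFunc{} call inside \robStExRule{} would flush $L_1$, contradicting that this response is $L_1$'s final result. Thus $S_1$ is injected into port $i$'s leaf segment strictly before $L_1$.

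The key common invariant, which is what makes both cases go through, is that once two same-address requests from the same port enter \flowModelName{} in a given order, they can never be reordered inside \flowModelName: clause (a) of the \flowReorderRule{} restriction forbids swapping same-address neighbours in a segment; \flowFlowRule{} always moves only the bottom request of a segment (preserving the relative order with anything above it in the path); and \flowBypassRule{} only removes loads. Applying this invariant, $S_1$ remains strictly ahead of $S_2$ on the path from port $i$'s leaf to $\fm.m$, regardless of whether $L_1$ is eventually satisfied by a \flowBypassRule{} in some intermediate segment (with $S_2$ being a neighbour that interleaved between $S_1$ and $L_1$ at a lowest-common-ancestor segment) or by a \flowFlowRule{} at $\fm.m$. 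In either subcase, $S_2$ cannot reach $\fm.m$ before $S_1$ does, so $S_1\coEdge S_2$ (or $S_2=S_1$ when $L_1$ happens to be satisfied by $S_1$ itself).

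The hard part will be making the ``$S_1$ stays ahead of $S_2$'' invariant fully rigorous across the tree topology of \flowModelName: one has to state precisely what ``ahead'' means when $S_1$ advances into a parent segment while $S_2$ and $L_1$ are still in a child, and verify that the three internal \flowModelName{} rules all preserve the invariant (in particular that an intervening \flowBypassRule{} for some \emph{other} load does not help $S_2$ overtake $S_1$). Once that geometric lemma is in hand, the conclusion is immediate because the coherence order on stores to $a$ is by definition the order in which they complete \flowFlowRule{} at $\fm.m$.
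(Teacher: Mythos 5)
Your overall strategy is defensible and, if completed, more self-contained than the paper's, which disposes of CoWR for \robSSBflow{} in one line by reusing the \robccm{} argument verbatim: assume $S_2\coEdge S_1$, note that $L_1$ must then obtain its final value \emph{before} the \robStExRule{} rule for $S_1$ fires, and derive a contradiction because that rule's $\findLdKilledByStFunc$ call would afterwards flush $L_1$. Your direct route via order preservation of same-address requests along a port's path in \flowModelName{} is a legitimate alternative, and you correctly isolate the geometric invariant it requires. However, as written it has one genuine gap.

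The gap is in your bypass case. From ``$S_2$ is the youngest \emph{executed} store to $a$ in $ooo[i].rob$ strictly older than $L_1$'' you conclude that $S_1$ is either po-older than $S_2$ or already committed. That dichotomy is false: $S_1$ may lie \emph{between} $S_2$ and $L_1$ in the ROB with its address still unresolved when $\findRecStFunc$ runs, because the search skips stores whose addresses are unknown --- this is precisely the memory-dependency speculation the model permits. In that scenario $S_1$ is committed after $S_2$ and enters the leaf segment after it, so your ``$S_1$ stays ahead of $S_2$'' invariant yields nothing. The scenario is excluded only because, once $S_1$ later resolves its address to $a$, $\findLdKilledByStFunc$ flushes $L_1$, contradicting the finality of its value (equivalently, your standing no-flush assumption); this is exactly the kill argument you already deploy in your second case, and the one on which the paper's proof rests entirely, so you must invoke it here as well. (Both your sketch and the paper's also tacitly assume no other resolved $\StInst\ a$ sits between $S_1$ and $L_1$ to terminate the $\findLdKilledByStFunc$ search early; this is harmless because it suffices to establish the axiom for the po-latest store to $a$ preceding $L_1$ and then propagate to earlier ones via in-order commit and per-address order preservation, but it deserves a sentence.)
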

\begin{proof}
	The argument is the same as that for \robccm.
\end{proof}

\end{document}